\newcommand{\CM}{\textnormal{CM}}
\newcommand{\CONF}{\textnormal{CONF}}
\newcommand{\AUX}{\textnormal{AUX}}
\newcommand{\eps}{\varepsilon}
\newcommand{\PP}{\mathbb{P}}
\newcommand{\EE}{\mathbb{E}}
\newcommand{\K}{\mathcal{K}}
\renewcommand{\P}{\mathcal{P}}
\newcommand{\U}{\mathcal{U}}
\newcommand{\W}{\mathcal{W}}
\newcommand{\X}{\mathcal{X}}
\newcommand{\Y}{\mathcal{Y}}
\newcommand{\Z}{\mathcal{Z}}
\newcommand{\uu}{\mathbf{u}}
\newcommand{\xx}{\mathbf{x}}
\newcommand{\yy}{\mathbf{y}}
\newcommand{\zz}{\mathbf{z}}
\newcommand{\x}[2]{\xx_{#1}^{#2}}
\newcommand{\y}[2]{\yy_{#1}^{#2}}
\newcommand{\tx}[2]{\tilde\xx_{#1}^{#2}}
\newcommand{\ty}[2]{\tilde\yy_{#1}^{#2}}
\newcommand{\rund}[1]{\left(#1\right)}
\newcommand{\eckig}[1]{\left[#1\right]}
\newcommand{\betr}[1]{\left\lvert#1\right\rvert}
\newcommand{\menge}[1]{\left\{#1\right\}}
\newtheorem{theorem}{Theorem}
\newtheorem{lem}{Lemma}
\newtheorem{cor}{Corollary}
\theoremstyle{definition}
\newtheorem{ex}{Example}
\newtheorem{defn}{Definition}
\theoremstyle{remark}
\newtheorem{rem}{Remark}
\begin{document}

\title{The Compound Multiple Access Channel with Partially Cooperating Encoders}

\author{Moritz~Wiese,~\IEEEmembership{Student~Member,~IEEE,}
        Holger~Boche,~\IEEEmembership{Fellow,~IEEE,}
        Igor Bjelakovi\'{c}, 
        and~Volker~Jungnickel,~\IEEEmembership{Member,~IEEE}%
\thanks{This work was supported by the Deutsche Forschungsgemeinschaft (DFG) projects Bo 1734/15-1 and Bo 1734/16-1. The material in this paper was presented in part at the 11th IEEE International Workshop on
Signal Processing Advances in Wireless Communications (SPAWC 2010), Marrakech, Morocco, June 2010, and at the 2010 International Symposium on Information Theory and its Applications (ISITA 2010), Taichung, Taiwan, October 2010.}
\thanks{M. Wiese, H. Boche, and I. Bjelakovi\'{c} were with the Heinrich-Hertz-Lehrstuhl f\"{u}r Informationstheorie und Theoretische Informationstechnik, Technische Universit\"{a}t Berlin, Berlin, Germany. They are now with the Lehrstuhl f\"ur Theoretische Informationstechnik, Technische Universit\"at M\"unchen, Munich, Germany (e-mail: \{wiese, boche, igor.bjelakovic\}@tum.de)}
\thanks{H. Boche also was and V. Jungnickel is with the Fraunhofer Heinrich-Hertz-Institut, Berlin, Germany (e-mail: jungnickel@hhi.fraunhofer.de)}}%

\markboth{submitted to Transactions on Information Theory}%
{The Compound Multiple Access Channel with Partially Cooperating Encoders}

\maketitle

\begin{abstract}
  The goal of this paper is to provide a rigorous information-theoretic analysis of subnetworks of interference networks. We prove two coding theorems for the compound multiple-access channel with an arbitrary number of channel states. The channel state information at the transmitters is such that each transmitter has a finite partition of the set of states and knows which element of the partition the actual state belongs to. The receiver may have arbitrary channel state information. The first coding theorem is for the case that both transmitters have a common message and that each has an additional common message. The second coding theorem is for the case where rate-constrained, but noiseless transmitter cooperation is possible. This cooperation may be used to exchange information about channel state information as well as the messages to be transmitted. The cooperation protocol used here generalizes Willems' conferencing. We show how this models base station cooperation in modern wireless cellular networks used for interference coordination and capacity enhancement. In particular, the coding theorem for the cooperative case shows how much cooperation is necessary in order to achieve maximal capacity in the network considered.
\end{abstract}

\begin{IEEEkeywords}
  Base station cooperation, channel uncertainty, common message, conferencing encoders.
\end{IEEEkeywords}

\section{Introduction}\label{sect:intro}

\subsection{Motivation}\label{subsect:motiv}

In modern cellular systems, interference is one of the main factors which limit the communication capacity. In order to further enhance performance, methods to better control interference have recently been investigated intensively. One of the principal techniques to achieve this is cooperation among neighboring base stations. This will be part of the forthcoming LTE-Advanced cellular standard. It is seen as a means of achieving the desired spectral efficiency of mobile networks. In addition, it may enhance the performance of cell-edge users, a very important performance metric of future wireless cellular systems. Finally, fairness issues are expected to be resolved more easily with base station cooperation.

In standardization oriented literature, the assumptions generally are very strict. The cooperation backbones, i.e.\ the wires linking the base stations, are assumed to have infinite capacity. Full channel state information (CSI) is assumed to be present at all cooperating base stations. Then, multiple-input-multiple-output (MIMO) optimization techniques can be used for designing the system \cite{KFV}. However, while providing a useful theoretical benchmark, the results thus obtained are not accepted by the operators as reliably predicting the performance of actual networks.

In order to obtain a more realistic assessment of the performance of cellular networks with base station cooperation,  the above assumptions need to be adapted to reality. First, it is well-known that one cannot really assume perfect CSI in mobile communication networks. Second, glass fibers or any medium used for the backbones never have infinite capacity. The assumption of finite cooperation capacity will also lead to a better understanding of the amount of cooperation necessary to achieve a certain performance. Vice versa, we would like to know which capacity can be achieved with the backhaul found in heterogeneous networks using microwave, optical fibers and other media. Such insights would get lost when assuming infinite cooperation capacity. 

The question arises how much cooperation is needed in order to achieve the same performance as would be achievable with infinite cooperation capacity. For general interference networks with multiple receivers, the analysis is very difficult. Thus it is natural to start by taking a closer look at component networks which together form a complete interference network. Such components are those subnetworks formed by the complete set of base stations, but with only one receiving mobile. Then there is no more interference, so one can concentrate on finding out by how much the capacity increases by limited base station cooperation. This result can be seen as a first step towards a complete rigorous analysis of general interference networks.

A situation which is closely related can be phrased in the cooperation setting as well. Usually, there is only one data stream intended for one receiver. Assume that a central node splits this data stream into two components. Each of these components is then forwarded to one of two base stations. Using the cooperation setting, one can address the question how much overhead needs to be transmitted by the splitter with the data component, i.e. how much information about the data component and the CSI intended for one base station needs to be known at the other base station in order to achieve a high, possibly maximal data rate.

In \cite{MJH}, the cooperation of base stations in an uplink network is analyzed. A turbo-like decoding scheme is proposed. Different degrees of cooperation and different cooperation topologies are compared in numerical simulations. In \cite{JETAL}, work has also been done on the practical level to analyze cooperative schemes. The implementation of a real-time distributed cooperative system for the downlink of the fourth-generation standard LTE-Advanced was presented. In that system, the channel state information (CSI) at the transmitters was imperfect, the limited-capacity glass fibers between the transmitting base stations were used to exchange CSI and data information. A feeder distributed the data among the transmitting base stations.

A question which is not addressed in this work but which will be considered in the future is what rates can be achieved if there are two networks as described above which belong to different providers and which hence do not jointly optimize their coding, to say nothing of active cooperation. In that case, uncontrolled interference heavily disturbs each network, and challenges different from those considered here need to be faced by the system designer.

\subsection{Theory}

The rigorous analysis of such cellular wireless systems as described above using information-theoretic methods should provide useful insights. The ultimate performance limits as well as the optimal cooperation protocols can be derived from such an analysis. The first information-theoretic approach to schemes with cooperating encoders goes back to Willems \cite{Wi1, Wi2} long before this issue was relevant for practical networks. For that reason, it was not considered much in the next two decades. Willems considers a protocol where before transmission, the encoders of a discrete memoryless Multiple Access Channel (MAC) may exchange information about their messages via noiseless finite-capacity links (one in each direction). This may be done in a causal and iterative fashion, so the protocol is called a conferencing protocol. 

For the reasons mentioned at the beginning, Willems' conferencing protocol has attracted interest in recent years. Gaussian MACs using Willems conferencing between the encoders were analyzed in \cite{BLW} and \cite{Wig}. Moreover, in these two works, it was shown that interference which is known non-causally at the encoders does not reduce capacity. For a compound MAC, both discrete and Gaussian, with two possible channel realizations and full CSI at the receiver, the capacity region was found in \cite{MYK}. In the same paper, the capacity region was found for the interference channel if only one transmitter can send information to the other (unidirectional cooperation) and if the channel is in the strong interference regime. Another variant of unidirectional cooperation was investigated in \cite{SSKPS}, where the three encoders of a Gaussian MAC can cooperate over a ring of unidirectional links. However, only lower and upper bounds were found for the maximum achievable equal rate. 

Further literature exists for Willems conferencing on the decoding side of a multi-user network. For degraded discrete broadcast channels, the capacity region was found in \cite{DS} if the receivers can exchange information about the received codewords in a single conference step. For the general broadcast and multicast channels, achievability regions were determined. For the Gaussian relay channel, the dependence of the performance on the number of conferencing iterations between the receiver and the relay was investigated in \cite{NETAL}. For the Gaussian $Z$-interference channel, outer and inner bounds to the capacity region where the decoders can exchange information about the channel outputs are provided in \cite{DOS}. Finally, for discrete and Gaussian memoryless interference channels with conferencing decoders and where the senders have a common message, \cite{SGPGS} determines achievable regions. Exact capacity regions are determined if the channel is physically degraded. If the encoders can conference instead of having a common message, the situation is the same.

The discrete MAC with conferencing encoders is closely related to the discrete MAC with common message. Intuitively, the messages exchanged between the encoders in the cooperative setting form a common message, so the results known for the corresponding non-cooperative channel with common message can be applied to find the achievable rates of the cooperative setting. This transition was used in \cite{Wi1,Wi2,BLW,Wig}, and \cite{MYK}. The capacity region of the MAC with common message was determined in \cite{SW}, a simpler proof was found in \cite{Wi1}.

The goal of this paper is to generalize the original setting considered by Willems even further. We treat a compound discrete memoryless MAC with an arbitrary number of channel realizations. The receiver's CSI (CSIR) may be arbitrary between full and absent. The possible  transmitter's CSI (CSIT) may be different from CSIR and asymmetric at the two encoders. It is restricted to a finite number of instances, even though the number of actual channel realizations may be infinite. For this channel, we consider two cases. First, we characterize the capacity region of this channel where the transmitters have a common message. Then, we determine the capacity region of the channel where there is no common message any more. Instead, the encoders have access to the output of a rate-constrained noiseless two-user MAC. Each input node of the noiseless MAC corresponds to one of the transmitters of the compound MAC. Each input to the noiseless MAC consists of the pair formed by the message which is to be transmitted and the CSIT present at the corresponding transmitter. This generalizes Willems' conferencing to a non-causal conferencing protocol, where the conferencing capacities considered by Willems correspond to the rate constraints of the noiseless MAC in the generalized model. It turns out that this non-causal conferencing does not increase the capacity region, and as in \cite{Wi1,Wi2}, every rate contained in the capacity region can be achieved using a one-shot Willems ``conference''. We determine how large the conferencing capacities need to be in order to achieve the full-cooperation sum rate and the full-cooperation capacity region, respectively. The latter is particularly interesting because it shows that forming a ``virtual MIMO system'' as mentioned in Subsection \ref{subsect:motiv} and considered in \cite{KFV} does not require infinite cooperation capacity.

\subsection{Organization of the Paper}

In Section \ref{sect:model}, we address the problems presented above. We present the two basic channel models underlying our analysis: the compound MAC with common message and partial CSI and the compound MAC with conferencing encoders and partial CSI. We also introduce the generalized conferencing protocol used in the analysis of the conferencing MAC. We state the main results concerning the capacity regions of the two models. We also derive the minimal amount of cooperation needed in the conferencing setting in order to achieve the optimal (i.e. full-cooperation) sum rate and the optimal, full-cooperation rate region. The achievability of the rate regions claimed in the main theorems is shown in Section \ref{sect:ach}. The weak converses are shown in Section \ref{sect:conv}. Only the converse for the conferencing MAC is presented in detail, because the converse for the MAC with common message is similar to part of the converse for the MAC with conferencing encoders. We address the application of the MAC with conferencing encoders to the analysis of cellular systems where one data stream is split up and sent using different base stations in Section \ref{sect:numerik}. In the same section, in a simple numerical example, the capacity regions of a MAC with conferencing encoders is plotted for various amounts of cooperation. In the final section, we sum up the paper and discuss the directions of future research. In the Appendix several auxiliary lemmata concerning typical sequences are collected.

\subsection{Notation}

For real numbers $a$ and $b$, we set $a\wedge b:=\min(a,b)$ and $a\vee b:=\max(a,b)$.

For any positive integer $m$, we write $[1,m]$ for the set $\{1,\ldots,m\}$. The complement of a set $F\subset\X$ in $\X$ is denoted by $F^c$. The function $1_F$ is the indicator function of $F$, i.e. $1_F(x)$ equals 1 if $x\in F$ and 0 else.
For a set $E\subset\X\times\Y$, we write $E\rvert_y:=\{x\in\X:(x,y)\in\X\times\Y\}$. For a mapping $f:\X\rightarrow\Y$, define $\lVert f\rVert$ to be the cardinality of the range of $f$. 

Denote the set of probability measures on a discrete set $\X$ by $\P(\X)$. The $n$-fold product of a $p\in\P(\X)$ is denoted by $p^n\in\P(\X^n)$. By $\K(\Y\vert\X)$, we denote the set of stochastic matrices with rows indexed by $\X$ and columns indexed by $\Y$. The $n$-fold memoryless extension of a $W\in\K(\Y\vert\X)$ is defined as
\[
  W^n(\yy\vert\xx):=\prod_{m=1}^nW(y_m\vert x_m),
\]
where $\xx=(x_1,\ldots,x_n)\in\X^n,\yy=(y_1,\ldots,y_n)\in\Y^n$.

Let $\X$ be a finite set. For $\xx=(x_1,\ldots,x_n)\in\X^n$, define the type $p_\xx\in\P(\X)$ of $\xx$ by $np_\xx(x)=\lvert\{i:x_i=x\}\rvert$. For $\delta>0$ and $p\in\P(\X)$, define $T_{p,\delta}^n$ to be the set of those $\xx\in\X^n$ such that $\lvert p_\xx(x)-p(x)\rvert\leq\delta$ for all $x$ and such that $p_\xx(x)=0$ if $p(x)=0$.

\section{Channel Model and Main Results}\label{sect:model}

\subsection{The Channel Model}

Let $\X,\Y,\Z$ be finite sets. A compound discrete memoryless MAC with input alphabets $\X$ and $\Y$ and output alphabet $\Z$ is determined by a set of stochastic matrices $\W\subset\K(\Z\vert \X\times \Y)$. $\W$ may be finite of infinite. Every $W\in\W$ corresponds to a different channel state, so we will also call the elements $W$ the states of the compound MAC $\W$. The transmitter using alphabet $\X$ will be called transmitter (sender, encoder) 1 and the transmitter with alphabet $\Y$ will be called transmitter (sender, encoder) 2. If transmitter 1 sends a word $\xx=(x_1,\ldots,x_n)\in \X^n$ and transmitter 2 sends a word $\yy=(y_1,\ldots,y_n)\in \Y^n$, and if the channel state is $W\in\W$, then the receiver will receive the word $\zz=(z_1,\ldots,z_n)\in \Z^n$ with probability
\[
  W^n(\zz\vert\xx,\yy):=\prod_{m=1}^nW(z_m\vert x_m,y_m).
\]
The compound channel model does not include a change of state in the middle of a transmission block.

The goal is to find codes that are ``good'' (in a sense to be specified later) universally for all those channel states which might be the actual one according to CSI. In our setting, CSI at sender $\nu$ is given by a finite CSIT partition
\begin{equation}\label{CSIT}
  t_\nu=\{\W_{\tau_\nu}\subset\W:\tau_\nu\in T_\nu\}
\end{equation}
for $\nu=1,2$. The sets $T_1,T_2$ are finite, and the $\W_{\tau_\nu}$ satisfy
\[
  \bigcup_{\tau_\nu\in T_\nu}\W_{\tau_\nu}=\W,\qquad\text{and}\qquad\W_{\tau_\nu}\cap\W_{\tau_\nu'}=\varnothing\quad\text{if}\quad\tau_\nu\neq\tau_\nu'.
\]
Before encoding, transmitter $\nu$ knows which element of the partition the actual channel state is contained in, i.e. if $W\in\W_{\tau_\nu}$ is the channel state, then it knows $\tau_\nu$. With this knowledge, it can adjust its codebook to the channel conditions to some degree. For $\tau=(\tau_1,\tau_2)\in T_1\times T_2$, we denote by
\[
  \W_\tau:=\W_{\tau_1\tau_2}:=\W_{\tau_1}\cap\W_{\tau_2}
\]
the set of channel states which is possible according to the combined channel knowledge of both transmitters. Note that every function from $\W$ into a finite set induces a finite partition as in \eqref{CSIT}, so this is a very general concept of CSIT. At the receiver side, the knowledge about the channel state is given by a not necessarily finite CSIR partition
\begin{equation}\label{CSIR}
  r=\{\W_\rho\subset\W:\rho\in R\}.
\end{equation}
$R$ is an arbitrary set and the sets $\W_\rho$ satisfy
\[
  \bigcup_{\rho\in R}\W_\rho=\W\qquad\text{and}\qquad\W_\rho\cap\W_{\rho'}=\varnothing\quad\text{if}\quad\rho\neq\rho'.
\]
If the channel state is $W\in\W_\rho$, then the receiver knows $\rho$. Thus it can adjust its decision rule to this partial channel knowledge. This concept includes any kind of deterministic CSIR, because any function from $\W$ into an arbitrary set induces a partition as in \eqref{CSIR}. Note that if $\W$ is infinite, the transmitters can never have full CSI, whereas this is possible for the receiver if $r=\{\{W\}:W\in\W\}$.

\begin{defn}
  The compound discrete memoryless MAC $\W$ together with the CSIT partitions $t_1,t_2$ and the CSIR partition $r$ is denoted by the quadruple $(\W,t_1,t_2,r)$.
\end{defn}

\begin{ex}
There are several communication situations which are appropriately described by a compound MAC. One case is where information is to be sent from two transmitting terminals to one receiving terminal through a fading channel. If the channel remains constant during one transmission block, one obtains a compound channel. Usually, CSIT is not perfect. It might be, however, that the transmitters have access to partial CSI, e.g. by using feedback. This will not determine an exact channel state, but only an approximation. Coding must then be done in such a way that it is good for all those channel realizations which are possible according to CSIT. 

Another situation to be modeled by compound channels occurs if there are two transmitters each of which would like to send one message to several receivers at the same time. The channels to the different receivers differ from each other because all the terminals are at different locations. Now, the following meaning can be given to the above variants of channel knowledge. If CSIT is given as $\tau=(\tau_1,\tau_2)$, this describes that the information is not intended for all receivers, but only for those contained in $\W_\tau$. Knowledge about the intended receivers may be asymmetric at the senders. If every receiver has its own decoding procedure, full CSIR (i.e. $r=\{\{W\}:W\in\W\}$) would be a natural assumption. If the receivers must all use the same decoder, there is no CSIR. Non-trivial CSIR could mean that independently of the decision at the transmitters where data are to be sent (modeled by CSIT), a subset of receivers is chosen as the set which the data are intended for without informing the transmitters about this decision.
\end{ex}

\subsection{The MAC With Common Message}\label{subsect:CM}

Let the channel $(\W,t_1,t_2,r)$ be given. We now present the first of the problems treated in this paper, the capacity region of the compound MAC with common message. It is an interesting information-theoretic model in itself. However, its main interest, at least in this paper, is that it provides a basis for the solution of the problem presented in the next section, which is the capacity region of the compound MAC with conferencing encoders.

Assume that each transmitter has a set of private messages $[1,M_\nu]$, $\nu=1,2$, and that both transmitters have an additional set of common messages $[1,M_0]$ for the receiver (Fig. \ref{fig:cmsystem}). Let $n$ be a positive integer.

\begin{figure}
  \begin{center}
    \includegraphics[width=.7\linewidth]{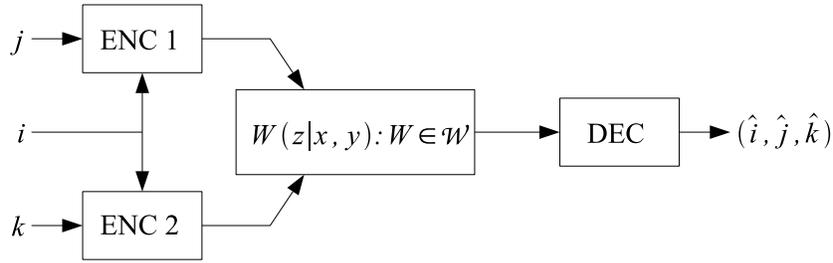}
    \caption{The MAC with Common Message}
    \label{fig:cmsystem}
  \end{center}
\end{figure}

\begin{defn}
A code$_\CM(n,M_0,M_1,M_2)$ is a triple $(f_1,f_2,\Phi)$ of functions satisfying
\begin{align*}
  f_1&:[1,M_0]\times[1,M_1]\times T_1\rightarrow\X^n,\\
  f_2&:[1,M_0]\times[1,M_2]\times T_2\rightarrow\Y^n,\\
  \Phi&:\Z^n\times R\rightarrow[1,M_0]\times[1,M_1]\times[1,M_2].
\end{align*}
$n$ is called the blocklength of the code.
\end{defn}

\begin{rem}\label{functdescrcm}
  Clearly, the codes$_\CM(n,M_0,M_1,M_2)$ are in one-to-one correspondence with the families
\begin{equation}\label{codecmform}
  \{(\x{ij}{\tau_1},\y{ik}{\tau_2},F_{ijk}^\rho):(i,j,k)\in[1,M_0]\times[1,M_1]\times[1,M_2], (\tau_1,\tau_2,\rho)\in T_1\times T_2\times R\},
\end{equation}
where $\x{ij}{\tau_1}\in\X^n$, $\y{ik}{\tau_2}\in\Y^n$, and where the $F_{ijk}^\rho\subset\Z^n$ satisfy
\[
  F_{ijk}^\rho\cap F_{i'j'k'}^\rho=\varnothing\quad\text{if}\quad(i,j,k)\neq(i',j',k').
\]
(The sets $F_{ijk}^\rho$ are obtained from $\Phi$ by setting 
\[
  F_{ijk}^\rho:=\{\zz\in\Z^n:\Phi(\zz,\rho)=(i,j,k)\}.\text{)}
\]
In the following, we will use the description of codes$_\CM$ as families as in \eqref{codecmform}. The functional description of codes will be of use when we are dealing with transmitter cooperation. We say more on that in Remark \ref{functdescr}.
\end{rem}
The $\x{ij}{\tau_1}$ and $\y{ik}{\tau_2}$ are the codewords and the $F_{ijk}^\rho$ are the decoding sets of the code. Let the transmitters have the common message $i$. Suppose that transmitter $1$ additionally has the private message $j$ and knows that $W\in\W_{\tau_1}$. Then it uses the codeword $\x{ij}{\tau_1}$. If transmitter $2$ additionally has the private message $k$ and knows that $W\in\W_{\tau_2}$, it uses the codeword $\y{ik}{\tau_2}$. Suppose that the receiver knows that $W\in\W_{\rho}$. If the channel output $\zz\in \Z^n$ is contained in $F_{ijk}^\rho$, the receiver decides that the message triple $(i,j,k)$ has been sent.

\begin{defn}
For $\lambda\in(0,1)$, a code$_\CM(n,M_0,M_1,M_2)$ is a code$_\CM(n,M_0,M_1,M_2,\lambda)$ if
\[
  \sup_{\tau_1,\tau_2,\rho}\;\sup_{W\in\W_{\tau_1\tau_2}\cap\W_\rho}\frac{1}{M_0M_1M_2}\sum_{i,j,k}W^n\bigl((F_{ijk}^\rho)^c\vert\x{ij}{\tau_1},\y{ik}{\tau_2}\bigr)\leq\lambda.
\]
\end{defn}
That means that for every instance of channel knowledge at the transmitters and at the receiver, the encoding/decoding chosen for this instance must yield a small average error for every channel state that may occur according to the CSI. In other words, the code chosen for a particular instance $(\tau_1,\tau_2,\rho)$ of CSI must be universally good for the class of channels $\{W\in\W_{\tau_1\tau_2}\cap\W_\rho\}$.

The first goal in this paper is to characterize the capacity region of the compound MAC with common message. That means that we will characterize the set of achievable rate triples and prove a weak converse.
\begin{defn}
A rate triple $(R_0,R_1,R_2)$ is \emph{achievable} for the compound channel $(\W,t_1,t_2,r)$ with common message if for every $\eps>0$ and $\lambda\in (0,1)$ and for $n$ large enough, there is a code$_\CM(n,M_0,M_1,M_2,\lambda)$ with
\[
  \frac{1}{n}\log M_\nu\geq R_\nu-\eps\qquad\text{for every }\nu=0,1,2.
\]
We denote the set of achievable rate triples by $\mathcal{C}_\CM(\W,t_1,t_2,r)$.
\end{defn}

Before stating the theorem on the capacity region, we need to introduce some new notation. We set $\Pi_1$ to be the set of families
\[
  p:=\{p_{\tau_1\tau_2}(u,x,y)=p_0(u)p_{1\tau_1}(x\vert u)p_{2\tau_2}(y\vert u):(\tau_1,\tau_2)\in T_1\times T_2\},
\]
of probability distributions, where $p_0$ is a distribution on a finite subset of the integers, and where $(p_{1\tau_1},p_{2\tau_2})\in\K(\X\vert\U)\times\K(\Y\vert\U)$ for each $(\tau_1,\tau_2)$. Every $p\in\Pi_1$ defines a family of probability measures on $\U\times\X\times\Y\times\Z$, where $\U$ is the set corresponding to $p$. This family consists of the probability measures $p_W$ ($W\in\W$), where
\begin{equation}\label{peta}
  p_W(u,x,y,z)=p_0(u)p_{1\tau_1}(x\vert u)p_{2\tau_2}(y\vert u)W(z\vert x,y),
\end{equation}
and where $(\tau_1,\tau_2)\in T_1\times T_2$ is such that $W\in\W_{\tau_1\tau_2}$. Let the quadruple of random variables $(U,X_{\tau_1},Y_{\tau_2},Z_W)$ take values in $\U\times \X\times \Y\times \Z$ with joint probability $p_W$. Then, define the set $\mathcal{R}_\CM(p,\tau_1,\tau_2,W)$ to be the set of $(R_0,R_1,R_2)$, where every $R_\nu\geq 0$ and where
\begin{align*}
          R_1&\leq I(Z_W;X_{\tau_1}\vert Y_{\tau_2},U),\\
          R_2&\leq I(Z_W;Y_{\tau_2}\vert X_{\tau_1},U),\\
      R_1+R_2&\leq I(Z_W;X_{\tau_1},Y_{\tau_2}\vert U),\\
  R_0+R_1+R_2&\leq I(Z_W;X_{\tau_1},Y_{\tau_2}).
\end{align*}
Defining 
\begin{align*}
  \mathcal{C}^*_\CM(\W,t_1,t_2):=\bigcup_{p\in\Pi_1}\;\bigcap_{(\tau_1,\tau_2)\in T_1\times T_2}\;\bigcap_{W\in\W_{\tau_1\tau_2}}\mathcal{R}_\CM(p,\tau_1,\tau_2,W),
\end{align*}
we are able to state the first main result.
\begin{theorem}\label{thmcm}
  For the compound MAC $(\W,t_1,t_2,r)$, one has
\begin{align*}
  \mathcal{C}_\CM(\W,t_1,t_2,r)=\mathcal{C}^*_\CM(\W,t_1,t_2),
\end{align*}
and there is a weak converse. More exactly, for every $(R_0,R_1,R_2)$ in $\mathcal{C}_\CM(\W,t_1,t_2,r)$ and for every $\eps>0$, there is a $\zeta$ such that there exists a sequence of codes$_\CM(n,M_0^ {(n)},M_1^{(n)},M_2^{(n)},2^{-n\zeta})$ fulfilling
\[
  \frac{1}{n}\log M_\nu^{(n)}\geq R_\nu-\eps,\quad\nu=0,1,2,
\]
if $n$ is large, i.e. one has exponential decay of the error probability with increasing blocklength.

$\mathcal{C}_\CM(\W,t_1,t_2,r)$ is convex. The cardinality of the auxiliary set $\U$ can be restricted to be at most $\min(\lvert \X\rvert\lvert \Y\rvert+2,\lvert \Z\rvert+3)$.
\end{theorem}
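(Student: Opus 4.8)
The plan is to establish the two inclusions $\mathcal{C}_\CM(\W,t_1,t_2,r)\subseteq\mathcal{C}^*_\CM(\W,t_1,t_2)$ and $\mathcal{C}^*_\CM(\W,t_1,t_2)\subseteq\mathcal{C}_\CM(\W,t_1,t_2,r)$ --- the latter in the sharper form that rates in $\mathcal{C}^*_\CM$ are attainable with error decaying like $2^{-n\zeta}$ --- and then to read off convexity and the bound on $\lvert\U\rvert$. For achievability, fix $p=\{p_{\tau_1\tau_2}\}\in\Pi_1$ with auxiliary alphabet $\U$ and a rate triple strictly inside $\bigcap_{(\tau_1,\tau_2)}\bigcap_{W\in\W_{\tau_1\tau_2}}\mathcal{R}_\CM(p,\tau_1,\tau_2,W)$. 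I would use a superposition random code: one ``cloud'' codebook of $M_0$ codewords $\uu_i$ drawn i.i.d.\ from $p_0^n$, shared by both encoders since it carries the common message; then, for each $\tau_1$, satellite codewords $\x{ij}{\tau_1}$ drawn from $p_{1\tau_1}^n(\cdot\vert\uu_i)$, and symmetrically $\y{ik}{\tau_2}$ from $p_{2\tau_2}^n(\cdot\vert\uu_i)$ for each $\tau_2$. The decoder, knowing only $\rho$, runs a universal rule based on joint types, so it never uses the identity of $W$: it seeks the unique $(i,j,k)$ for which some CSIT pair $(\tau_1',\tau_2')$ with $\W_{\tau_1'\tau_2'}\cap\W_\rho\neq\varnothing$ makes $(\uu_i,\x{ij}{\tau_1'},\y{ik}{\tau_2'},\zz)$ look jointly typical for some state in $\W_\rho$. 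Since $T_1\times T_2$ is finite, the union bound over incorrect $(i,j,k,\tau_1',\tau_2')$ costs only a fixed multiplicative constant, absorbed into the error exponent; the usual compound-channel estimates then bound the average error by $2^{-n\zeta}$ uniformly over all $W\in\W_\rho$, establishing both $\mathcal{C}^*_\CM\subseteq\mathcal{C}_\CM$ and the exponential-decay statement.

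For the converse, start from a code with average error at most $\lambda$ for every instance of channel knowledge. The decisive choice is the single-letter auxiliary: I would take $U=(M_0,Q)$ with $Q$ uniform on $[1,n]$ and independent of the messages, and $X:=X_Q$, $Y:=Y_Q$, $Z:=Z_Q$ along a codeword. The law of $U$ then depends neither on the channel state nor on $(\tau_1,\tau_2)$ --- which is precisely what is needed for one $p\in\Pi_1$ to bound all triples $(\tau_1,\tau_2,W)$ at once; the ``richer'' auxiliary $(M_0,Q,Z^{Q-1})$ that one would use for a single channel is not available here. Fix $\tau_1,\tau_2,\rho$ and $W\in\W_{\tau_1\tau_2}\cap\W_\rho$, apply Fano's inequality to the decoder belonging to $\rho$, and single-letterize in the standard Slepian--Wolf / MAC-with-common-message fashion (using $Z-(X,Y)-U$, which holds by memorylessness); this yields $R_1\le I(Z_W;X_{\tau_1}\vert Y_{\tau_2},U)+o(1)$, $R_2\le I(Z_W;Y_{\tau_2}\vert X_{\tau_1},U)+o(1)$, $R_1+R_2\le I(Z_W;X_{\tau_1},Y_{\tau_2}\vert U)+o(1)$ and $R_0+R_1+R_2\le I(Z_W;X_{\tau_1},Y_{\tau_2})+o(1)$ with a distribution of the product form defining $\Pi_1$, the conditionals $p_{1\tau_1}(x\vert u)$ and $p_{2\tau_2}(y\vert u)$ coming from the CSIT-dependent codebooks. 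Because $\bigcup_{\rho}\W_\rho=\W$, every $W\in\W_{\tau_1\tau_2}$ lies in some $\W_\rho$, so the four bounds in fact hold for all $W\in\W_{\tau_1\tau_2}$ and all $(\tau_1,\tau_2)$ with one and the same $p$; a compactness and continuity argument (using the cardinality bound to make the relevant family of $p$'s compact, plus continuity of mutual information in $W$) then lets $\lambda\to0$ along a sequence of codes and places the limiting rate triple in $\mathcal{C}^*_\CM$.

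Convexity of $\mathcal{C}_\CM$ follows by time-sharing two good codes of block lengths $\approx\lambda n$ and $\approx(1-\lambda)n$; equivalently, from two $p,p'\in\Pi_1$ one builds $p''\in\Pi_1$ by adjoining to $U$ a ``switch'' coordinate that selects the $p$-construction with probability $\lambda$ and the $p'$-construction otherwise, whereupon conditioning on the switch (which is part of the new $U$) turns the first three defining bounds into exact convex combinations and the sum-rate bound into a ``$\ge$'' convex combination (again via $Z-(X,Y)-U$), so the convex combination of the two rate triples satisfies all four inequalities for $p''$ and hence lies in $\mathcal{C}^*_\CM$. The cardinality bound comes from the Fenchel--Eggleston--Carath\'eodory support lemma applied to $p_0$: replace $p_0$ by a distribution on few atoms while preserving the joint input law (which controls the sum-rate terms together with $H(Z_W\vert X,Y)$, costing $\lvert\X\rvert\lvert\Y\rvert-1$ functionals) and the three conditional output entropies $H(Z_W\vert X_{\tau_1},U)$, $H(Z_W\vert Y_{\tau_2},U)$, $H(Z_W\vert U)$, which pin down the remaining right-hand sides; counting yields $\lvert\X\rvert\lvert\Y\rvert+2$, and the dual bookkeeping through the output distribution yields $\lvert\Z\rvert+3$.

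The step I expect to be the main obstacle is the converse single-letterization: one must commit to the lean auxiliary $U=(M_0,Q)$ and then verify that all four bounds genuinely close with it, since only a channel-independent $U$ lets one $p\in\Pi_1$ dominate every $(\tau_1,\tau_2,W)$ simultaneously --- the usual ``past-output'' trick is off limits. A secondary technical point is ensuring that the support-lemma bookkeeping for the cardinality bound remains uniform in $\lvert T_1\rvert$ and $\lvert T_2\rvert$, i.e.\ that the functionals one must preserve can be organized so that their number is governed only by $\lvert\X\rvert$, $\lvert\Y\rvert$, $\lvert\Z\rvert$; this requires some care but does not affect the structure of the argument.
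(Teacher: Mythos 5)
Your overall architecture matches the paper's: superposition random coding ($\uu_i\sim p_0^n$, satellites drawn from $p_{1\tau_1}^n(\cdot\vert\uu_i)$, $p_{2\tau_2}^n(\cdot\vert\uu_i)$) with a universal typicality decoder whose union over the finitely many CSIT pairs only costs a constant factor; a converse built on Fano with the lean auxiliary $U=(S_0,Q)$ whose law is channel- and CSIT-independent (this is exactly the choice implicit in the paper, where the single-letterization is done for the conferencing converse with $U=(Q,G)$ and carried over); and convexity and the cardinality bound handled as in Willems. The paper formalizes the error analysis through Jahn's half-lattice/hypergraph lemma and bounds probabilities of unions of typical sets via a dedicated lemma, but that is a difference of bookkeeping, not of idea.

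There is, however, a genuine gap in your achievability argument: you stop at the statement that the random code's error is at most $2^{-n\zeta}$ ``uniformly over all $W$,'' but what is bounded there is the \emph{expected} error of the random ensemble for each fixed $W$, whereas a code$_\CM(n,M_0,M_1,M_2,\lambda)$ must be a single deterministic code whose average error is small simultaneously for every $W\in\W_{\tau_1\tau_2}\cap\W_\rho$ and every CSI instance. Extracting such a realization is not automatic. For $\lvert\W\rvert<\infty$ it follows from Markov's inequality plus a union bound over the states (the paper's step in Subsection III-A.4, giving $\PP[\bigcap_W\Omega_W]\geq1-\lvert\W\rvert2^{-n(\tilde\zeta-\zeta)}>0$), but this breaks down when $\W$ is infinite, which the theorem explicitly allows; the paper then needs a separate approximation step (its Lemma~5, a variant of Blackwell--Breiman--Thomasian): quantize $\W$ to a finite net $\W_N$, build codes for $(\W_N,\tilde t_1,\tilde t_2)$, and use the bounds $\lvert W-f_N(W)\rvert\leq\lvert\Z\rvert/N$ and $W\leq e^{2\lvert\Z\rvert^2/N}f_N(W)$ together with continuity of mutual information to transfer both the rates and the exponential error decay back to $\W$. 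Your proposal contains no counterpart to this derandomization-plus-approximation step, and without it the inclusion $\mathcal{C}^*_\CM\subseteq\mathcal{C}_\CM$ (and the $2^{-n\zeta}$ claim for deterministic codes) is not established. A secondary, smaller point: your converse concludes by letting $\lambda\to0$ and placing limit points in $\mathcal{C}^*_\CM$, which proves non-achievability outside the region but not the paper's stronger weak-converse statement that any code whose rates are $\eps$-far from the region has error at least $\lambda(\eps)>0$ for large $n$; your Fano bounds do contain the information needed to get this, but the contradiction-style conclusion should be drawn explicitly.
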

\begin{rem}\label{remcm}
  \begin{enumerate}
    \item A weak converse states that if a code has rates which are further than $\eps>0$ from the capacity region and if its blocklength is large, then the average error of this code must be larger than a constant only depending on $\eps$. A moment's thought reveals that this is a stronger statement than just saying that the rates outside of the capacity region are not achievable.\label{remcm0}
    \item $\mathcal{C}_\CM(\W,t_1,t_2,r)$ is independent of the CSIR partition $r$. That means that given a certain CSIT, the capacity region does not vary as CSIR varies. A heuristic explanation of this phenomenon is given in \cite[Section 4.5]{W} for the case of single-user compound channels. It builds on the fact that the receiver can estimate the channel from a pilot sequence with a length which is negligible compared to the blocklength.\label{remcm1}
    \item Note that first taking a union and then an intersection of sets in the definition of $\mathcal{C}^*_\CM(\W,t_1,t_2)$ is similar to the max-min capacity expression for the classical single-user discrete memoryless compound channel \cite{CK}. We write two intersections instead of one in order to make the difference clear which remains between the two expressions. Recall that the $p\in\Pi_1$ are \emph{families} of probability measures. Every choice $(\tau_1,\tau_2)\in T_1\times T_2$ activates a certain element of such a family $p$. The union and the first intersection are thus related in a more complex manner than in the single-user expression.
    \item As CSIT increases, the capacity region grows, and in principle, one can read off from this how the region scales with increasing channel knowledge at the transmitters. More precisely, assume that there are pairs $(t_1,t_2)$ and $(t_1',t_2')$ of CSIT partitions,
\[
  t_\nu=\{\W_{\tau_\nu}:\tau_\nu\in T_\nu\},\qquad t_\nu'=\{\W'_{\tau_\nu'}:\tau_\nu\in T_\nu'\}\qquad(\nu=1,2),
\]
 such that $t_\nu'$ is finer than $t_\nu$ ($\nu=1,2$). That means that for every $\W'_{\tau_\nu'}\in t_\nu'$ there is a $\tau_\nu\in T_\nu$ with $\W'_{\tau_\nu'}\subset\W_{\tau_\nu}$, so one can assume that $T_\nu\subset T_\nu'$. Observe that the $\Pi_1$ corresponding to $(t_1,t_2)$, which we call $\Pi_1(t_1,t_2)$ only in this remark, can naturally be considered a subset of $\Pi_1(t_1',t_2')$, which denotes the $\Pi_1$ corresponding to $(t_1',t_2')$ only for this remark. Thus
\[
  \bigcup_{p\in\Pi_1(t_1,t_2)}\;\bigcap_{(\tau_1,\tau_2)\in T_1\times T_2}\;\bigcap_{W\in\W_{\tau_1\tau_2}}\mathcal{R}_\CM(p,\tau_1,\tau_2,W)=\bigcup_{p\in\Pi_1(t_1,t_2)}\;\bigcap_{(\tau_1',\tau_2')\in T_1'\times T_2'}\;\bigcap_{W\in\W'_{\tau_1'\tau_2'}}\mathcal{R}_\CM(p,\tau_1',\tau_2',W),
\]
and it follows that $\mathcal{C}_\CM(\W,t_1,t_2,r)\subset\mathcal{C}(\W,t_1',t_2',r)$.
\label{remcm3}
  \end{enumerate}
\end{rem}

\subsection{The MAC with Conferencing Encoders}\label{subsect:CONF}

Again let the channel $(\W,t_1,t_2,r)$ be given. Here we assume that each transmitter only has a set of private messages $[1,M_\nu]$ ($\nu=1,2$) for the receiver. Encoding is done in three stages. In the first stage, each encoder transmits its message and CSIT to a central node, a ``switch'', over a noiseless rate-constrained discrete MAC. The rate constraints are part of the problem setting and thus fixed, but the noiseless MAC is not given, it is part of the code. For reasons that will become clear soon, we call it a ``conferencing MAC''. In the second stage, the information gathered by the switch is passed on to each encoder over channels without incurring noise or loss. The codewords are chosen in the third stage. Each encoder chooses its codewords using three parameters: the message it wants to transmit, its CSIT, and the output of the conferencing MAC. This is illustrated in Fig. \ref{fig:confsystem}. 

\begin{figure}
  \begin{center}
    \includegraphics[width=.7\linewidth]{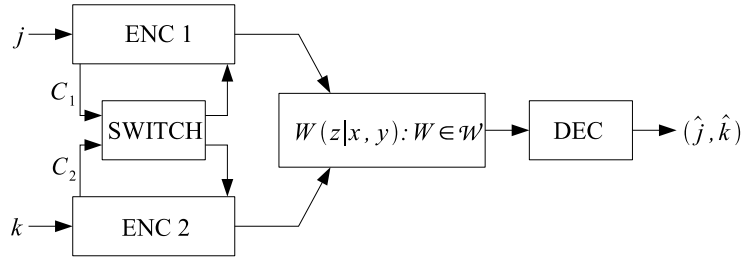}
    \caption{The MAC with Conferencing Encoders}
    \label{fig:confsystem}
  \end{center}
\end{figure}

The conferencing MAC can be chosen freely within the constraints, so it can be seen as a part of the encoding process. Assume that the blocklength of the codes used for transmission is set to be $n$. The rate constraints $(C_1,C_2)$ are such that $nC_\nu$ is the maximal number of bits transmitter $\nu$ can communicate to the receiving node of the conferencing MAC. Thus if transmitter 1, say, has message $j$ and CSIT $\tau_1$, then transmitter 2, who knows neither $j$ nor $\tau_1$, can use at most $C_1$ additional bits from transmitter 1 to encode its own message. Consequently, there is a limited degree of cooperation between the encoders enhancing the reliability of transmission. As the constraints on the noiseless MAC are measured in terms of $n$, one can interpret the communication over this channel as taking place during the transmission over $(\W,t_1,t_2,r)$ of the codeword preceding that which is constructed with the help of the conferencing MAC. 

Example \ref{iter} below shows how this kind of coding generalizes coding using Willems conferencing functions as defined in \cite{Wi1,Wi2}. From Theorem \ref{thmconf} below it follows that Willems conferencing is more than just a special case. In fact, it suffices to achieve the capacity region. In Section \ref{subsect:numapp}, we give an application where it is useful to have the more general notion of conferencing which is used here.

We now come to the formal definitions. Recall that a noiseless MAC is nothing but a function from a Cartesian product to some other space. 

\begin{defn}\label{codeconf}
A code$_\CONF(n,M_1,M_2,C_1,C_2)$ is a quadruple $(f_1,f_2,g,\Phi)$ of functions which satisfy
\begin{align*}
  f_1&:[1,M_1]\times\Gamma\times T_1\rightarrow\X^n,\\
  f_2&:[1,M_2]\times\Gamma\times T_2\rightarrow\Y^n,\\
  g&:[1,M_1]\times[1,M_2]\times T_1\times T_2\rightarrow\Gamma,\\
  \Phi&:\Z^n\times R\rightarrow[1,M_1]\times[1,M_2],
\end{align*}
where $\Gamma$ is a finite set and where $g$ satisfies
\begin{align}
  \frac{1}{n}\log\lVert g_{(j,\tau_1)}\rVert\leq C_2\qquad\text{for all }(j,\tau_1)\in[1,M_1]\times T_1,\label{g1}\\
  \frac{1}{n}\log\lVert g_{(k,\tau_2)}\rVert\leq C_1\qquad\text{for all }(k,\tau_2)\in[1,M_1]\times T_2\label{g2}
\end{align}
for the functions $g_{(j,\tau_1)}$ and $g_{(k,\tau_2)}$ defined by $g_{(j,\tau_1)}(j,k,\tau_1,\tau_2)=g_{(k,\tau_2)}(j,k,\tau_1,\tau_2)=g(j,k,\tau_1,\tau_2)$. The number $n$ is called the blocklength of the code. $g$ is called a conferencing MAC or alternatively a generalized conferencing function. The latter name is justified by Example \ref{iter}.
\end{defn}

\begin{rem}\label{functdescr}
  Analogous to the situation for the MAC with common message described in Remark \ref{functdescrcm}, the code$_\CONF$ $(n,M_1,M_2,C_1,C_2)$ given by the quadruple $(f_1,f_2,g,\Phi)$ uniquely determines a family
\begin{equation}\label{codeseq}
  \{(\x{jk}{\tau_1\tau_2},\y{jk}{\tau_1\tau_2},F_{jk}^\rho):(j,k)\in[1,M_1]\times[1,M_2],(\tau_1,\tau_2,\rho)\in T_1\times T_2\times R\}.
\end{equation}
For the elements of this family, $\x{jk}{\tau_1\tau_2}\in\X^n$ (not necessarily different!), $\y{jk}{\tau_1\tau_2}\in\Y^n$ (not necessarily different!), and the $F_{jk}^\rho\subset\Z^n$ satisfy
\[
  F_{jk}^\rho\cap F_{j'k'}^\rho=\varnothing\qquad\text{if }(j,k)\neq (j',k').
\]
For every $(\tau_1,\tau_2)\in T_1\times T_2$, the family \eqref{codeseq} must satisfy
\begin{align}
  \x{jk}{\tau_1\tau_2}&=\x{jk'}{\tau_1\tau_2'}&&\qquad\text{if}\quad g(j,k,\tau_1,\tau_2)=g(j,k',\tau_1,\tau_2'),\label{xe}\\
  \y{jk}{\tau_1\tau_2}&=\y{j'k}{\tau_1'\tau_2}&&\qquad\text{if}\quad g(j,k,\tau_1,\tau_2)=g(j',k,\tau_1',\tau_2).\label{ys}
\end{align}
Thus an alternative definition of codes$_\CONF$ would be families like the family \eqref{codeseq} together with conferencing MACs as in \eqref{g1} and \eqref{g2}. This is the form we will mostly use in the paper because of shorter notation. However, the original definition \ref{codeconf} is  more constructive and gives more insights into the practical use of such codes. It will be used in the converse, where the way how the codewords depend on the messages will be exploited.
\end{rem}

\begin{rem}\label{rem:detmac}
  Note that \eqref{g1} and \eqref{g2} really are rate constraints. Indeed, let $(S_1,S_2)$ be a rate triple achievable by the MAC defined by $g$, where the average error criterion is used\footnote{Even though the channel is noiseless, this does make a difference. In fact, Dueck showed in \cite{D} that the maximal and the average error criteria differ for MACs using the example of a \emph{noiseless} channel!}. Then by the characterization of the MAC with non-cooperating encoders without common message (cf. \cite[Theorem 3.2.3]{CK}), there must be independent random variables $J$ on $[1,M_1]\times T_1$ and $K$ on $[1,M_2]\times T_2$ such that 
\begin{align}
  S_1&\leq I(g(J,K);J\vert K)=H(g(J,K)\vert K),\label{S_1}\\
  S_2&\leq I(g(J,K);K\vert J)=H(g(J,K)\vert J),\label{S_2}\\
  S_1+S_2&\leq I(g(J,K);J,K)=H(g(J,K)).\label{S_1S_2}
\end{align}
  But by the constraints \eqref{g1} and \eqref{g2}, one knows that the right side of \eqref{S_1} must be smaller than $nC_1$and the right side of \eqref{S_2} must be smaller than $nC_2$. Clearly, the sum rate then must be smaller than $n(C_1+C_2)$. Moreover, as the bounds in \eqref{S_1}-\eqref{S_1S_2} are achievable, it even follows $H(g(J,K))\leq n(C_1+C_2)$ for every admissible choice of $J$ and $K$.
\end{rem}

With the above definition, the coding scheme is obvious: if the message pair $(j,k)$ is to be transmitted and if the pair of CSIT instances is $(\tau_1,\tau_2)$, then the senders use the codewords $\x{jk}{\tau_1,\tau_2}$ and $\y{jk}{\tau_1,\tau_2}$, respectively. If CSIR is $\rho$ and if the channel output is contained in the decoding set $F_{jk}^\rho$, then the receiver decides that the message pair $(j,k)$ has been transmitted.

\begin{defn}
For $\lambda\in(0,1)$, a code$_\CONF(n,M_1,M_2,C_1,C_2)$ is a code$_\CONF(n,M_1,M_2,C_1,C_2,\lambda)$ if
\[
  \sup_{\tau_1,\tau_2,\rho}\;\sup_{W\in\W_{\tau_1\tau_2}\cap\W_\rho}\frac{1}{M_1M_2}\sum_{j,k}W^n\bigl((F_{jk}^\rho)^c\vert\x{jk}{\tau_1,\tau_2},\y{jk}{\tau_1,\tau_2}\bigr)\leq\lambda.
\]
\end{defn}

In the following example, we prove our claim that using generalized conferencing in the encoding process generalizes Willems' conferencing encoders. We fix the notation
\begin{equation}\label{vbar}
  \bar\nu:=\begin{cases}
             1\quad\text{if }\nu=2,\\
             2\quad\text{if }\nu=1.
           \end{cases}
\end{equation}

\begin{ex}[Willems Conferencing Functions]\label{iter}
  Let positive integers $V_1$ and $V_2$ be given which can be written as products
\[
  V_\nu=V_{\nu,1}\cdots V_{\nu,I}
\]
for some positive integer $I$ which does not depend on $\nu$. Assume that 
\[
  \frac{1}{n}\log V_\nu\leq C_\nu.
\]
We first give a formal definition of a pair of Willems conferencing functions $(g_1,g_2)$. Such a pair is determined in an iterative manner via sequences of functions $h_{1,1},\ldots,h_{1,I}$ and $h_{2,1},\ldots,h_{2,I}$, where for $\nu=1,2$ and $i=2,\ldots,I$,
\begin{align*}
  h_{\nu,1}&:[1,M_\nu]\times T_\nu\rightarrow[1,V_{\nu,1}],\\
  h_{\nu,i}&:[1,M_\nu]\times T_\nu\times[1,V_{\bar\nu,1}]\times\ldots\times[1,V_{\bar\nu,i-1}]\rightarrow[1,V_{\nu,i}].
\end{align*}
For $\nu=1,2$ and $i=2,\ldots,I$, one recursively defines functions 
\begin{align*}
  h_{\nu,1}^*&:[1,M_\nu]\times T_\nu\rightarrow[1,V_{\nu,1}],\\
  h_{\nu,i}^*&:[1,M_1]\times[1,M_2]\times T_1\times T_2\rightarrow[1,V_{\nu,i}]
\end{align*}
by
\begin{align*}
  h_{\nu,1}^*(\ell_\nu,\tau_\nu)&=h_{\nu,1}(\ell_\nu,\tau_\nu),\\
  h_{\nu,i}^*(\ell_1,\ell_2,\tau_1,\tau_2)&=h_{\nu,i}\bigl(\ell_\nu,\tau_\nu,h_{\bar\nu,1}^*(\ell_{\bar\nu},\tau_{\bar\nu}),\ldots,h_{\bar\nu,i-1}^*(\ell_1,\ell_2,\tau_1,\tau_2)\bigr).
\end{align*}
The functions $g_1,g_2$ are then obtained by setting 
\[
  g_\nu:=(h_{\nu,1}^*,\ldots,h_{\nu,I}^*).
\]
One checks easily that $g=(g_1,g_2)$ is a noiseless MAC with output alphabet $\Gamma=[1,V_1]\times[1,V_2]$ satisfying \eqref{g1} and \eqref{g2}. Clearly, $g_\nu(\ell_1,\ell_2,\tau_1,\tau_2)$ is known at transmitter $\nu$ because it only depends on $(\ell_\nu,\tau_\nu)$ and $g_{\bar\nu}(\ell_1,\ell_2,\tau_1,\tau_2)$.

Note that not every conferencing MAC $g=(g_1,g_2)$ with output alphabet $[1,V_1]\times[1,V_2]$ can be obtained through Willems conferencing. The most trivial example to see this is where $V_1$ is prime and where the conferencing function $g_1$ mapping into $[1,V_1]$ depends on $k$. However, this setting can be given an interpretation in terms of MACs. Every pair of Willems' conferencing functions is nothing but the $I$-fold use of a non-stationary noiseless MAC with feedback. The above description of a transmission block of length $I$ over such a ``Willems channel'' as the one-shot use of a noiseless MAC as above is possible because noise plays no role here.
\end{ex}

For achievability and weak converse, we adapt the definitions from \ref{subsect:CM} to the conferencing setting. Let $C_1,C_2$ be nonnegative real numbers at least one of which is strictly greater than 0.
\begin{defn}
A rate pair $(R_1,R_2)$ is \emph{achievable} for the compound channel $(\W,t_1,t_2,r)$ with conferencing encoders with conferencing capacities $(C_1,C_2)$ if for every $\eps>0$ and $\lambda\in (0,1)$ and for $n$ large enough, there is a code$_\CONF(n,M_1,M_2,C_1,C_2,\lambda)$ with
\[
  \frac{1}{n}\log M_\nu\geq R_\nu-\eps
\]
We denote the set of achievable rate pairs by $\mathcal{C}_\CONF(\W,t_1,t_2,r,C_1,C_2)$.
\end{defn}

To state the result, we need to define the sets $\mathcal{R}_\CONF$. We denote by $\Pi_2$ the set of families 
\[
  p=\{p_{\tau_1\tau_2}(u,x,y)=p_0(u)p_{1\tau_1\tau_2}(x\vert u)p_{2\tau_1\tau_2}(y\vert u):(\tau_1,\tau_2)\in T_1\times T_2\}
\]
of probability distributions, where $p_0$ is a distribution on a finite subset $\U$ of the integers and where $(p_{1\tau_1\tau_2},p_{2\tau_1\tau_2})\in\K(\X\vert \U)\times\K(\Y\vert \U)$ for every $(\tau_1,\tau_2)\in T_1\times T_2$ (cf. the definition of $\Pi_1$ in Subsection \ref{subsect:CM}). Every $p\in\Pi_2$ defines a family of probability measures $p_W$ ($W\in\W$) on $\U\times\X\times\Y\times\Z$, where $\U$ is the set corresponding to $p$. This family consists of the probability measures $p_W$ ($W\in\W$) defined by
\[
  p_W(u,x,y,z):=p_0(u)p_{1\tau_1\tau_2}(x\vert u)p_{2\tau_1\tau_2}(y\vert u)W(z\vert x,y),
\]
where $(\tau_1,\tau_2)\in T_1\times T_2$ is such that $W\in\W_{\tau_1\tau_2}$. Finally we define subsets $\Pi_3$ and $\Pi_4$ of $\Pi_2$. $\Pi_3$ consists of those $p\in\Pi_2$ where the $p_{1\tau_1\tau_2}$ do not depend on $\tau_2$ and $\Pi_4$ consists of those $p\in\Pi_2$ where the $p_{2\tau_1\tau_2}$ do not depend on $\tau_1$.

For $W\in\W_\tau$, let $(U,X_\tau,Y_\tau,Z_W)$ be a quadruple of random variables which is distributed according to $p_W$. The set $\mathcal{R}_\CONF(p,W,C_1,C_2)$ is defined as the set of those pairs $(R_1,R_2)$ of non-negative reals which satisfy
\begin{align*}
      R_1&\leq I(Z_W;X_\tau\vert Y_\tau,U)+C_1,\\
      R_2&\leq I(Z_W;Y_\tau\vert X_\tau,U)+C_2,\\
  R_1+R_2&\leq\bigl(I(Z_W;X_\tau,Y_\tau\vert U)+C_1+C_2\bigr)\wedge I(Z_W;X_\tau,Y_\tau).
\end{align*}
If $C_1,C_2>0$, define the set 
\begin{align*}
  \mathcal{C}^*_\CONF(\W,t_1,t_2,C_1,C_2):=\bigcup_{p\in\Pi_2}\;\bigcap_{(\tau_1,\tau_2)\in T_1\times T_2}\;\bigcap_{W\in\W_{\tau_1\tau_2}}\mathcal{R}_\CONF(p,W,C_1,C_2).
\end{align*}
If $C_1>0,C_2=0$ (the reverse case is analogous with $\Pi_4$ replacing $\Pi_3$), define the set
\begin{align*}
  \mathcal{C}^*_{\CONF,1}(\W,t_1,t_2,C_1):=\bigcup_{p\in\Pi_3}\;\bigcap_{(\tau_1,\tau_2)\in T_1\times T_2}\;\bigcap_{W\in\W_{\tau_1\tau_2}}\mathcal{R}_\CONF(p,W,C_1,0).
\end{align*}

\begin{theorem}\label{thmconf}
  For the channel $(\W,t_1,t_2,r)$ and the pair $(C_1,C_2)$ of nonnegative real numbers, one has
\begin{align*}
  \mathcal{C}_\CONF(\W,t_1,t_2,r,C_1,C_2)=\begin{cases}
                               \mathcal{C}_\CONF^*(\W,t_1,t_2,C_1,C_2)&\quad\text{if }C_1,C_2>0,\\
                               \mathcal{C}_{\CONF,1}^*(\W,t_1,t_2,C_1)    &\quad\text{if }C_1>0,C_2=0,\\
                               \mathcal{C}_{\CONF,2}^*(\W,t_1,t_2,C_2)    &\quad\text{if }C_1=0,C_2>0.
                             \end{cases}
\end{align*}
This set can already be achieved using one-shot Willems conferencing functions, i.e. functions as defined in Example \ref{iter} with $I=1$.
More exactly, for every $(R_1,R_2)\in\mathcal{C}_\CONF(\W,t_1,t_2,r,C_1,C_2)$ and for every $\eps>0$, there is a $\zeta$ such that there exists a sequence of codes$_\CONF(n,M_1^{(n)},M_2^{(n)},C_1,C_2,2^{-n\zeta})$ fulfilling
\[
  \frac{1}{n}\log M_\nu^{(n)}\geq R_\nu-\eps,\quad\nu=1,2
\]
for large $n$ and using a one-shot Willems conference. $\mathcal{C}_\CONF(\W,t_1,t_2,r,C_1,C_2)$ is convex. One also has a weak converse. Further, the cardinality the auxiliary set $\U$ can be restricted to be at most $\min(\lvert \X\rvert\lvert \Y\rvert+2,\lvert \Z\rvert+3)$.
\end{theorem}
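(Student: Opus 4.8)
The plan is to derive the conferencing result from Theorem \ref{thmcm} together with a careful bookkeeping of how conferencing bits are split into a shared part (treated as a common message) and a private, rate-enhancing part. For the direct part, I would first reduce to one-shot Willems conferencing with $I=1$: given a target pair $(R_1,R_2)\in\mathcal{C}^*_\CONF$, pick the maximizing family $p\in\Pi_2$ and split each $C_\nu$ as $C_\nu=C_\nu'+C_\nu''$, where $C_\nu'$ is the rate that encoder $\nu$ ``donates'' to a common codebook layer and $C_\nu''$ is the rate it uses to help encoder $\bar\nu$ resolve part of its private message. Concretely I would introduce a common message of rate $R_0:=C_1'+C_2'$ that both encoders share after the one-shot conference (each encoder $\nu$ sends a uniform index of rate $C_\nu'$ to the switch, and the switch relays both indices back), and split encoder $\nu$'s private message $[1,M_\nu]$ into $M_\nu=M_\nu^{(0)}\cdot M_\nu^{(1)}$ with $\tfrac1n\log M_\nu^{(1)}=C_\nu''$, the first factor being sent over the conference to $\bar\nu$. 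Invoking Theorem \ref{thmcm} for the induced compound MAC with common message then yields codes with error $2^{-n\zeta}$, and translating the rate inequalities of $\mathcal{R}_\CM$ back through the splitting reproduces exactly the inequalities defining $\mathcal{R}_\CONF(p,W,C_1,C_2)$; the exponential error decay and the cardinality bound on $\U$ are inherited verbatim from Theorem \ref{thmcm}. The degenerate cases $C_2=0$ (resp. $C_1=0$) force $C_2''=0$ and $C_2'=0$, so only encoder $1$'s message influences the common layer and encoder $2$'s input distribution cannot depend on $\tau_1$ — this is precisely why $\Pi_3$ (resp. $\Pi_4$) appears.

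For the weak converse I would use the functional description of codes$_\CONF$ from Remark \ref{functdescr}, following the standard Fano/Slepian--Wolf-type argument but adapted to the compound setting. Fix a good code; for each $(\tau_1,\tau_2)$ pick the worst channel $W\in\W_{\tau_1\tau_2}$; introduce the auxiliary variable $U$ as the conference output $g(J,K,\tau_1,\tau_2)$ together with a time-sharing coordinate, where $(J,K)$ are independent uniform on $[1,M_1]\times[1,M_2]$. The key structural facts are \eqref{xe}–\eqref{ys}: conditioned on $U$, encoder $1$'s codeword depends only on $J$ (given $\tau_1$) and encoder $2$'s only on $K$ (given $\tau_2$), which gives the Markov structure needed for the single-letterization and makes the induced single-letter input distribution lie in $\Pi_2$ (or $\Pi_3,\Pi_4$ in the degenerate cases, since e.g. when $C_2=0$ the conference output carries nothing from encoder $2$, so encoder $1$'s conditional law cannot depend on $\tau_2$). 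Then Fano's inequality plus the mutual-information chain rule give, for each $(\tau_1,\tau_2,W)$, the bounds $R_1\le I(Z_W;X_\tau\mid Y_\tau,U)+\tfrac1n H(U\mid \text{stuff})+o(1)$, etc.; the entropy terms involving $U$ are bounded using the rate constraints \eqref{g1},\eqref{g2} exactly as in Remark \ref{rem:detmac} (giving $H(g(J,K)\mid K)\le nC_1$ and so on), and the cut-set bound $R_1+R_2\le I(Z_W;X_\tau,Y_\tau)$ comes from the standard MAC converse ignoring cooperation. Taking the intersection over all $(\tau_1,\tau_2,W)$ and then the union over the resulting $p$ yields membership in $\mathcal{C}^*_\CONF$.

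The main obstacle I expect is the converse bookkeeping of the auxiliary variable $U$: one must show that a single $U$ — the same random variable for all channel states sharing a CSIT pair — simultaneously satisfies all the constraints, that its induced conditional distributions $(p_{1\tau_1\tau_2},p_{2\tau_1\tau_2})$ genuinely assemble into one family $p\in\Pi_2$ (i.e. the $p_0$ on $\U$ is common across $\tau$, which holds because $J,K$ are uniform and the time-sharing variable is state-independent), and that the entropy split $H(U\mid K)+H(U\mid J)\ge H(U)$ is compatible with $C_1+C_2$ in the sum-rate bound — this last point is where one needs that the conference output decomposes as a pair $(U_1,U_2)$ with $U_\nu$ known at encoder $\nu$, which is exactly the Willems structure and is not automatic for a general conferencing MAC $g$; the resolution is that for the converse we do not need $g$ to be of Willems type, only the rate constraints \eqref{g1},\eqref{g2}, and these suffice to bound $H(U\mid K)\le nC_1$, $H(U\mid J)\le nC_2$ separately. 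A secondary technical point is verifying that the one-shot Willems reduction in the direct part actually respects the causality-free constraints \eqref{g1},\eqref{g2} after the common/private split, which is a routine check that $C_\nu'+C_\nu''\le C_\nu$. The convexity claim follows by a standard time-sharing argument over two codes, concatenating blocklengths, which I would state briefly at the end.
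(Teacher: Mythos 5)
Your direct part is essentially the paper's route: a one-shot Willems conference turns the conferenced parts of the private messages into a common message, and Theorem \ref{thmcm} is invoked for the resulting channel. Two repairs are needed, though. First, the conference must carry the CSIT along with the message part (in the paper the conferencing function is $g_\nu(\ell_\nu,\tau_\nu)=(i_\nu,\tau_\nu)$); otherwise you cannot apply Theorem \ref{thmcm} with the \emph{joint} CSIT partition $t=\{\W_{\tau_1\tau_2}\}$, and the distributions $p_{1\tau_1\tau_2},p_{2\tau_1\tau_2}$ of a general $p\in\Pi_2$, which depend on both CSIT components, cannot be realized; the extra rate $\frac{1}{n}\log\lvert T_\nu\rvert$ is asymptotically negligible. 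Second, the bookkeeping is off: the common-message rate must be $(R_1\wedge C_1)+(R_2\wedge C_2)$ and the private rates $R_\nu-(R_\nu\wedge C_\nu)$, not quantities tied to a split $C_\nu=C_\nu'+C_\nu''$ (encoder $\nu$ cannot ``help resolve'' the other encoder's message, it can only forward part of its own); moreover, for $C_1>0,C_2=0$ it is encoder 1's conditional law that must not depend on $\tau_2$ (that is $\Pi_3$), not encoder 2's on $\tau_1$ --- you state this correctly in your converse discussion but backwards here. These points are fixable; the genuine gap is in the converse.

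There you set $U=(m,G)$ with $G=g(J,K,\tau_1,\tau_2)$ and assert that the induced $p_0$ is common across all $(\tau_1,\tau_2)$ ``because $J,K$ are uniform''. That is false in general: $g(\cdot,\cdot,\tau_1,\tau_2)$ is a different map for each CSIT pair, so the law of $G$, hence $p_0$, varies with $\tau$ even though $(J,K)$ is uniform and the time index is state-independent. Since $\mathcal{C}^*_\CONF$ is a union over \emph{single} families $p\in\Pi_2$ with one common $p_0$, producing a separate auxiliary variable for each $\tau$ only places the rate pair in a possibly strictly larger region, so your converse does not close. This is exactly the obstacle the paper removes before single-letterizing: Lemma \ref{diffsit} passes to an auxiliary code for $(\W,t,t,r)$ whose conference is CSIT-independent, namely $\tilde g(j,k)=(g(j,k,\tau_1,\tau_2))_{(\tau_1,\tau_2)\in T_1\times T_2}$, with capacities inflated to $C_\nu+\delta$ (the loss $\frac{1}{n}\log\lvert T_1\rvert\lvert T_2\rvert\le\delta$ vanishing in $n$); only then is $U=(m,\tilde g(J,K))$ state-independent so that the conditionals assemble into one family in $\Pi_2$, after which Lemma \ref{weakconvaux} carries out Fano, the chain rules, the auxiliary bound $H(S_1,S_2)\le H(X^\tau,Y^\tau)+\Delta_2$ (needed because distinct message pairs may share codewords), and the quantitative case analysis giving the weak converse. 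Your per-$\tau$ bounds $H(G\mid S_2)\le nC_1$, $H(G\mid S_1)\le nC_2$ are correct for fixed $\tau$ but do not repair the missing state-independence of $U$; without this reduction (or some substitute for it) the stated region is not established.
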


Remark \ref{remcm} applies here, too. Further, we note
\begin{rem}
  If $C_1,C_2>0$, then $\mathcal{C}_\CONF^*(\W,t_1,t_2,C_1,C_2)=\mathcal{C}_\CONF^*(\W,t,t,C_1,C_2)$, where
\[
  t=\{\W_{\tau_1\tau_2}:(\tau_1,\tau_2)\in T_1\times T_2\}.
\]
Thus bidirectional conferencing leads to a complete exchange of CSIT. The capacity region only depends on the joint CSIT at both transmitters, the asymmetry is lost.
\end{rem}
 Before beginning with the proof in the next section, we use Theorem \ref{thmconf} to find out how much cooperation is necessary to achieve the full-cooperation performance, i.e. the performance achieved when $C_1=C_2=\infty$, if cooperation in both directions is possible at all. (So we do not ask how large $C_1$ must be if $C_2=0$.) By Theorem \ref{thmconf}, the region of rates achievable with full cooperation is given by 
\begin{equation}\label{reform-A}
  0\leq R_1+R_2\leq C^\infty:=\max_{p\in\Pi_2}\;\min_{\tau\in T_1\times T_2}\;\inf_{W\in\W_\tau}I(Z_\eta;X_\tau,Y_\tau).
\end{equation}
$C^\infty$ also determines the maximally achievable sum rate. 

Let $\mathcal{M}$ be the set of those $p\in\Pi_2$ which achieve the maximum in $\eqref{reform-A}$. Then
\begin{cor}\label{sumcapcor}
\begin{enumerate}
  \item The infinite cooperation sum capacity is achievable if and only if
\begin{equation}\label{c1c2}
  C_1+C_2\geq C^\infty-\max_\mathcal{M}\;\min_{\tau\in T_1\times T_2}\;\inf_{W\in\W_\tau}I(Z_W;X_\tau,Y_\tau).
\end{equation}
\item The full cooperation region is achieved if 
\begin{align*}
  C_1&\geq C^\infty-\max_{p\in\Pi_2}\;\min_{\tau\in T_1\times T_2}\;\inf_{W\in\W_\tau}I(Z_W;X_\tau\vert Y_\tau,U),\label{lini}\\
  C_2&\geq C^\infty-\max_{p\in\Pi_2}\;\min_{\tau\in T_1\times T_2}\;\inf_{W\in \W_\tau}I(Z_W;Y_\tau\vert X_\tau,U).\notag
\end{align*}
\end{enumerate}
In particular, infinite-capacity cooperation is neither necessary in order to achieve the full-cooperation sum rate nor to achieve the full-cooperation rate region.
\end{cor}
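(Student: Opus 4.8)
We work in the regime $C_1,C_2>0$, where by Theorem~\ref{thmconf} the region equals $\mathcal{C}^*_\CONF(\W,t_1,t_2,C_1,C_2)=\bigcup_{p\in\Pi_2}\bigcap_{(\tau_1,\tau_2)\in T_1\times T_2}\bigcap_{W\in\W_{\tau_1\tau_2}}\mathcal{R}_\CONF(p,W,C_1,C_2)$; both parts of the corollary are obtained by turning this set expression into an optimization over $p$. As a preliminary I would justify \eqref{reform-A}: letting $C_1,C_2\to\infty$, in each $\mathcal{R}_\CONF(p,W,C_1,C_2)$ the two per-encoder bounds become vacuous and the sum bound collapses to $R_1+R_2\le I(Z_W;X_\tau,Y_\tau)$, so the region converges to $\{(R_1,R_2)\ge 0:R_1+R_2\le\max_{p\in\Pi_2}\min_\tau\inf_W I(Z_W;X_\tau,Y_\tau)\}$, the displayed maximum being $C^\infty$. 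This identifies $C^\infty$ as the full-cooperation sum rate and $\mathcal{M}$ as its set of optimizers.

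For part~(1) I would compute $\sup\{R_1+R_2:(R_1,R_2)\in\mathcal{C}^*_\CONF\}$. Fix $p\in\Pi_2$ and abbreviate $I_X(p)=\min_\tau\inf_W I(Z_W;X_\tau\vert Y_\tau,U)$, $I_Y(p)=\min_\tau\inf_W I(Z_W;Y_\tau\vert X_\tau,U)$, $I_U(p)=\min_\tau\inf_W I(Z_W;X_\tau,Y_\tau\vert U)$ and $I_\circ(p)=\min_\tau\inf_W I(Z_W;X_\tau,Y_\tau)$. Because the three inequalities defining $\mathcal{R}_\CONF$ are separately quantified over $(\tau,W)$ and $\wedge$ commutes with $\inf$, the set $\bigcap_{\tau,W}\mathcal{R}_\CONF(p,W,C_1,C_2)$ is the single polytope $\{R_1\le C_1+I_X(p),\ R_2\le C_2+I_Y(p),\ R_1+R_2\le(C_1+C_2+I_U(p))\wedge I_\circ(p),\ R_1,R_2\ge 0\}$, whose maximal sum rate is $\min\bigl(I_\circ(p),\,C_1+C_2+I_U(p),\,C_1+C_2+I_X(p)+I_Y(p)\bigr)$. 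Taking the supremum over $p$ and requiring it to equal $C^\infty$: the first argument forces $I_\circ(p)=C^\infty$, i.e.\ $p\in\mathcal{M}$, and the remaining ones then require $C_1+C_2\ge C^\infty-I_U(p)$ together with $C_1+C_2\ge C^\infty-(I_X(p)+I_Y(p))$; optimizing over $p\in\mathcal{M}$ yields \eqref{c1c2}. Conversely, given \eqref{c1c2} I would pick a $p\in\mathcal{M}$ attaining that optimum and verify that the line $R_1+R_2=C^\infty$ actually meets the polytope, i.e.\ that the per-encoder bounds are not active; here the conditional independence $X_\tau\perp Y_\tau\mid U$ built into $\Pi_2$ enters through the chain-rule inequality $I(Z_W;X_\tau\vert Y_\tau,U)+I(Z_W;Y_\tau\vert X_\tau,U)\ge I(Z_W;X_\tau,Y_\tau\vert U)$ at each state, together with care about which $(\tau,W)$ realizes each minimum.

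For part~(2), since $\mathcal{C}_\CONF$ is convex (Theorem~\ref{thmconf}) and the full-cooperation region is the triangle with vertices $(0,0)$, $(C^\infty,0)$, $(0,C^\infty)$, it suffices to place each vertex in $\mathcal{C}^*_\CONF(\W,t_1,t_2,C_1,C_2)$. The vertex $(0,0)$ is trivial. For $(C^\infty,0)$ I would take $p\in\mathcal{M}$ with $I_X(p)$ as large as possible and check that the displayed lower bound on $C_1$ makes $(C^\infty,0)$ satisfy all three inequalities of every $\mathcal{R}_\CONF(p,W,C_1,C_2)$: the bound on $R_1$ via $C_1+I_X(p)\ge C^\infty$, the $I_\circ$-part of the sum bound via $p\in\mathcal{M}$, and the $I_U$-part via $C_1+C_2+I_U(p)\ge C_1+I_X(p)+C_2\ge C^\infty$, using $I_U(p)\ge I_X(p)$. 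The vertex $(0,C^\infty)$ is symmetric, with the bound on $C_2$ and $I_Y$. Since the right-hand sides of \eqref{c1c2} and of the two bounds in part~(2) are finite, the final assertion — infinite-capacity cooperation is not needed — follows.

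The step I expect to be the main obstacle is the converse-free direction in both parts: one must produce a \emph{single} family $p\in\Pi_2$ that simultaneously maximizes the unconditional min-information (so as to lie in $\mathcal{M}$ and keep the $\wedge$ inactive) and is large in the relevant $U$-conditional min-information, while choosing the rate split so that neither per-encoder bound is active at any channel state — that is, reconciling the union over $p$ with the intersection over channel states. Handling this, presumably by enlarging the auxiliary alphabet $\U$ and time-sharing inside $\Pi_2$, is the delicate point; everything else is bookkeeping with the formulas of Theorem~\ref{thmconf}.
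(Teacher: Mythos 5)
Your plan follows the same route as the paper's own proof: both parts are read off from the region formula of Theorem~\ref{thmconf} by optimizing over $p\in\Pi_2$, using the key observation that any rate point with $R_1+R_2=C^\infty$ forces $\min_\tau\inf_{W\in\W_\tau}I(Z_W;X_\tau,Y_\tau)\geq C^\infty$, hence $p\in\mathcal{M}$. The paper's argument for part (1) is, however, even shorter than your sketch: it simply asserts that the sum capacity equals $C(C_1,C_2)=\max_{p\in\Pi_2}\min_\tau\inf_{W\in\W_\tau}\bigl(I(Z_W;X_\tau,Y_\tau\vert U)+C_1+C_2\bigr)\wedge I(Z_W;X_\tau,Y_\tau)$, i.e.\ it never carries the per-encoder contribution $C_1+C_2+I_X(p)+I_Y(p)$ that you track, and part (2) is dispatched with the single sentence ``this part is trivial.'' So the delicate point you flag at the end is not something the paper resolves by a device you failed to guess; the paper does not engage with it at all.

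That flagged point is a genuine gap in your write-up as it stands, and it surfaces concretely in part (2): the hypothesis bounds $C_1$ by $C^\infty-\max_{p\in\Pi_2}\min_\tau\inf_{W}I(Z_W;X_\tau\vert Y_\tau,U)$, but the vertex $(C^\infty,0)$ can only lie in the polytope of a $p$ satisfying $\min_\tau\inf_W I(Z_W;X_\tau,Y_\tau)\geq C^\infty$, i.e.\ $p\in\mathcal{M}$, because of the unconditional sum bound. Your instruction ``take $p\in\mathcal{M}$ with $I_X(p)$ as large as possible'' therefore connects to the displayed hypothesis only if the maximum of $I_X$ over $\Pi_2$ is (or can be combined, e.g.\ by enlarging $\U$, with) a maximizer in $\mathcal{M}$ — exactly the reconciliation you postpone. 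The same issue touches the sufficiency direction of part (1) in your version: with the extra term $C_1+C_2+I_X(p)+I_Y(p)$ present, \eqref{c1c2} alone does not obviously yield a single $p\in\mathcal{M}$ for which that term also reaches $C^\infty$, and the per-state inequality $I(Z_W;X_\tau\vert Y_\tau,U)+I(Z_W;Y_\tau\vert X_\tau,U)\geq I(Z_W;X_\tau,Y_\tau\vert U)$ you invoke does not survive taking the three minima over $(\tau,W)$ separately, since they may be attained at different states. In short: your decomposition and bookkeeping coincide with the paper's, and are in fact more careful; but the one step you yourself identify as the obstacle is left open both in your plan and in the paper's two-line proof, so you should either close it (e.g.\ by a time-sharing construction inside $\Pi_2$) or weaken the statements to use maxima over $\mathcal{M}$, as the paper itself effectively does in \eqref{c1c2}.
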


\begin{proof}
  1) Denote the maximal sum rate achievable with cooperation capacities $C_1,C_2>0$ by $C(C_1,C_2)$. As for $C^\infty$, the problem of finding $C(C_1,C_2)$ is a maximization problem: one has
\begin{align*}
  C(C_1,C_2)=\max_{p\in\Pi_2}\;\min_{\tau\in T_1\times T_2}\;\inf_{\eta\in \W_\tau}(I(Z_\eta;X_{\tau},Y_{\tau}\vert U)+C_1+C_2)\wedge I(Z_\eta;X_{\tau},Y_{\tau}).
\end{align*}
The equation
\begin{equation}\label{C(C_1,C_2)geqCinfty}
  C(C_1,C_2)\geq C^\infty
\end{equation}
holds if and only if there is a $p\in\Pi_2$ such that
\begin{align*}
  \min_{\tau\in T_1\times T_2}\;\inf_{W\in \W_\tau}(I(Z_W;X_{\tau},Y_{\tau}\vert U)+C_1+C_2)\wedge I(Z_W;X_{\tau},Y_{\tau})\geq C^\infty.
\end{align*}
That means in particular that 
\[
  \min_{\tau\in T_1\times T_2}\;\inf_{W\in \W_\tau} I(Z_W;X_{\tau},Y_{\tau})\geq C^\infty,
\]
so $p$ must maximize 
\[
  \min_{\tau}\;\inf_{W\in \W_\tau}I(Z_W;X_\tau,Y_\tau).
\]
Then \eqref{C(C_1,C_2)geqCinfty} is equivalent to
\begin{align*}
  \max_\mathcal{M}\;\min_{\tau\in T_1\times T_2}\;\inf_{W\in \W_\tau}(I(Z_W;X_{\tau},Y_{\tau}\vert U)+C_1+C_2)\geq C^\infty,
\end{align*}
and this proves \eqref{c1c2}.

2) This part is trivial.
\end{proof}

In Section \ref{sect:numerik}, we present a numerical example which shows how the rate region changes with the conferencing capacities.

\section{The Achievability Proofs}\label{sect:ach}

\subsection{The MAC with Common Message}\label{subsect:CMach}

The proof of the achievability of $\mathcal{C}_\CM^*(\W,t_1,t_2)$ proceeds as follows. We first show that $\mathcal{C}_\CM^*(\W,t_1,t_2)$ is achievable using random codes, where codewords and decoding sets are chosen at random and the error is measured by taking the mean average error over all realizations. For this part, we adapt the nice proof used by Jahn \cite{J} in the context of arbitrarily varying multiuser channels to the setting of the compound MAC with common message. It uses some hypergraph terminology. An alternative proof proceeding as in standard random coding can be found in \cite{WBB}. It uses the same encoding and the same decoding, but needs the additional assumption that $\lvert \W\rvert<\infty$. Next, we derandomize, i.e. we extract a good deterministic code from the random one. This is much easier than for arbitrarily varying channels. It is first done for $\lvert \W\rvert<\infty$, and then an approximation argument is used for the case $\lvert \W\rvert=\infty$.

We assume here that the receiver has no CSI and show that $\mathcal{C}_\CM^*(\W,t_1,t_2)$ is achievable. This gives an inner bound to the capacity region for arbitrary CSIR-function $r$. As $\rho$ is trivial in the no-CSIR case, we omit it in the notation.

\subsubsection{Hypergraphs}

A \emph{cubic hypergraph} is a discrete set of the form $\U\times\X\times\Y$ with a collection $\mathcal{E}$ of subsets $E\subset\U\times\X\times\Y$.

\begin{defn}
  Consider a family $\{(U_i,X_{ij},Y_{ik}):i\in[1,M_0],\;j\in[1,M_1],\;k\in[1,M_2]\}$ of random vectors, where the $U_i$ take values in $\U$, the $X_{ij}$ take values in $\X$, and the $Y_{ik}$ take values in $\Y$. This family is a random $(M_0,M_1,M_2)$-half lattice in $\U\times \X\times \Y$ if the family 
\[
  \bigl\{\{(U_i,X_{ij},Y_{ik}):(j,k)\in[1,M_1]\times[1,M_2]\}:i\in[1,M_0]\bigr\}
\]
of random vectors is i.i.d. and such that given $U_i$,
\begin{itemize}
  \item the pair of families $\{X_{ij}:j\in[1,M_1]\}$, $\{Y_{ik}:k\in[1,M_2]\}$ is conditionally independent,
  \item the family $X_{ij}$, where $j\in[1,M_1]$, is conditionally i.i.d,
  \item the family $Y_{ik}$, where $k\in[1,M_2]$, is conditionally i.i.d.
\end{itemize}
\end{defn}

Let a random $(M_0,M_1,M_2)$-half lattice on $\U\times \X\times \Y$ be realized on a probability space $(\Omega,\mathcal{F},\PP)$. For any $E\in\mathcal{E}$, $(i,j,k)\in[1,M_0]\times[1,M_1]\times[1,M_2]$, and $(u,x,y)\in \U\times \X\times \Y$, we define\footnote{Recall the notation defined in the Introduction.}
\begin{align*}
  P_E(i,j,k)&:=\PP\bigl[E\cap\menge{(U_{i'},X_{i'j'},Y_{i'k'}):i'\neq i,j',k'}
  \neq\varnothing\vert(U_i,X_{ij},Y_{ik})=(u,x,y)\bigr],\\
  P_{E\vert u}(i,j,k)&:=\PP\bigl[E\rvert_u\cap\menge{(X_{ij'},Y_{ik'}):j'\neq j,k'\neq k}
  \neq\varnothing\vert(U_i,X_{ij},Y_{ik})=(u,x,y)\bigr],\\
  P_{E\vert(u,x)}(i,j,k)&:=\PP\bigl[E\rvert_{(u,x)}\cap\menge{Y_{ik'}:k'\neq k}
  \neq\varnothing\vert(U_i,X_{ij},Y_{ik})=(u,x,y)\bigr],\\
  P_{E\vert(u,y)}(i,j,k)&:=\PP\bigr[E\rvert_{(u,y)}\cap\menge{X_{ij'}:j'\neq j}
  \neq\varnothing\vert(U_i,X_{ij},Y_{ik})=(u,x,y)\bigr].
\end{align*}
We now state an analogue to the Hit Lemmas in \cite{J} which, just like those, is proved immediately using the independence/conditional independence properties of the random $(M_0,M_1,M_2)$-half lattice and the union bound.
\begin{lem}\label{hitlemma}
For a random $(M_0,M_1,M_2)$-half lattice on $\U\times \X\times \Y$, and for any $E\in\mathcal{E}$, $(i,j,k)\in[1,M_0]\times[1,M_1]\times[1,M_2]$, and $(u,x,y)\in \U\times \X\times \Y$,
  \begin{align*}
    P_E(i,j,k)&\leq M_0M_1M_2\PP\eckig{(U_1,X_{11},Y_{11})\in E},\\
    P_{E\vert u}(i,j,k)&\leq M_1M_2\PP\eckig{\left.(X_{11},Y_{11})\in E\rvert_u\right\vert U_1=u},\\
    P_{E\vert(u,y)}(i,j,k)&\leq M_1\PP\eckig{\left.X_{11}\in E\rvert_{(u,y)}\right\vert U_1=u},\\
    P_{E\vert(u,x)}(i,j,k)&\leq M_2\PP\eckig{\left.Y_{11}\in E\rvert_{(u,x)}\right\vert U_1=u}.
  \end{align*}
Hence, for any probability measure $p$ on $(\U\times\X\times\Y)\times\mathcal{E}$,
\begin{equation*}
\begin{aligned}
  &\sum_{u,x,y,E}p(u,x,y,E)\bigl(P_E(i,j,k)\\&\quad+P_{E\vert u}(i,j,k)+P_{E\vert(u,x)}(i,j,k)+P_{E\vert(u,y)}(i,j,k)\bigr)\\
  &\leq M_0M_1M_2\max_{E}\PP\eckig{(U_1,X_{11},Y_{11})\in E}\\
  &\quad+M_1M_2\max_{E,u}\PP\eckig{\left.(X_{11},Y_{11})\in E\rvert_u\right\vert U_1=u}\\
  &\quad+M_1\max_{E,u,y}\PP\eckig{\left.X_{11}\in E\rvert_{(u,y)}\right\vert U_1=u}\\
  &\quad+M_2\max_{E,u,x}\PP\eckig{\left.Y_{11}\in E\rvert_{(u,x)}\right\vert U_1=u}.
\end{aligned}
\end{equation*}
\end{lem}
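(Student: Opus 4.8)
The plan is to prove the four inequalities one at a time, each by the same two-step recipe. First, apply the union bound to the event ``some competing random vector of the half lattice lands in the relevant slice of $E$'', reducing it to a sum of single-vector hitting probabilities. Second, use the (conditional) independence and identical-distribution structure built into the definition of a random $(M_0,M_1,M_2)$-half lattice to evaluate each single-vector probability, the key point being that conditioning on $(U_i,X_{ij},Y_{ik})=(u,x,y)$ does not change the conditional law of any vector that enters the union bound. This mirrors exactly the proof of the Hit Lemmas of \cite{J}.

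For $P_E(i,j,k)$, the event is the union over all $(i',j',k')$ with $i'\neq i$ of $\menge{(U_{i'},X_{i'j'},Y_{i'k'})\in E}$, and there are at most $M_0M_1M_2$ such triples. Since the $M_0$ row-blocks are i.i.d.\ and the $i'$-th block is independent of the $i$-th, each term is unaffected by the conditioning and equals $\PP\eckig{(U_1,X_{11},Y_{11})\in E}$; the union bound gives the first inequality. For $P_{E\rvert u}(i,j,k)$, the event is a union of at most $M_1M_2$ events $\menge{(X_{ij'},Y_{ik'})\in E\rvert_u}$ with $j'\neq j$, $k'\neq k$; given $U_i=u$ the families $\menge{X_{ij}}_j$ and $\menge{Y_{ik}}_k$ are conditionally independent and each conditionally i.i.d., so $(X_{ij'},Y_{ik'})$ is conditionally independent of $(X_{ij},Y_{ik})$ given $U_i=u$ and has the same conditional law as $(X_{11},Y_{11})$ given $U_1=u$; the union bound then yields the second inequality. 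The third and fourth are identical in form: conditioned on $(U_i,X_{ij},Y_{ik})=(u,x,y)$, each $Y_{ik'}$ with $k'\neq k$ is distributed like $Y_{11}$ given $U_1=u$ and there are at most $M_2$ of them, and likewise each $X_{ij'}$ with $j'\neq j$ is distributed like $X_{11}$ given $U_1=u$ with at most $M_1$ of them.

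For the ``Hence'' assertion, I would add the four inequalities and integrate both sides against the arbitrary probability measure $p$ on $(\U\times\X\times\Y)\times\mathcal{E}$. On the right-hand side the first summand depends on $(u,x,y,E)$ only through $E$, the second only through $(E,u)$, and the last two only through $(E,u,y)$ and $(E,u,x)$ respectively; hence $\sum_{u,x,y,E}p(u,x,y,E)$ times each summand is bounded by its maximum over the relevant variables, since $p$ has total mass one. Adding the four resulting terms produces the displayed bound. The proof is routine; the only thing needing care is the bookkeeping in the second step — checking separately for each of $P_E$, $P_{E\rvert u}$, $P_{E\rvert(u,x)}$ and $P_{E\rvert(u,y)}$ that the specific conditioning event is irrelevant to the conditional distribution of every competing vector, which in each case follows from the precise layering of independences in the definition of a random half lattice (row-blocks i.i.d.; within a block, the $X$'s and $Y$'s conditionally independent given $U_i$ and each conditionally i.i.d.).
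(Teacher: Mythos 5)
Your proposal is correct and follows exactly the argument the paper intends: the union bound over the competing codeword indices combined with the i.i.d./conditional-independence structure of the half lattice (so that conditioning on $(U_i,X_{ij},Y_{ik})=(u,x,y)$ leaves the law of each competing vector unchanged), and then averaging the four bounds against $p$ and passing to maxima for the final display. The paper gives no more detail than this, so your write-up is essentially its proof with the bookkeeping made explicit.
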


\subsubsection{The Encoding/Decoding Procedure}

We can now return to the proof of the achievability part of Theorem \ref{thmcm}. Let the channel $(\W,t_1,t_2)$ be given (recall that the receiver is assumed to have no CSIR). We define a random code with block length $n$ which encodes $M_0$ common messages, $M_1$ messages of the first transmitter, and $M_2$ messages of the second transmitter. The randomness of the code can be viewed in two ways. First, one can see it as a method of proof which allows us to find a number of codes from which we will select a good one later. However, the randomness could also be incorporated into the system. Given that the transmitters and the receiver have access to the common randomness needed in the definition of the code, this already gives an achievable rate region if this randomness is exploited in the coding process. During the proof, one will see that this region even is achievable using a maximal error criterion. One needs to use the average error criterion when the achievability proof for random codes is strengthened in order to obtain the desired achievability part of Theorem \ref{thmcm} which requires the use of deterministic codes.

Using the notation introduced before Theorem \ref{thmcm}, we define an i.i.d. set of $M_0$ i.i.d. families of random variables $\{(U_i,X_{ij}^{\tau_1},Y_{ik}^{\tau_2}):(j,k)\in[1,M_1]\times[1,M_2],(\tau_1,\tau_2)\in T_1\times T_2\}$. Let 
\[
  p=\{p_0(\cdot)p_{1\tau_1}(\cdot\vert\cdot)p_{2\tau_2}(\cdot\vert\cdot):(\tau_1,\tau_2)\in T_1\times T_2\}\in\Pi_1
\]
and let $\U$ be the corresponding finite subset of the integers. The distribution $p_0^n$ of each $U_i$ on $\U^n$ is the $n$-fold product of $p_0$. Given $U_i$, the rest of the random variables in family $i$ is assumed to be conditionally independent given $U_i$. The conditional distribution $p_{1\tau_1}^n$ of each $X_{ij}^{\tau_1}$ given $U_i$ on $\X^n$ is the $n$-fold memoryless extension of $p_{1\tau_1}$, and the conditional distribution $p_{2\tau_2}^n$ of each $Y_{ik}^{\tau_2}$ given $U_i$ on $\Y^n$ is the $n$-fold memoryless extension of $p_{2\tau_2}$. Given a message triple $(i,j,k)$ that is to be transmitted and given an instance $(\tau_1,\tau_2)$ of CSIT, the transmitters use the random codewords $X_{ij}^{\tau_1}$ and $Y_{ik}^{\tau_2}$. 

We now define the decoding procedure, which requires access to the same random experiment as used for encoding. Fix a $\delta>0$. The $p$ used in the encoding process and every $W\in\W$ define a probability measure $p_W$ as in \eqref{peta}. For every $\tau=(\tau_1,\tau_2)$, define a set 
\[
  E^\tau:=\bigcup_{W\in \W_\tau}T_{p_W,\delta}^n
\]
(cf. the notation section in the Introduction). This set does not depend on the state $W\in \W_\tau$. The decoding sets are defined as follows: $F_{ijk}$ consists exactly of those $\zz\in\Z^n$ which satisfy both of the following conditions:
\begin{itemize}
  \item there is a $(\tau_1,\tau_2)$ such that 
\[
  (U_i,X_{ij}^{\tau_1},Y_{ik}^{\tau_2})\in E^{\tau}\rvert_\zz,
\]
  \item for all $(i',j',k')\neq (i,j,k)$ and for all $(\tau_1,\tau_2)$,
\[
  (U_{i'},X_{i'j'}^{\tau_1},Y_{i'k'}^{\tau_2})\notin E^{\tau}\rvert_\zz.
\]
\end{itemize}
Clearly the $F_{ijk}$ are disjoint. This decision rule does not depend on $\tau$, nor on $W$.

\subsubsection{Bounding the Mean Maximal Error for Random Coding}

We now bound the mean maximal error incurred by random coding, i.e. for each message triple $(i,j,k)$, CSIT instance $\tau=(\tau_1,\tau_2)$, and channel state $W\in \W_\tau$, we ask how large 
\begin{equation}\label{meanmax}
  \EE\eckig{W^n(F_{ijk}^c\vert X_{ij}^{\tau_1},Y_{ik}^{\tau_2})}
\end{equation}
can be. The receiver makes an error (decides incorrectly) if for the channel output $\zz$, one of the following holds:
\begin{enumerate}[E1)]
  \item $(U_i,X_{ij}^{\tau_1'},Y_{ik}^{\tau_2'})\notin E^{\tau'}\vert_\zz$ for all $\tau'=(\tau_1',\tau_2')$,
  \item\label{i} there is an $i'\neq i$ and arbitrary $(j',k')$ and $\tau'=(\tau_1',\tau_2')$ such that
\[
  (U_{i'},X_{i'j'}^{\tau_1'},Y_{i'k'}^{\tau_2'})\in E^{\tau'}\vert_\zz,
\]
  \item\label{jk} there is a $j'\neq j$ and a $k'\neq k$ and arbitrary $\tau'=(\tau_1',\tau_2')$ such that
\[
  (U_{i},X_{ij'}^{\tau_1'},Y_{ik'}^{\tau_2'})\in E^{\tau'}\vert_\zz,
\]
  \item\label{j} there is a $j'\neq j$ and arbitrary $\tau'=(\tau_1',\tau_2')$ such that
\[
  (U_{i},X_{ij'}^{\tau_1'},Y_{ik}^{\tau_2'})\in E^{\tau'}\vert_\zz,
\]
  \item\label{k} there is a $k'\neq k$ and arbitrary $\tau'=(\tau_1',\tau_2')$ such that
\[
  (U_{i},X_{ij}^{\tau_1'},Y_{ik'}^{\tau_2'})\in E^{\tau'}\vert_\zz.
\]
\end{enumerate}

The mean probability of the event described in (E1) is upper-bounded by
\[
  \EE\eckig{\sum_\zz W^n(\zz\vert X_{ij}^{\tau_1},Y_{ik}^{\tau_2})1_{\{(U_i,X_{ij}^{\tau_1},Y_{ik}^{\tau_2})\notin T_{p_W,\delta}^n\rvert_\zz\}}}.
\]
Note that the joint probability of the triple $(U_i,X_{ij}^{\tau_1},Y_{ik}^{\tau_2})$ and the channel output is $p_W^n$. Lemma \ref{Pinsker} from the Appendix then implies that the above term can be bounded by 
\begin{equation}\label{ersteabsch}
  (n+1)^{\lvert \U\rvert\lvert \X\rvert\lvert \Y\rvert\lvert \Z\rvert}2^{-nc\delta^2}.
\end{equation}

We now bound the probability that one of the events (E\ref{i})-(E\ref{k}) holds for some fixed $(\tau_1',\tau_2')$. To this end we use Lemma \ref{hitlemma}. The pair $(\U^n\times \X^n\times \Y^n,\mathcal{E})$, where $\mathcal{E}=\{E^{\tau'}\rvert_\zz:\zz\in \Z^n\}$,  defines a cubic hypergraph. Further, the collection of random vectors
\[
  \{(U_{i'},X_{i'j'}^{\tau_1'},Y_{i'k'}^{\tau_2'}):i',j',k'\}
\]
is a random $(M_0,M_1,M_2)$-half lattice on $\U^n\times \X^n\times \Y^n$. One obtains a probability measure on $\U^n\times\X^n\times \Y^n\times\mathcal{E}$ via
\[
  Q(\uu,\xx,\yy,E\rvert_\zz)=W^n(\zz\vert \xx,\yy)\PP[(U_i,X_{ij}^{\tau_1},Y_{ik}^{\tau_2})=(\uu,\xx,\yy)].
\]
We then obtain for fixed $(\tau_1',\tau_2')$ that 
\begin{align*}
  &\EE\left[\sum_\zz W^n(\zz\vert X_{ij}^{\tau_1},Y_{ik}^{\tau_2})1_{\{\textnormal{(E\ref{i}), (E\ref{jk}), (E\ref{j}), or (E\ref{k}) holds for }\tau'\}}\right]\\
  &\leq\sum_{\uu,\xx,\yy,\zz}Q(\uu,\xx,\yy,E\vert_\zz)\bigl(\PP[\text{(E\ref{i}) holds for }\tau'\vert U_i=\uu,X_{ij}=\xx,Y_{ik}=\yy]\\
  &\qquad\qquad\qquad\qquad+\PP[\text{(E\ref{jk}) holds for }\tau'\vert U_i=\uu,X_{ij}=\xx,Y_{ik}=\yy]\\
  &\qquad\qquad\qquad\qquad+\PP[\text{(E\ref{j}) holds for }\tau'\vert U_i=\uu,X_{ij}=\xx,Y_{ik}=\yy]\\
  &\qquad\qquad\qquad\qquad+\PP[\text{(E\ref{k}) holds for }\tau'\vert U_i=\uu,X_{ij}=\xx,Y_{ik}=\yy]\bigr).
\end{align*}
By the half-lattice property and Lemma \ref{hitlemma}, the above term can be upper-bounded by
\begin{align}
&M_0M_1M_2\max_\zz\PP[(U_1,X_{11}^{\tau_1'},Y_{11}^{\tau_2'})\in E^{\tau'}\rvert_\zz]\label{m0m1m2}\\
  &\quad+M_1M_2\max_{\zz,\uu}\PP[(X_{11}^{\tau_1'},Y_{11}^{\tau_2'})\in E^{\tau'}\rvert_{(\zz,\uu)}\vert U_1=\uu]\label{m1m2}\\
  &\quad+M_1\max_{\zz,\uu,\yy}\PP[X_{11}^{\tau_1'}\in E^{\tau'}\rvert_{(\zz,\uu,\yy)}\vert U_1=\uu]\label{m1}\\
  &\quad+M_2\max_{\zz,\uu,\xx}\PP[Y_{11}^{\tau_2'}\in E^{\tau'}\rvert_{(\zz,\uu,\xx)}\vert U_1=\uu].\label{m2}
\end{align}
It remains to bound the expressions \eqref{m0m1m2}-\eqref{m2}. For every $W\in \W_{\tau'}$, let the random vector $(U,X_{\tau_1'},Y_{\tau_2'},Z_W)$ have distribution $p_W$. We use Lemma \ref{Igor} a) and \ref{unionlemma} from the Appendix to bound \eqref{m0m1m2} by
\[
  M_0M_1M_2\,2^{-n(\inf_{W'\in \W_{\tau'}} I(Z_W; U,X_{\tau_1'},Y_{\tau_2'})-\zeta_1)}.
\]
This equals 
\begin{equation}\label{m0m1m2a}
  M_0M_1M_2\,2^{-n(\inf_{W'\in \W_{\tau'}} I(Z_W; X_{\tau_1'},Y_{\tau_2'})-\zeta_1)}
\end{equation}
because the sequence $(U,[X_{\tau_1},Y_{\tau_2}],Z_W)$ forms a Markov chain. Here, $\zeta_1$ is an error term which depends on $\delta$ and which converges to zero as $\delta$ tends to zero. Using Lemmas \ref{Igor} b) and Lemma \ref{unionlemma} from the Appendix, we see that the terms in \eqref{m1m2}-\eqref{m2} can be bounded by 
\begin{align}
  M_1M_2&\,2^{-n(\inf_{W'\in \W_{\tau'}}I(Z_W;X_{\tau_1'},Y_{\tau_2'}\vert U)-\zeta_2)},\label{m1m2a}\\
     M_1&\,2^{-n(\inf_{W'\in \W_{\tau'}}I(Z_W,Y_{\tau_2'};X_{\tau_1'}\vert U)-\zeta_3)},\label{m1a}\\
     M_2&\,2^{-n(\inf_{W'\in \W_{\tau'}}I(Z_W,X_{\tau_1'};Y_{\tau_2'}\vert U)-\zeta_4)},\label{m2a}
\end{align}
respectively. Here, again, $\zeta_2,\zeta_3,\zeta_4$ depend on $\delta$ and converge to zero as $\delta$ tends to zero. The bounds in \eqref{m1a} and \eqref{m2a} can be reduced to 
\begin{align}
  M_1&\,2^{-n(\inf_{W'\in \W_{\tau'}}I(Z_W;X_{\tau_1'}\vert Y_{\tau_2'},U)-\zeta_3)},\label{m1aa}\\
  M_2&\,2^{-n(\inf_{W'\in \W_{\tau'}}I(Z_W;Y_{\tau_2'}\vert X_{\tau_1'},U)-\zeta_4)}.\label{m2aa}
\end{align}
For \eqref{m1aa}, this follows from
\[
  I(Z_W,Y_{\tau_2'};X_{\tau_1'}\vert U)=I(Y_{\tau_2'};X_{\tau_1'}\vert U)+I(Z_W;X_{\tau_1'}\vert Y_{\tau_2'},U)=I(Z_W;X_{\tau_1'}\vert Y_{\tau_2'},U),
\]
where the chain rule for mutual information was used and the fact that $X_{\tau_1'}$ and $Y_{\tau_2'}$ are conditionally independent given $U$. The bound \eqref{m2aa} follows in an analogous way. Collecting \eqref{ersteabsch} and, for each $(\tau_1',\tau_2')\in T_1\times T_2$, the bounds \eqref{m0m1m2a}, \eqref{m1m2a}, \eqref{m1aa}, and \eqref{m2aa}, we obtain an upper bound for the mean maximal error defined in \eqref{meanmax} of
\begin{align*}
  &(n+1)^{\lvert \X\rvert\lvert \Y\rvert\lvert \Z\rvert\lvert \U\rvert}2^{-nc\delta^2}\\
  &+\lvert T_1\rvert\lvert T_2\rvert M_0M_1M_2\,2^{-n(\min_{\tau'\in T_1\times T_2}\inf_{W'\in \W_{\tau'}} I(Z_W; X_{\tau_1'},Y_{\tau_2'})-\zeta_1)}\\
  &+\lvert T_1\rvert\lvert T_2\rvert M_1M_2\,2^{-n(\min_{\tau'\in T_1\times T_2}\inf_{W'\in \W_{\tau'}}I(Z_W;X_{\tau_1'},Y_{\tau_2'}\vert U)-\zeta_2)}\\
  &+\lvert T_1\rvert\lvert T_2\rvert M_1\,2^{-n(\min_{\tau'\in T_1\times T_2}\inf_{W'\in \W_{\tau'}}I(Z_W;X_{\tau_1'}\vert Y_{\tau_2'},U)-\zeta_3)}\\
  &+\lvert T_1\rvert\lvert T_2\rvert M_2\,2^{-n(\min_{\tau'\in T_1\times T_2}\inf_{W'\in \W_{\tau'}}I(Z_W;Y_{\tau_2'}\vert X_{\tau_1'},U)-\zeta_4)}.
\end{align*}
Note that this bound is uniform in $W$. It tends to zero exponentially with rate $\tilde\zeta>0$ if
\begin{equation}\label{ratenbed}
\begin{aligned}
  \frac{1}{n}\log(M_0M_1M_2)&<\min_{\tau'\in T_1\times T_2}\inf_{W'\in \W_{\tau'}}I(Z_W; X^{\tau_1'},Y^{\tau_2'})-\zeta_1-\tilde\zeta,\\
  \frac{1}{n}\log(M_1M_2)&<\min_{\tau'\in T_1\times T_2}\inf_{W'\in \W_{\tau'}}I(Z_W;X^{\tau_1'},Y^{\tau_2'}\vert U)-\zeta_2-\tilde\zeta,\\
  \frac{1}{n}\log M_1&<\min_{\tau'\in T_1\times T_2}\inf_{W'\in \W_{\tau'}}I(Z_W;X^{\tau_1'}\vert Y^{\tau_2'},U)-\zeta_3-\tilde\zeta,\\
  \frac{1}{n}\log M_2&<\min_{\tau'\in T_1\times T_2}\inf_{W'\in \W_{\tau'}}I(Z_W;Y^{\tau_2'}\vert X^{\tau_1'},U)-\zeta_4-\tilde\zeta,
\end{aligned}
\end{equation}
for some $\delta>0$.

Now assume that $(R_0,R_1,R_2)$ is contained in $\mathcal{C}_\CM^*(\W,t_1,t_2)$. Hence, there is a $p\in\Pi_1$ such that 
\[
  (R_0,R_1,R_2)\in\bigcap_{(\tau_1',\tau_2')}\;\bigcap_{W'\in \W_{\tau_1'\tau_2'}}\mathcal{R}_\CM(p,\tau_1',\tau_2',W').
\]
For $n$ large, we can find numbers $M_0,M_1,M_2$ satisfying 
\[
  R_\nu-\eps\leq\frac{1}{n}\log M_\nu\leq R_\nu-\frac{\eps}{2}.
\]
Choose $\delta$ and $\tilde\zeta$ such that $\zeta_1\wedge\zeta_2\wedge\zeta_3\wedge\zeta_4+\tilde\zeta\leq\eps/2$. Inserting this in \eqref{ratenbed} establishes the existence of a sequence of random codes whose mean average error converges to 0 with rate $\tilde\zeta$. Hence, for every $(R_0,R_1,R_2)\in\mathcal{C}^*(\W,t_1,t_2)$, one can find random codes according to the procedure described above with rates close to $(R_0,R_1,R_2)$ and with an exponentially small maximum error probability. 

\subsubsection{Extracting a Deterministic Code for $\lvert \W\rvert<\infty$}\label{hendl}

The next step is to extract a deterministic code with the same rate triple and with small average error from the random one. This is easy when $\lvert \W\rvert<\infty$, an approximation argument similar to the one in \cite{BBT} solves the problem for $\lvert \W\rvert=\infty$. So let us first assume that $\lvert \W\rvert<\infty$. For $\tau=(\tau_1,\tau_2)\in T_1\times T_2$ and $W\in \W_\tau$, we define on the underlying probability space $(\Omega,\mathcal{F},\PP)$ the random variable
\[
  P_e^W(\omega):=\frac{1}{M_0M_1M_2}\sum_{i,j,k}W^n(F_{ijk}^c(\omega)\vert X_{ij}^{\tau_1}(\omega),Y_{ik}^{\tau_2}(\omega))
\]
This gives the average error for a channel state $W\in \W_\tau$ and the random code determined by the elementary event $\omega\in\Omega$. For every $W\in \W$ and every $(R_0,R_1,R_2)$ in $\mathcal{C}_\CM^*(\W,t_1,t_2)$, we found above a random code with block length $n$ and message set $[1,M_0^{(n)}]\times[1,M_1^{(n)}]\times[1,M_2^{(n)}]$, and a $\tilde\zeta>0$ such that 
\[
  \mathbb{E}[P_e^W]\leq2^{-n\tilde\zeta}
\]
and $(1/n)\log M_\nu^{(n)}\geq R_\nu-\eps$ for $\nu=0,1,2$ if $n$ is large (the bound on the mean maximum error \textit{a fortiori} also holds for the mean average error). For $0<\zeta<\tilde\zeta$, define the set 
\[
  \Omega_W:=\{\omega\in\Omega:P_e^W(\omega)\leq 2^{-n\zeta}\}.
\]
If $\bigcap_{W\in \W}\Omega_W$ is nonempty, we can infer the existence of a deterministic code$_\CM(n,M_0^{(n)},M_1^{(n)},M_2^{(n)},2^{-n\zeta})$ with exponentially small error probability. And indeed, the Markov inequality implies
\begin{align*}
  \mathbb{P}\Bigl[\bigcap_{W\in \W}\Omega_W\Bigr]&\geq1-\sum_{W\in \W}\mathbb{P}[\Omega_W^c]\\
     &\geq1-2^{n\zeta}\sum_{W\in \W}\mathbb{E}[P_e^W]\\
     &\geq1-\lvert \W\rvert2^{-n(\tilde\zeta-\zeta)}>0,
\end{align*}
so $\bigcap_{W}C_W$ must be nonempty. This proves the existence of a deterministic code$_\CM(n,M_0^{(n)},M_1^{(n)},M_2^{(n)},2^{-n\zeta})$ with exponentially decaying average error probability for every $(R_0,R_1,R_2)\in\mathcal{C}^*(\W,t_1,t_2)$, so this whole set is achievable.

\subsubsection{Approximation for $\lvert \W\rvert=\infty$}

For a positive integer $N$ to be chosen later, we first define an approximating compound discrete memoryless MAC. For every $\tilde W\in \W_N$, $\tilde W(z\vert x,y)$ is a multiple of $(2N\lvert T_1\rvert\lvert T_2\rvert)^{-1}$ for all $x\in \X,y\in \Y,z\in \Z$. Clearly, $\lvert \W_N\rvert\leq(2N\lvert T_1\rvert\lvert T_2\rvert+1)^{\lvert \X\rvert\lvert \Y\rvert\lvert \Z\rvert}$. The following is a slight variation of \cite[Lemma 4]{BBT}.

\begin{lem}\label{BBT}
  For every $N>2\lvert \Z\rvert$, there is a function $f:\W\rightarrow \W_N$ satisfying $f(\W_{\tau})\cap f(\W_{\tau'})=\varnothing$ if $\tau\neq\tau'$ such that for every $W\in \W$,
\begin{align}
  &\lvert W(z\vert x,y)-f(W)(z\vert x,y)\rvert\leq\dfrac{\lvert \Z\rvert}{N},\label{BBT1}\\
  &W(z\vert x,y)\leq\exp\left(\dfrac{2\lvert \Z\rvert^2}{N}\right)f(W)(z\vert x,y).\label{BBT2}
\end{align}
\end{lem}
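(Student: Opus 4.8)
The statement is a quantitative approximation lemma: given the infinite compound MAC $\W$ with its CSIT partitions, I want to replace $\W$ by a finite set $\W_N$ of ``grid'' channels whose entries are multiples of $(2N|T_1||T_2|)^{-1}$, via a map $f$ that respects the partition structure (different $\W_\tau$ land in disjoint images) and such that each $W$ is both additively and multiplicatively close to $f(W)$. Since this is essentially \cite[Lemma 4]{BBT} adapted to the MAC setting, the plan is to mimic that construction, with the one new feature being the disjointness requirement across the finitely many CSIT cells $\W_\tau=\W_{\tau_1\tau_2}$, $\tau\in T_1\times T_2$. Note that these cells form a \emph{finite} partition of $\W$ (there are $|T_1||T_2|$ of them), so I only need to keep $|T_1||T_2|$ images apart.

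\medskip
\textbf{Step 1: coarse rounding per coordinate.} For each $W\in\W$ and each $(x,y,z)$, I would round $W(z|x,y)$ to a nearby multiple of a fine grid. Concretely, work with grid spacing $(2N|T_1||T_2|)^{-1}$: round each entry down (or to the nearest grid point) so that the error in each coordinate is at most the spacing, hence at most $1/(2N|T_1||T_2|)\le 1/N$, and summing over the at most $|\Z|$ values of $z$ gives the additive bound in \eqref{BBT1} — and in fact one can do slightly better and keep it at $|\Z|/N$ with room to spare. The subtlety is that naive rounding may destroy the property of being a probability vector (rows summing to one); this is handled exactly as in \cite{BBT} by adjusting one coordinate of each row to absorb the accumulated rounding error, which changes that coordinate by at most $|\Z|$ times the spacing, still $O(|\Z|/N)$. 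Crucially, to later get \eqref{BBT2} I must make sure no entry is rounded to $0$ when the original was positive — or more precisely, I round \emph{up} to the grid (or add a small grid-multiple floor) so that $f(W)(z|x,y)\ge W(z|x,y)\cdot(\text{something close to }1)^{-1}$; the factor $2$ in $2N|T_1||T_2|$ relative to an $N|T_1||T_2|$ grid is exactly the slack that lets me nudge each entry up by one grid step while staying on the coarser-looking $1/N$-scale error bound.

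\medskip
\textbf{Step 2: the multiplicative bound \eqref{BBT2}.} Once $|W(z|x,y)-f(W)(z|x,y)|\le |\Z|/N$ with $f(W)(z|x,y)\ge W(z|x,y)/(1+\text{error})$, I want $W(z|x,y)\le \exp(2|\Z|^2/N)\,f(W)(z|x,y)$. The standard argument: if $f(W)(z|x,y)\ge$ some threshold (say $\ge 2|\Z|/N$) then the ratio $W/f(W)\le 1+\frac{|\Z|/N}{f(W)}\le 1+\frac12$, fine; the dangerous regime is small entries, which is why the construction must guarantee $f(W)(z|x,y)\ge$ a grid step $\ge 1/(2N|T_1||T_2|)$ whenever $W(z|x,y)>0$, and then crude bounds using $1+t\le e^t$ together with $N>2|\Z|$ give the exponent $2|\Z|^2/N$. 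I would present this as a short inequality chain rather than optimizing constants.

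\medskip
\textbf{Step 3: making the images disjoint across CSIT cells — the main obstacle.} This is the genuinely new ingredient over the single-user \cite{BBT} statement. The cells $\W_{\tau_1\tau_2}$, $(\tau_1,\tau_2)\in T_1\times T_2$, partition $\W$. The idea is to reserve, for each cell index $\tau$, a distinct ``tag'' and encode it into the rounded channel by a further negligible perturbation. One clean way: enlarge the grid by the factor $|T_1||T_2|$ (which is already built into the denominator $2N|T_1||T_2|$!), so that within one $1/(N|T_1||T_2|)$-window there are $|T_1||T_2|$ available sub-grid points; then, after the rounding of Step 1, shift the entry $f(W)(z_0|x_0,y_0)$ at one fixed reference coordinate $(x_0,y_0,z_0)$ (with a compensating shift at another coordinate in the same row to preserve stochasticity) by an amount equal to $\tau$'s index times $1/(2N|T_1||T_2|)$. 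Two channels from different cells then differ in at least that reference coordinate, so $f(\W_\tau)\cap f(\W_{\tau'})=\varnothing$. The perturbation is of size $\le |T_1||T_2|\cdot 1/(2N|T_1||T_2|)=1/(2N)$, which is absorbed into the already-claimed $|\Z|/N$ and $2|\Z|^2/N$ bounds (with $N>2|\Z|$ giving plenty of slack), and it does not create new zeros. I would need to double-check one edge case: that there is always a second coordinate in each row to absorb the compensating shift without going negative — true since probability rows have at least $|\Z|\ge 2$ entries (and if $|\Z|=1$ the channel is trivial, or one invokes $N>2|\Z|$ to keep entries bounded away from $0$ after rounding up). Assembling Steps 1--3 yields $f$ with all the stated properties and the cardinality bound $|\W_N|\le(2N|T_1||T_2|+1)^{|\X||\Y||\Z|}$, which is immediate by counting grid points per coordinate.
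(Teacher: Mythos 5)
The paper itself gives no proof of this lemma (it just points to \cite[Lemma 4]{BBT} and calls the statement a slight variation), so you are reconstructing the argument; your overall shape — grid rounding plus using the extra factor $\lvert T_1\rvert\lvert T_2\rvert$ in the grid to keep the CSIT cells apart — is the right one, but Step 3, which is the only genuinely new ingredient, does not work as written. In Step 1 you round to the \emph{fine} grid of spacing $(2N\lvert T_1\rvert\lvert T_2\rvert)^{-1}$; if you then shift the reference entry by $k_\tau$ fine steps, images of different cells need not be disjoint, because the rounded base value at $(x_0,y_0,z_0)$ already ranges over all fine-grid multiples: a base of $5$ steps with tag $2$ (cell $\tau$) coincides with a base of $4$ steps with tag $3$ (cell $\tau'$), and the remaining coordinates can be arranged to agree as well, so the sentence ``two channels from different cells then differ in at least that reference coordinate'' is unjustified. (Your count is also off: a window of length $1/(N\lvert T_1\rvert\lvert T_2\rvert)$ contains only two fine-grid points; the relevant window is one of length $1/(2N)$.) The repair is to round in Step 1 to the \emph{coarse} grid of spacing $(2N)^{-1}$, i.e. to multiples of $\lvert T_1\rvert\lvert T_2\rvert$ fine steps, and to encode $\tau$ as the residue of the reference entry modulo $\lvert T_1\rvert\lvert T_2\rvert$ in fine-grid units; then $f(W)$ determines $\tau$, which is what disjointness requires. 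The factor $\lvert T_1\rvert\lvert T_2\rvert$ in the denominator is there to create $\lvert T_1\rvert\lvert T_2\rvert$ disjoint cosets of the coarse grid, one per cell, not merely ``room'' for small shifts; the tag and its compensation are below $(2N)^{-1}$, so \eqref{BBT1} and \eqref{BBT2} still have slack.

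A second gap concerns the coordinate that absorbs the rounding excess and the tag compensation. You round all other entries of a row up and subtract the accumulated excess (up to about $\lvert\Z\rvert/(2N)$) from a single entry that is only guaranteed to be at least $1/\lvert\Z\rvert$. Under the stated hypothesis $N>2\lvert\Z\rvert$ this can drive that entry negative (take $\lvert T_1\rvert\lvert T_2\rvert=1$, $\lvert\Z\rvert=10$, $N=21$ and a nearly uniform row: the excess can be about $9/42>1/10$), and even when it stays positive, the loss concentrated at that one coordinate need not satisfy \eqref{BBT2} when $N$ is close to $2\lvert\Z\rvert$. Note also that guaranteeing $f(W)(z\vert x,y)\geq$ one grid step whenever $W(z\vert x,y)>0$ is not what \eqref{BBT2} asks for; what is needed is $f(W)\geq e^{-2\lvert\Z\rvert^2/N}W$ coordinatewise, which rounding up gives for free everywhere \emph{except} at the absorbing and compensating coordinates — so the entire burden of \eqref{BBT2} sits exactly where your sketch says ``crude bounds give the exponent'' without carrying them out. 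To close this under the stated hypothesis you must spread the compensating reduction over several sufficiently large entries (each entry can afford to lose the fraction $1-e^{-2\lvert\Z\rvert^2/N}$ of its mass, and these allowances sum to more than the excess), or content yourself with large $N$, which is all the paper's application uses but is weaker than the lemma as stated.
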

Let $N$ be as in the lemma and let $f_N$ be the corresponding function from $\W$ to $\W_N$. Let $p\in\Pi_1$, $\tau=(\tau_1,\tau_2)\in T_1\times T_2$, and $W\in \W_\tau$. By \eqref{BBT1} and \cite[Lemma 1.2.7]{CK} (which quantifies the uniform continuity of entropy), one has the inequalities
\begin{align*}
  \lvert I(Z_W;X_{\tau_1},Y_{\tau_2})-I(Z_{f_N(W)};X_{\tau_1},Y_{\tau_2})\rvert&\leq-2\frac{\lvert \Z\rvert^3}{N}\log\frac{\lvert \Z\rvert^2}{N},\\
  \lvert I(Z_W;X_{\tau_1},Y_{\tau_2}\vert U)-I(Z_{f_N(W)};X_{\tau_1},Y_{\tau_2}\rvert U)\rvert&\leq-2\frac{\lvert \Z\rvert^3}{N}\log\frac{\lvert \Z\rvert^2}{N},\\
  \lvert I(Z_W;X_{\tau_1}\vert Y_{\tau_2},U)-I(Z_{f_N(W)};X_{\tau_1}\vert Y_{\tau_2},U)\rvert&\leq-2\frac{\lvert \Z\rvert^3}{N}\log\frac{\lvert \Z\rvert^2}{N},\\
  \lvert I(Z_W;Y_{\tau_2}\vert X_{\tau_1},U)-I(Z_{f_N(W)};Y_{\tau_2}\vert X_{\tau_1},U)\rvert&\leq-2\frac{\lvert \Z\rvert^3}{N}\log\frac{\lvert \Z\rvert^2}{N}.
\end{align*}
Now fix a triple $(R_0,R_1,R_2)$ which is contained in the interior of $\mathcal{C}_\CM^*(\W,t_1,t_2)$. The above inequalities imply that for large $N$ it is contained in the interior of $\mathcal{C}_\CM^*(f_N(\W),\tilde t_1,\tilde t_2)$ defined through the channel $(f_N(\W),\tilde t_1,\tilde t_2)$. Here, the necessarily finite partitions $\tilde t_\nu=\{\tilde\W_{\tau_\nu}\subset\W_N:\tau_\nu\in T_\nu\}$ $(\nu=1,2)$ of $\W_N$ are defined by
\[
  \tilde\W_{\tau_\nu}=f_N(\W_{\tau_1}).
\]
recall \eqref{vbar}. These really are partitions by Lemma \ref{BBT}. The achievability result in \ref{hendl} established the existence of codes$_\CM(n,M_0^{(n)},M_1^{(n)},M_2^{(n)},2^{-n\zeta})$ for the compound MAC $(f_N(\W),\tilde t_1,\tilde t_2)$ such that
\[
  \frac{1}{n}\log M_\nu^{(n)}\geq R_\nu+\frac{2\lvert \Z\rvert^3}{3N}\log\frac{\lvert \Z\rvert^2}{N}-\frac{\eps}{2}.
\]
For $N$ large enough, one has $(1/n)\log M_\nu^{(n)}\geq R_\nu-\eps$. Then, the above sequence of codes for $(\W_N,\tilde t_1,\tilde t_2)$ has the desired rates for $(\W,t_1,t_2)$. It remains to bound the average error incurred when applying the codes for transmission over $(\W,t_1,t_2)$. For fixed $n$, let the code$_\CM(n,M_0^{(n)},M_1^{(n)},M_2^{(n)},2^{-n\zeta})$ have the form \eqref{codecmform}. For any $W\in\W_{\tau_1\tau_2}$, \eqref{BBT2} implies that the average error can be bounded by 
\begin{align*}
  \frac{1}{M_0M_1M_2}\sum_{i,j,k}W^n(F_{ijk}^c\vert\x{ij}{\tau_1},\y{ik}{\tau_2})&\leq e^{2n\lvert \Z\rvert^2/N}\frac{1}{M_0M_1M_2}\sum_{i,j,k}f_N(W)^n(F_{ijk}^c\vert\x{ij}{\tau_1},\y{ik}{\tau_2})\\
  &\leq\exp\rund{-n\rund{\zeta\ln 2-\frac{2\lvert \Z\rvert}{N}}}.
\end{align*}
By enlarging $N$ if necessary, this goes to zero as $n$ approaches infinity, so one obtains an exponentially small average probability of error. One checks easily that the existence of a sequence of codes$_\CM(n,M_0^{(n)},M_1^{(n)},M_2^{(n)},2^{-n\zeta})$ with $(1/n)\log M_\nu^{(n)}\geq R_\nu-\eps$ for every $(R_0,R_1,R_2)$ in the interior of $\mathcal{C}_\CM^*(\W,t_1,t_2)$ implies the existence of such a sequence also for the rate triples lying on the boundary of $\mathcal{C}_\CM^*(\W,t_1,t_2)$.

\subsubsection{Convexity and Bound on $\lvert \U\rvert$}

The convexity of $\mathcal{C}^*(\W,t_1,t_2)$ is clear by the concavity of mutual information in the input distributions. The bounds on $\lvert \U\rvert$ follow in the same way as in \cite{Wi1}.

\subsection{The MAC with Conferencing Encoders}

The achievability part of Theorem \ref{thmconf} relies on the achievability part of Theorem \ref{thmcm}. We first define the Willems conferencing functions that will turn out to be optimal for large blocklengths in the course of the proof. Then, we show how Theorem \ref{thmcm} can be applied to design a code$_\CONF$ from a code$_\CM$ using these conferencing functions if certain conditions on the rates are fulfilled. Next, we show that these conditions can be fulfilled. Finally, we show that the average error of the conferencing codes thus defined is small. As in the achievability proof for the MAC with common message, it suffices to assume that the receiver has no CSIR.

\subsubsection{Preliminary Considerations}\label{prelcons}

Let $[1,M_1]$ and $[1,M_2]$ be message sets, let $n$ be a blocklength, and let $C_1,C_2$ be conferencing capacities. If $n$ is large enough, we can construct a pair of simple one-shot Willems conferencing functions (cf. Example \ref{iter}) with these message sets which will be admissible with respect to $n$ and $C_1,C_2$. The blocklength needs to be large enough to ensure the existence of positive integers $V_1,V_2$ with
\begin{equation}\label{optconfcond}
  \frac{1}{n}\log\lvert T_\nu\rvert\leq\frac{1}{n}\log V_\nu\leq C_\nu\qquad(\nu=1,2).
\end{equation}
Then define
\begin{align*}
  \mu_\nu&:=\left\lfloor\frac{V_\nu}{\lvert T_\nu\rvert}\right\rfloor\wedge M_\nu
\intertext{and}
  \xi_\nu&:=\begin{cases}
              \left\lfloor\frac{M_\nu-1}{\mu_\nu-1}\right\rfloor&\quad\text{if }\mu_\nu\geq 2\\
              0                                                 &\quad\text{if }\mu_\nu=1.
            \end{cases}
\end{align*}
Every $\ell_\nu\in[1,M_\nu]$ can be written uniquely as
\begin{equation}\label{messrep}
  \ell_\nu=(i_\nu-1)\xi_\nu+\ell_\nu',
\end{equation}
where $i_\nu\in[1,\mu_\nu]$ and where
\begin{align*}
  \ell_\nu'&\in\begin{cases}
         \eckig{1,\xi_\nu}&\text{ if }i_\nu\leq \mu_\nu-1,\\
         \eckig{1,M_\nu-(\mu_\nu-1)\xi_\nu}&\text{ if }i_\nu=\mu_\nu.
       \end{cases}
\end{align*}
The conferencing function $g_\nu:[1,M_\nu]\times T_\nu\rightarrow [1,\mu_\nu]\times T_\nu$ can now be defined by
\begin{equation}\label{defg}
  g_\nu(\ell_\nu,\tau_\nu)=(i_\nu,\tau_\nu)\quad\text{if }\ell_\nu=(i_\nu-1)\xi_\nu+\ell_\nu'.
\end{equation}
Note that by \eqref{optconfcond},
\begin{equation}\label{confcard}
  \frac{1}{n}\log\lvert [1,\mu_\nu]\times T_\nu\rvert\leq\frac{1}{n}\log V_\nu\leq C_\nu,
\end{equation}
so $g_\nu$ is an admissible one-shot Willems conferencing function.

\subsubsection{Coding for $C_1,C_2>0$}\label{subsect:coding}

Now we show how to construct a code$_\CONF$ using the conferencing functions defined above and the codes$_\CM$ whose existence was proved in \ref{subsect:CMach}. We assume $C_1,C_2>0$. Let $(R_1,R_2)$ be contained in $\mathcal{C}_\CONF^*(\W,t_1,t_2,C_1,C_2)$. Set
\begin{align*}
  \tilde R_\nu&:=R_\nu\wedge C_\nu,\quad&\nu=1,2,\\
  R_\nu'&:=R_\nu-\tilde R_\nu,\quad&\nu=1,2,\\
  R_0'&:=\tilde R_1+\tilde R_2.
\end{align*}
Then $(R_0',R_1',R_2')$ is contained in $\mathcal{C}_\CM^*(\W,t,t)$, defined through $(\W,t,t)$, where the CSIT partition $t$ of both encoders is given by
\begin{equation}\label{gemkan}
  t=\{W_{\tau_1\tau_2}:(\tau_1,\tau_2)\in T_1\times T_2\}.
\end{equation}
One knows by Theorem \ref{thmcm} that for any $\eps>0$, there is a $\zeta=\zeta(\eps)$ such that for large $n$, there is a code$_\CM$ $(n,M_0^{(n)},M_1^{(n)},M_2^{(n)},2^{-n\zeta})$ for $(\W,t,t,r)$ with
\begin{equation}\label{rates}
  R_\nu'\geq\frac{1}{n}\log M_\nu^{(n)}\geq R_\nu'-\eps.
\end{equation}
For fixed $n$, let such a code$_\CM$ have the form
\begin{equation}\label{cmcodeform}
  \{(\tx{ij}{\tau},\ty{ik}{\tau},\tilde F_{ijk}):(i,j,k)\in[1,M_0^{(n)}]\times[1,M_1^{(n)}]\times[1,M_2^{(n)}],\tau\in T_1\times T_2\}.
\end{equation}

In \ref{cases}, we will show that if $n$ is large enough, one can find $M_1,M_2$ and $V_1,V_2$ such that \eqref{optconfcond} is satisfied for $\nu=1,2$ and such that
\begin{align}
  \frac{1}{n}\log \mu_1&\leq \tilde R_1,\label{C_1}\\
  \frac{1}{n}\log \mu_2&\leq \tilde R_2\label{C_2}
\end{align}
and
\begin{align}
  \frac{M_0^{(n)}}{2\lvert T_1\rvert\lvert T_2\rvert}\leq\mu_1\mu_2&\leq M_0^{(n)},\label{R_0}\\
  \xi_1&=M_1^{(n)},\label{R_1}\\
  \xi_2&=M_2^{(n)}\label{R_2}.
\end{align}
Because of the validity of \eqref{optconfcond}, one can carry out the construction of the conferencing functions described in \ref{prelcons}. As noted in \eqref{confcard}, the pair $(g_1,g_2)$ defined in \eqref{defg} for $\nu=1,2$ is an admissible pair of conferencing functions. By \eqref{R_0}-\eqref{R_2}, one can naturally consider the set $[1,\mu_1]\times[1,\mu_2]$ as a subset of $[1,M_0^{(n)}]$ and the sets $[1,\xi_\nu]$ as equal to $[1,M_\nu^{(n)}]$. With $(g_1,g_2)$ and recalling the alternative definition of codes$_\CONF$ right after Definition \ref{codeconf}, one can now define a code$_\CONF$ by a family as in \eqref{codeseq} as follows: assume that $j\in[1,M_1]$ and $k\in[1,M_2]$ have a representation $(i_1,j')$ and $(i_2,k')$ as in \eqref{messrep}. Then
\begin{align}
  \x{jk}{\tau_1\tau_2}&:=\tx{(i_1,i_2)j'}{\tau_1\tau_2},\label{xes}\\
  \y{jk}{\tau_1\tau_2}&:=\ty{(i_1,i_2)k'}{\tau_1\tau_2},\label{yse}
\end{align}
where $(i_1,i_2)$ is to be considered an element of $[1,M_0^{(n)}]$. The decoding sets are defined as
\begin{equation}\label{F_jk}
  F_{jk}:=\tilde F_{(i_1,i_2)j'k'}.
\end{equation}
This code is a code$_\CONF(n,M_1,M_2,C_1,C_2)$ for the compound MAC with conferencing encoders as in Definition \ref{codeconf} because it satisfies \eqref{xe} and \eqref{ys} for the pair of conferencing functions $(g_1,g_2)$. We now show that it also achieves the desired rates. Without loss of generality, one can assume that 
\begin{equation}\label{assuzeta}
  \frac{1}{n}\log(\mu_\nu-1)\geq\frac{1}{n}\log\mu_\nu-\frac{\zeta}{4}\wedge\frac{\eps}{2}
\end{equation}
if $\mu_\nu>1$. We may also assume that 
\begin{equation}\label{also}
  \frac{1}{n}\log(2\lvert T_1\rvert\lvert T_2\rvert)\leq\eps.
\end{equation}
It follows for large enough $n$ from \eqref{rates} and \eqref{R_0}-\eqref{also} and the definition of the $R_\nu'$ that
\begin{equation}\label{M1M2}
\begin{aligned}
  \frac{1}{n}\log M_1M_2  &\geq\frac{1}{n}\log(M_0^{(n)}M_1^{(n)}M_2^{(n)})-\frac{1}{n}\log(2\lvert T_1\rvert\lvert T_2\rvert)-\frac{\zeta}{2}\wedge\eps\\
                        &\geq R_1+R_2-5\eps.
\end{aligned}
\end{equation}
Further by \eqref{C_1}, \eqref{C_2}, \eqref{R_1}, \eqref{R_2}, and \eqref{rates}, for $\nu=1,2$,
\begin{align}\label{M1}
  \frac{1}{n}\log M_\nu\leq\frac{1}{n}\log\mu_\nu+\frac{1}{n}\log\xi_\nu\leq \tilde R_\nu+R_\nu'=R_\nu.
\end{align}
Combining \eqref{M1M2} and \eqref{M1} yields
\begin{align*}
  \frac{1}{n}\log M_\nu&\geq R_\nu-5\eps,
\end{align*}
so the rates are as desired. In Subsection \ref{error}, the average error of this code$_\CONF(n,M_1,M_2,C_1,C_2)$ will be shown to be small, thus finishing the proof of the achievability part of Theorem \ref{thmconf}.

\subsubsection{Finding $M_1,M_2,V_1,V_2$ for $C_1,C_2>0$}\label{cases}

Let a positive integer $n$ be fixed. Without loss of generality, let $0<\eps<\tilde R_1\wedge\tilde R_2$, so again without loss of generality, one can assume $\tilde R_1<(1/n)\log M_0^{(n)}$. We choose
\[
  V_1=\lvert T_1\rvert\left\lfloor\frac{2^{n\tilde R_1}}{\lvert T_1\rvert}\right\rfloor
  \qquad\text{and}\qquad
  V_2=\lvert T_2\rvert\left\lfloor\frac{2^{-n\tilde R_1}M_0^{(n)}}{\lvert T_2\rvert}\right\rfloor.
\]
Hence \eqref{optconfcond} and \eqref{C_1}-\eqref{R_0} are always satisfied. In order to find $M_1$ and $M_2$, three cases need to be distinguished. In all of the cases, it is straightforward to check that \eqref{R_1} and \eqref{R_2} hold.

\textit{Case 1: $\tilde R_\nu=R_\nu$ for $\nu=1,2$.} Then $R_1'=R_2'=0$. Set $M_\nu=V_\nu/\lvert T_\nu\rvert$ for $\nu=1,2$. Then $\mu_\nu=M_\nu$, so $\xi_1=\xi_2=1$.

\textit{Case 2: $\tilde R_\nu=C_\nu$ for $\nu=1,2$.} Choose $M_\nu$ such that
\[
  \left\lfloor\frac{M_\nu-1}{V_\nu/\lvert T_\nu\rvert-1}\right\rfloor=M^{(n)}_\nu
\]
for $\nu=1,2$. Then $\mu_\nu=V_\nu/\lvert T_\nu\rvert$ and $\xi_\nu=M_\nu^{(n)}$.

\textit{Case 3a: $\tilde R_1=C_1$, $\tilde R_2=R_2$.} Then $R_2'=0$ and $R_2\leq C_2$. Choose $M_2=V_2/\lvert T_2\rvert$ and $M_1$ such that
\[
  \xi_1=\left\lfloor\frac{M_1-1}{V_1/\lvert T_1\rvert-1}\right\rfloor=M^{(n)}_1.
\]
Then $\mu_\nu=V_\nu/\lvert T_\nu\rvert$ for both $\nu$, and note that $\xi_1=M_1^{(n)}$ and $\xi_2=1$.

\textit{Case 3b: $\tilde R_2=C_2$, $\tilde R_1=R_1$.} Analogous to case 3.

\subsubsection{The average error for $C_1,C_2>0$}\label{error}

Recall the form \eqref{cmcodeform} of the code$_\CM(n,M_0^{(n)},M_1^{(n)},M_2^{(n)},2^{-n\zeta})$ and the definitions \eqref{xes}-\eqref{F_jk} of the code$_\CONF(n,M_1,M_2,C_1,C_2)$ in \ref{subsect:coding}. We now bound its average error. Let the channel state $W\in \W$ be arbitrary. The code$_\CM$ satisfies
\[
  \frac{1}{M_0^{(n)}M_1^{(n)}M_2^{(n)}}\sum_{i,j',k'}W^n(\tilde F_{ij'k'}^c\vert\tx{ij'}{\tau},\ty{ik'}{\tau})\leq2^{-n\zeta},
\]
where the sum ranges over the message set $[1,M_0^{(n)}]\times[1,M_1^{(n)}]\times[1,M_2^{(n)}]$ of the code$_\CM$. With assumption \eqref{assuzeta} and \eqref{R_0}-\eqref{R_2}, one has
\begin{align*}
  M_1M_2&\geq2^{-n(\zeta/2)+1}\lvert T_1\rvert\lvert T_2\rvert M_0^{(n)}M_1^{(n)}M_2^{(n)}.
\end{align*}
One thus obtains for the average error of the code$_\CONF(n,M_1,M_2,C_1,C_2)$ that
\begin{align*}
  &\mathrel{\hphantom{\leq}}\frac{1}{M_1M_2}\sum_{j,k\in[1,M_1]\times[1,M_2]}W^n(F_{jk}^c\vert\x{jk}{\tau},\y{jk}{\tau})\\
  &\leq\frac{2^{n(\zeta/2)}}{2\lvert T_1\rvert\lvert T_2\rvert}\cdot\frac{1}{M_0^{(n)}M_1^{(n)}M_2^{(n)}}\sum_{(i,j',k')\in[1,M_0^{(n)}]\times[1,M_1^{(n)}]\times[1,M_2^{(n)}]}W^n(\tilde F_{ij'k'}^c\vert\tx{ij'}{\tau},\ty{ik'}{\tau})\\
  &\leq\frac{2^{-n(\zeta-\zeta/2)}}{2\lvert T_1\rvert\lvert T_2\rvert}.
\end{align*}
This proves that the average error of this code$_\CONF(n,M_1,M_2,C_1,C_2)$ is exponentially small. Thus the rate pair $(R_1,R_2)$ is achievable, and this finishes the proof of the achievability part of Theorem \ref{thmconf} for the case $C_1,C_2>0$.

\subsubsection{The case $C_1>0,C_2=0$}

First note that the case $C_1=0,C_2>0$ is analogous to the case $C_1>0,C_2=0$ which is treated here. One can use all the methods used in the case $C_1,C_2>0$ for the first user. An admissible one-shot Willems conferencing function $g_1$ can be constructed as in \ref{prelcons}. Then let $(R_1,R_2)\in\mathcal{C}_{\CONF,1}^*(\W,t_1,t_2,C_1)$. One checks that the triple $(R_0',R_1',R_2')$ defined as in \ref{subsect:coding} is contained in $\mathcal{C}_\CM(\W,t_1,t)$, where $t$ also is defined as in \ref{subsect:coding}. Given a blocklength $n$, one then can find $M_1,M_2,V_1,V_2$ as in \ref{cases}, where only the relevant cases need to be considered. This then defines a good code$_\CONF$.

\subsubsection{Convexity and Bound on $\lvert \U\rvert$}

The convexity of $\mathcal{C}_\CONF(\W,t_1,t_2,r,C_1,C_2)$ is inherited from the convexity of $\mathcal{C}_\CM(\W,t_1,t_2,r)$. Also the bound on the cardinality of the set $\U$ appearing in the parametrization of the rate regions comes from the bound on the range of the auxiliary random variable appearing in the parametrization of the capacity region of the compound MAC with common message.


\section{The Converses}\label{sect:conv}

We will concentrate on the converse for the MAC with conferencing encoders because it requires some non-standard preliminaries. For the converse for the MAC with common message, we only show how to start the proof, the rest is similar to the proof of the MAC with conferencing encoders. For both outer bounds, one assumes perfect CSIR. As we will prove that, fixing a pair of CSIT partitions, this outer bound coincides with the inner bound with no CSIR, this includes all possible permissible types of CSIR.

\subsection{The Converse for the MAC with Conferencing Encoders}

First we define what we mean exactly by the statement that a weak converse holds for $(\W,t_1,t_2,r)$ with codes$_\CONF$.
\begin{defn}
A weak converse holds for the compound MAC $(\W,t_1,t_2,r)$ with codes$_\CONF$ if the average error $\lambda$ of every code$_\CONF(n,M_1,M_2,C_1,C_2,\lambda)$ whose rate pair $((1/n)\log M_1,(1/n)\log M_2)$ is further than $\eps>0$ from $\mathcal{C}_\CONF^*(\W,t_1,t_2,C_1,C_2)$ satisfies $\lambda\geq\lambda(\eps)>0$ if $n$ is large enough. Without loss of generality we measure distance in the $\ell^1$-norm, so the statement that the rate pair of the code is further than $\eps$ from $\mathcal{C}_\CONF^*(\W,t_1,t_2,C_1,C_2)$ can be formulated as 
\begin{equation}\label{ratenabstand}
  \min_{(R_1,R_2)\in\mathcal{C}_\CONF^*(\W,t_1,t_2,C_1,C_2)}\left\{\left\lvert\frac{1}{n}\log M_1-R_1\right\rvert+\left\lvert\frac{1}{n}\log M_2-R_2\right\rvert\right\}\geq\eps.
\end{equation}
\end{defn}
In \ref{subsubsect:auxsit}, we show that the weak converse for the compound MAC with conferencing encoders is implied by the weak converse for an auxiliary compound MAC with different CSIT and a slightly restricted kind of cooperation. CSIR will also be assumed to be perfect for that channel. In \ref{subsubsect:convaux}, we then show that the weak converse holds for this auxiliary MAC. Throughout the section, we will assume that $C_1,C_2>0$. The case of one conferencing capacity being equal to zero is treated analogously.

\subsubsection{An Auxiliary MAC}\label{subsubsect:auxsit}

We now describe the auxiliary MAC. Let $(\W,t_1,t_2,r)$ be given. As we assume perfect CSIR, we may assume $r=\{\{W\}:W\in\W\}$. Let $t_1,t_2$ be CSIT partitions as in \eqref{CSIT} and define the CSIT partition $t$ as in \eqref{gemkan}. Let the channel $(\W,t,t,r)$ be given (symmetric CSIT!). We now define what we mean by a 	code$_\AUX(n,M_1,M_2,\tilde C_1,\tilde C_2)$ for $(\W,t,t,r)$, where $n,M_1,M_2$ are positive integers and $\tilde C_1,\tilde C_2>0$. 

\begin{defn}
A code$_\AUX(n,M_1,M_2,\tilde C_1,\tilde C_2)$ is a quadruple $(\tilde f_1,\tilde f_2,\tilde g,\tilde\Phi)$ of functions which satisfy
\begin{align*}
  \tilde f_1&:[1,M_1]\times\tilde\Gamma\times T_1\times T_2\rightarrow\X^n,\\
  \tilde f_2&:[1,M_2]\times\tilde\Gamma\times T_1\times T_2\rightarrow\Y^n,\\
  \tilde g&:[1,M_1]\times[1,M_2]\rightarrow\tilde\Gamma,\\
  \tilde\Phi&:\Z^n\times R\rightarrow[1,M_1]\times[1,M_2],
\end{align*}
where $\tilde\Gamma$ is a finite set and where $\tilde g$ satisfies
\begin{align}
  \frac{1}{n}\log\lVert\tilde g_{j}\rVert\leq\tilde C_2\qquad\text{for all }j\in[1,M_1],\\
  \frac{1}{n}\log\lVert\tilde g_{k}\rVert\leq\tilde C_1\qquad\text{for all }k\in[1,M_1]
\end{align}
for the functions $\tilde g_{j}$ and $\tilde g_{k}$ defined by $\tilde g_{j}(j,k)=\tilde g_{k}(j,k)=\tilde g(j,k)$. The number $n$ is called the blocklength of the code.
\end{defn}

Thus an auxiliary code is one where only messages are exchanged, and where this is done independently of the CSIT. As codes$_\CONF$, every code$_\AUX$ can also be described by a family analogous to \eqref{codeseq} and a conferencing MAC like the $\tilde g$ from the above definition.

\begin{defn}
The code$_\AUX(n,M_1,M_2,\tilde C_1,\tilde C_2)$ is a code$_\AUX(n,M_1,M_2,\tilde C_1,\tilde C_2,\lambda)$ if
\[
  \max_{(\tau_1,\tau_2)\in T_1\times T_2}\;\sup_{W\in\W_{\tau_1\tau_2}}\frac{1}{M_1M_2}\sum_{j,k}W^n\bigl((\tilde F_{jk}^W)^c\vert\tx{jk}{\tau},\ty{jk}{\tau}\bigr)\leq\lambda.
\]
\end{defn}

In Subsubsection \ref{subsubsect:convaux} we will show a weak converse for $(\W,t,t,r)$ with codes$_\AUX$:
\begin{lem}\label{weakconvaux}
  Let a code$_\AUX(n,M_1,M_2,C_1+\delta,C_2+\delta,\lambda)$ be given with
\begin{equation}\label{l1aux}
  \min_{(R_1,R_2)\in\mathcal{C}_\CONF^*(\W,t,t,C_1+\delta,C_2+\delta)}\left\{\left\lvert\frac{1}{n}\log M_1-R_1\right\rvert+\left\lvert\frac{1}{n}\log M_1-R_1\right\rvert\right\}\geq\eps
\end{equation}
for some $\eps>0$. Then there is a $\lambda(\eps,\delta)>0$ such that $\lambda\geq\lambda(\eps,\delta)$ for sufficiently large $n$.
\end{lem}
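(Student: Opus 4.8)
The plan is to prove Lemma \ref{weakconvaux} by a Fano-type argument applied to a code$_\AUX$. The key structural feature to exploit is that, because $\tilde g$ depends only on the message pair $(j,k)$ (not on CSIT), the output of the conferencing MAC can be treated as a genuine common message that both encoders compute deterministically from $(j,k)$. First I would fix a code$_\AUX(n,M_1,M_2,C_1+\delta,C_2+\delta,\lambda)$ and assume, for contradiction, that $\lambda$ is small. Introduce messages $J,K$ drawn independently and uniformly from $[1,M_1]$ and $[1,M_2]$, set $G:=\tilde g(J,K)\in\tilde\Gamma$, and consider the ``bad-state'' channel: for the purpose of the converse it suffices to fix, for each $\tau=(\tau_1,\tau_2)$, the single worst state $W_\tau\in\W_{\tau_1\tau_2}$ (taking a supremum at the end, or using a measurable selection / limiting argument since $\sup$ may not be attained). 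With perfect CSIR the receiver knows the true state; by averaging we may further assume the receiver knows $\tau$.

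\textbf{The standard single-letterization.} For a fixed $\tau$ and the associated state $W_\tau$, Fano's inequality applied to $\tilde\Phi$ gives $H(J,K\mid Z^n,\tau)\leq n\epsilon_n$ with $\epsilon_n\to 0$ as $\lambda\to 0$. Then I would run the usual three bounding chains. For $R_1$: $n\tfrac1n\log M_1 = H(J)=H(J\mid K,G)+I(J;G\mid K)\le I(J;Z^n\mid K,G)+n\epsilon_n + H(G\mid K)$, and by the rate constraint $\tfrac1n H(\tilde g_k)\le C_1+\delta$ we get $H(G\mid K)\le n(C_1+\delta)$; the term $I(J;Z^n\mid K,G)$ single-letterizes in the standard MAC way to $\sum_m I(X_m;Z_m\mid Y_m,U_m)$ with $U_m$ an appropriate time-sharing/common auxiliary built from $(G,K,Z^{m-1})$ — crucially, given $G$ both $X_m=\tilde f_1(J,G,\tau)_m$ and $Y_m$ are determined so conditional independence of $X_m,Y_m$ given $U_m$ holds. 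Symmetrically for $R_2$ with $C_2+\delta$. For the sum rate I would produce \emph{both} bounds $R_1+R_2\le I(Z;X,Y\mid U)+C_1+C_2+2\delta$ and $R_1+R_2\le I(Z;X,Y)$ (the latter by not conditioning on $U=G$ at all, since $G$ is a function of $(J,K)$: $H(J,K)\le I(J,K;Z^n)+n\epsilon_n$). Introduce a uniform time-sharing variable $Q$ and absorb it into $U$; this yields that the rate pair lies within $\epsilon_n$-slack of $\mathcal{R}_\CONF(p,W_\tau,C_1+\delta,C_2+\delta)$ for the induced input family $p$.

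\textbf{Handling the compound structure and the cardinality bound.} Here is where care is needed and where I expect the main obstacle. The single-letterization above produces, a priori, a \emph{different} auxiliary $U$ and a \emph{different} input family for each $\tau$, whereas membership in $\mathcal{C}_\CONF^*$ requires one family $p\in\Pi_2$ that works \emph{simultaneously} for all $\tau$ inside the intersection $\bigcap_\tau\bigcap_{W\in\W_\tau}\mathcal{R}_\CONF(p,W,\cdot)$. The resolution — the same one used implicitly throughout this literature — is that the \emph{joint} distribution of $(U,X,Y)$ is in fact determined by the code independently of $\tau$ once we take $U$ to encode the common conferencing output $G$ together with the time index: the conditional laws $p_{1\tau_1\tau_2}(x\mid u), p_{2\tau_1\tau_2}(y\mid u)$ are read off from the encoders' codeword symbols, and the dependence on $\tau$ is exactly the allowed dependence in the definition of $\Pi_2$. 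Thus a single $p\in\Pi_2$ emerges, and for each $\tau$ and each $W\in\W_{\tau}$ the rate pair is within $\epsilon_n$ of $\mathcal{R}_\CONF(p,W,C_1+\delta,C_2+\delta)$; taking the intersection and then the union over $p$ shows the rate pair is within $\epsilon_n$ of $\mathcal{C}_\CONF^*(\W,t,t,C_1+\delta,C_2+\delta)$. Finally, a Fenchel–Eggleston/Carath\'eodory argument (as cited for $|\U|$ in Theorem \ref{thmconf}) lets us replace $U$ by one of bounded cardinality without shrinking the region, so the region is closed and the $\epsilon_n$-neighborhood statement contradicts \eqref{l1aux} once $\epsilon_n<\epsilon$; hence $\lambda$ cannot be made smaller than some $\lambda(\epsilon,\delta)>0$. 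I would also note that the infimum over $W\in\W_\tau$ (rather than a minimum) forces a small extra approximation — pick $W_\tau$ achieving the sup in the error within $1/n$ — which does not affect the limit.
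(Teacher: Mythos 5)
Your proposal follows essentially the same route as the paper's proof: Fano's inequality, splitting off the conferencing contribution via the rate constraints on $\tilde g$ (giving $C_1+\delta$, $C_2+\delta$ and their sum), standard MAC single-letterization with the auxiliary variable built from the time index together with the conference output $G$, and the key observation that this produces a single family $p\in\Pi_2$ whose only $\tau$-dependence sits in the conditional input laws, so that the resulting bounds contradict the $\eps$-separation \eqref{l1aux} unless $\lambda$ stays bounded away from zero. The differences are minor: the paper replaces your closedness/Carath\'eodory endgame by an explicit three-case comparison that extracts $\lambda(\eps,\delta)$ quantitatively, and it inserts a dedicated lemma (with error term $h(2\lambda)+2\lambda\log(M_1M_2)$) to justify replacing the message variables by the possibly non-distinct codeword variables, a step your sketch performs implicitly through data processing.
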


We will show this lemma in \ref{subsubsect:convaux}. This together with the next lemma shows a weak converse as claimed in Theorem \ref{thmconf}.

\begin{lem}\label{diffsit}
For every $\delta>0$ there exists a positive integer $n_0$ such that for every $n\geq n_0$ and every code$_\CONF$ $(n,M_1,M_2,C_1,C_2,\lambda)$ for $(\W,t_1,t_2,r)$ there is a code$_\AUX(n,M_1,M_2,C_1+\delta,C_2+\delta,\lambda)$ for $(\W,t,t,r)$.
\end{lem}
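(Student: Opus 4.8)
The plan is to obtain the auxiliary code by \emph{simulating} the given code$_\CONF$, exploiting the only structural gain that the auxiliary channel offers: in $(\W,t,t,r)$ each encoder is handed the joint state information $(\tau_1,\tau_2)$, whereas in $(\W,t_1,t_2,r)$ encoder $\nu$ knows only $\tau_\nu$. Let $(f_1,f_2,g,\Phi)$ be a code$_\CONF(n,M_1,M_2,C_1,C_2,\lambda)$ for $(\W,t_1,t_2,r)$. Since perfect CSIR is assumed throughout the converse I keep $\tilde\Phi:=\Phi$ and keep $M_1,M_2$ unchanged, so the whole problem reduces to producing an admissible auxiliary conference $\tilde g$ together with encoders $\tilde f_1,\tilde f_2$ that reproduce, for every message pair and every CSIT pair, the codewords and decoding sets of the original code.

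For the construction I would choose $\tilde g(j,k)$ to be a label from which, once $(\tau_1,\tau_2)$ is known, the value $g(j,k,\tau_1,\tau_2)$ can be recovered, and then set $\tilde f_1(j,\tilde g(j,k),\tau_1,\tau_2):=f_1(j,g(j,k,\tau_1,\tau_2),\tau_1)$ and $\tilde f_2(k,\tilde g(j,k),\tau_1,\tau_2):=f_2(k,g(j,k,\tau_1,\tau_2),\tau_2)$; both assignments are legitimate because the auxiliary encoders have access to $(\tau_1,\tau_2)$ and to $\tilde g(j,k)$. By Remark~\ref{functdescr} the resulting code then emits, for each $(j,k)$ and $(\tau_1,\tau_2)$, exactly the codewords $\x{jk}{\tau_1\tau_2},\y{jk}{\tau_1\tau_2}$ and exactly the decoding sets $F_{jk}^\rho$ of $(f_1,f_2,g,\Phi)$. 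Hence for every $(\tau_1,\tau_2,\rho)$ and every $W\in\W_{\tau_1\tau_2}\cap\W_\rho$ its average error coincides with that of the original code$_\CONF$ and is therefore $\le\lambda$, and the rate pair $((1/n)\log M_1,(1/n)\log M_2)$ is unchanged. So once $\tilde g$ is fixed, nothing else has to be checked.

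Everything therefore hinges on choosing $\tilde g$ within the conference budget, i.e.\ on arranging $\tfrac1n\log\lVert\tilde g_j\rVert\le C_2+\delta$ for all $j$ and $\tfrac1n\log\lVert\tilde g_k\rVert\le C_1+\delta$ for all $k$; this is the only step where the enlargement by $\delta$ and the finiteness of $T_1,T_2$ are genuinely used. The naive choice $\tilde g(j,k):=(g(j,k,\tau_1,\tau_2))_{(\tau_1,\tau_2)\in T_1\times T_2}$ is useless, since for fixed $j$ that tuple can run over up to $(2^{nC_2})^{\lvert T_1\rvert\lvert T_2\rvert}$ distinct values as $k$ varies, far beyond the budget. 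The way around this is to normalize the conference first: I would show that, at the expense of replacing $(C_1,C_2)$ by $(C_1+\delta,C_2+\delta)$ and without touching $M_1,M_2$ or $\lambda$, one may assume $g$ has the ``separated'' Willems shape $g(j,k,\tau_1,\tau_2)=\bigl((\iota_1(j),\tau_1),(\iota_2(k),\tau_2)\bigr)$ used in the achievability proof of Subsection~\ref{subsect:coding}, where $\iota_\nu:[1,M_\nu]\to[1,\mu_\nu]$ has range bounded by $2^{nC_\nu}/(2\lvert T_1\rvert\lvert T_2\rvert)$; the extra $\delta$ is precisely what pays for the bookkeeping term $\tfrac1n\log(2\lvert T_1\rvert\lvert T_2\rvert)$, which drops below $\delta$ once $n\ge n_0(\delta)$. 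For such a $g$ one simply puts $\tilde g(j,k):=(\iota_1(j),\iota_2(k))$: then $\lVert\tilde g_j\rVert\le\mu_2\le 2^{n(C_2+\delta)}$ and $\lVert\tilde g_k\rVert\le\mu_1\le 2^{n(C_1+\delta)}$, and each auxiliary encoder, knowing $(\tau_1,\tau_2)$, reconstitutes $g(j,k,\tau_1,\tau_2)$ from $(\iota_1(j),\iota_2(k))$ and applies $f_\nu$ as above, so the inheritance of the error goes through verbatim. I expect this normalization step to be the genuinely hard point, because a general conference $g$ can entangle the message indices and the state indices in a way that is not literally separable; the decisive leverage is that the partitions $t_1,t_2$ are finite, so the cost of disentangling is only a $\tfrac1n\log(\mathrm{const})$ term that $\delta$ can absorb.
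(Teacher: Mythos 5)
Your simulation of the encoders and the decoder (keep the codewords, the decoding sets, $M_1,M_2$, hence the error probability) is exactly what the paper does, and that part is fine. The divergence is at the only substantive point, the choice of $\tilde g$, and there your proposal does not close the lemma. The paper's own proof takes precisely the choice you discard as ``useless'': it sets $\tilde g(j,k)=(g(j,k,\tau_1,\tau_2))_{(\tau_1,\tau_2)\in T_1\times T_2}$, lets $\delta$ absorb only the term $\frac{1}{n}\log(\lvert T_1\rvert\lvert T_2\rvert)$ via \eqref{assn}, and asserts that \eqref{g1} and \eqref{g2} then yield $\frac{1}{n}\log\lVert\tilde g_j\rVert\leq C_2+\delta$ and $\frac{1}{n}\log\lVert\tilde g_k\rVert\leq C_1+\delta$; that observation \emph{is} the lemma in the paper. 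Your counting worry about this tuple map is a fair one to raise against that assertion: \eqref{g1} confines, for fixed $j$, the coordinates belonging to a fixed $\tau_1$ to a set of at most $2^{nC_2}$ values, which by itself does not bound the number of distinct tuples by $\lvert T_1\rvert\lvert T_2\rvert\,2^{nC_2}$. But voicing the objection does not prove the lemma, and your substitute argument does not either.

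The step you lean on --- that one may first ``normalize'' an arbitrary admissible $g$ into the separated one-shot Willems form $g(j,k,\tau_1,\tau_2)=\bigl((\iota_1(j),\tau_1),(\iota_2(k),\tau_2)\bigr)$ with the same $n$, $M_1$, $M_2$ and the same error $\lambda$, at a cost of only $\delta$ --- is exactly the hard content, and it is nowhere proved. It is not a bookkeeping item that the finiteness of $T_1,T_2$ pays for: replacing $g$ changes what each encoder learns, so the codewords in \eqref{codeseq}, and with them the error probability, change as well; an entangled conference of the very kind you describe (revealing different parts of $k$ to encoder 1 depending on $\tau_2$) has no evident separated surrogate at the same blocklength and error. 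In substance your normalization claim is a code-by-code version of the statement that generalized conferencing reduces to one-shot Willems conferencing, i.e. of what Theorem \ref{thmconf} is proving, so invoking it inside the converse is circular; moreover the ranges $\mu_\nu\approx2^{nC_\nu}/(2\lvert T_1\rvert\lvert T_2\rvert)$ you quote come from the achievability construction of Subsection \ref{subsect:coding}, where the code is built around such a conference, not extracted from a given one. To repair the proposal you must either justify a range bound for a CSIT-independent conference variable $G$ of the kind the converse in \ref{subsubsect:convaux} actually consumes (namely $H(G\vert S_2)\lesssim nC_1$, $H(G\vert S_1)\lesssim nC_2$, $H(G)\lesssim n(C_1+C_2)$), or prove the normalization lemma outright; as written, the central step is missing.
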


\begin{proof}[Deduction of the weak converse for Theorem \ref{thmconf} from the weak converse for the auxiliary MAC]
  Before proving Lemma \ref{diffsit}, we show how it implies a weak converse for $(\W,t_1,t_2,r)$ with codes$_\CONF$. Assume that the code$_\CONF(n,M_1,M_2,C_1,C_2,\lambda)$ for $(\W,t_1,t_2,r)$ satisfies \eqref{ratenabstand}. Let $\delta>0$ be arbitrary. By Lemma \ref{diffsit}, for this code$_\CONF$, there is a code$_\AUX(n,M_1,M_2,C_1+\delta,C_2+\delta,\lambda)$ for $(\W,t,t,r)$. As $\mathcal{C}_\CONF^*(\W,t_1,t_2,\tilde C_1,\tilde C_2)=\mathcal{C}_\CONF^*(\W,t,t,\tilde C_1,\tilde C_2)$ for all $\tilde C_1,\tilde C_2$, Lemma \ref{weakconvaux} implies that $\lambda\geq\lambda(\eps,\delta)>0$ for large $n$. This implies the desired weak converse for $(\W,t_1,t_2,r)$ with codes$_\CONF$.
\end{proof}

\begin{proof}[Proof of Lemma \ref{diffsit}]
Let a code$_\CONF(n,M_1,M_2,C_1,C_2,\lambda)$ for $(\W,t_1,t_2,r)$ be given which has the form \eqref{codeseq} and which uses the conferencing function $g$. Without loss of generality, assume that 
\begin{equation}\label{assn}
  \frac{1}{n}\log \lvert T_1\rvert\lvert T_2\rvert\leq\delta.
\end{equation}
Set $\tilde\Gamma:=\Gamma\times T_1\times T_2$ and define
\[
  \pi_{\tau_1\tau_2}:\tilde\Gamma\rightarrow\Gamma
\]
to be the projection of $\tilde\Gamma$ onto $\Gamma\times\{(\tau_1,\tau_2)\}$. Further, define a conferencing MAC $\tilde g:[1,M_1]\times [1,M_2]\rightarrow\tilde\Gamma$ by
\[
  \tilde g(j,k)=(g(j,k,\tau_1,\tau_2))_{(\tau_1,\tau_2)\in T_1\times T_2}.
\]
As $g$ is the conferencing MAC of the code$_\CONF(n,M_1,M_2,C_1,C_2,\lambda)$, one obtains 
\begin{align*}
  \frac{1}{n}\log\lVert\tilde g_j\rVert&\leq C_2+\delta&\text{for all }j\in [1,M_1],\\
  \frac{1}{n}\log\lVert\tilde g_k\rVert&\leq C_1+\delta&\text{for all }k\in[1,M_2].
\end{align*}
This together with \eqref{assn} implies that $\tilde g$ is admissible for a code$_\AUX$ with conferencing capacities $C_\nu+\delta$. Further set $\tx{jk}{\tau_1\tau_2}:=\x{jk}{\tau_1\tau_2}$ and $\ty{jk}{\tau_1\tau_2}:=\y{jk}{\tau_1\tau_2}$ and $\tilde F_{jk}:=F_{jk}$. One checks immediately that the code thus defined is a code$_\AUX(n,R_1,R_2,C_1+\delta,C_2+\delta,\lambda)$ for $(\W,t,t,r)$. This proves the lemma.
\end{proof}

\subsubsection{The Weak Converse for the Auxiliary MAC}\label{subsubsect:convaux}

Here we prove Lemma \ref{weakconvaux}. Let $\delta>0$ be arbitrary and set
\[
  \tilde C_\nu:=C_\nu+\delta.
\]
Let a code$_\AUX(n,M_1,M_2,\tilde C_1,\tilde C_2,\lambda)$ be given which satisfies \eqref{l1aux}. We must show that there exists a $\lambda(\eps,\delta)$ such that $\lambda\geq\lambda(\eps,\delta)$ for large $n$.

Assume that the above code$_\AUX(n,M_1,M_2,\tilde C_1,\tilde C_2,\lambda)$ has the form
\[
 \{(\tx{jk}{\tau_1\tau_2},\ty{jk}{\tau_1\tau_2},\tilde F_{jk}^\rho:(\tau_1,\tau_2,\rho)\in T_1\times T_2\times R\},
\]
and uses the conferencing MAC $\tilde g$. We may assume that $\lambda\leq 1/4$, because otherwise, we are done. Consider a probability space $(\Omega,\mathcal{F},\PP)$ on which the following random variables are defined:
\begin{itemize}
  \item $(S_1,S_2)$ is uniformly distributed on $[1,M_1]\times[1,M_2]$,
  \item $G=\tilde g(S_1,S_2)$,
  \item for each $\tau=(\tau_1,\tau_2)\in T_1\times T_2$,
\[
  X^{\tau}=\tx{S_1S_2}{\tau},\quad Y^{\tau}=\ty{S_1S_2}{\tau},
\]
  \item for each $W\in\W_{\tau}$ a $Z^W$ taking values in $\Z^n$ such that for $\xx\in \X^n$, $\yy\in \Y^n$, $(j,k)\in[1,M_1]\times[1,M_2]$, and $\tilde\gamma\in\tilde\Gamma$,
\[
  \PP[Z^W=\zz\vert X^{\tau}=\xx,Y^{\tau}=\yy,S_1=j,S_2=k,G=\tilde\gamma]=W^n(\zz\vert\xx,\yy).
\]
\end{itemize}
Fix a $\tau\in T_1\times T_2$ and a $W\in\W_{\tau}$. By Fano's inequality,
\[
  H(S_1,S_2\vert Z^W)\leq\lambda\log(M_1M_2-1)+h(\lambda)=:\Delta_1,
\]
where $h$ denotes binary entropy. By the chain rule for entropy,
\begin{equation}\label{Fano}
  \Delta_1\geq H(S_1,S_2\vert Z^W)\geq H(S_1,S_2\vert Z^W,G)\geq H(S_1\vert S_2,Z^W,G)\vee H(S_2\vert S_1,Z^W,G).
\end{equation}
(Several rules for calculating with entropy are collected in \cite[Chapter 1.3]{CK}.) Using \eqref{Fano}, $M_1$ can be bounded via
\begin{equation}\label{boundR_1}
\begin{aligned}
  \log M_1&=H(S_1\vert S_2)\\
          &= I\rund{\left.S_1;Z^W,G\right\vert S_2}+H(S_1\vert S_2,Z^W,G)\\
          &\leq I\rund{\left.S_1;Z^W,G\right\vert S_2}+\Delta_1.
\end{aligned}
\end{equation}
One obtains an analogous bound on $M_2$,
\begin{equation}
  \log M_2\leq I\rund{\left.S_2;Z^W,G\right\vert S_1}+\Delta_1.\label{boundR_2}
\end{equation}
For $M_1M_2$ one has the bounds
\begin{equation}\label{boundsum}
\begin{aligned}
  \log M_1M_2&=H(S_1,S_2\vert G)\\
             &= I\rund{S_1,S_2;Z^W,G}+H(S_1,S_2\vert Z^W,G)\\
             &\leq I\rund{S_1,S_2;Z^W,G}+\Delta_1
\end{aligned}
\end{equation}
and
\begin{equation}\label{boundsum2}
\begin{aligned}
  \log M_1M_2&=H(S_1,S_2)\\
             &= I\rund{S_1,S_2;Z^W}+H(S_1,S_2\vert Z^W)\\
             &\leq I\rund{S_1,S_2;Z^W}+\Delta_1.
\end{aligned}
\end{equation}
Using the chain rule, one splits up the mutual information terms in the bounds \eqref{boundR_1}-\eqref{boundsum} into two terms each such that the channel only appears in the second one:
\begin{align*}
  I\rund{\left.S_1;Z^W,G\right\vert S_2}&=I\rund{\left.S_1;G\right\vert S_2}+I\rund{\left.S_1;Z^W\right\vert S_2,G},\\
  I\rund{\left.S_2;Z^W,G\right\vert S_1}&=I\rund{\left.S_2;G\right\vert S_1}+I\rund{\left.S_2;Z^W\right\vert S_1,G},\\
  I\rund{S_1,S_2;Z^W,G}&=I\rund{S_1,S_2;G}+I\rund{S_1,S_2;Z^W\vert G}.
\end{align*}
These mutual information terms and the one in \eqref{boundsum2} are bounded successively in the following. First, the terms not depending on the channel are considered. By the properties of $\tilde g$, if the value of $S_2$ is given, the random variable $G$ can assume at most $2^{n\tilde C_1}$ values, hence
\begin{align*}
  I\rund{\left.S_1;G\right\vert S_2}\leq\tilde C_1.
\end{align*}
An analogous argument shows
\[
  I\rund{\left.S_2;G\right\vert S_1}\leq\tilde C_2.
\]
Finally, as in Remark \ref{rem:detmac}, one sees that $H(G)\leq\tilde C_1+\tilde C_2$, so
\[
  I\rund{S_1,S_2;G}\leq\tilde C_1+\tilde C_2.
\]

Next we treat the remaining mutual information terms. Recall that for $(j,k)\neq(j',k')$, the corresponding codewords do not need to be distinct. This is a problem when $S_1,S_2$ are to be replaced by $X^\tau,Y^\tau$ in the expressions. Define $\Delta_2:=h(2\lambda)+2\lambda\log(M_1M_2)$. We next show that
\begin{align}
  I\rund{\left.S_1;Z^W\right\vert S_2,G}&\leq I\rund{\left.X^{\tau};Z^W\right\vert Y^{\tau},G}+\Delta_2,\label{I11}\\
  I\rund{\left.S_2;Z^W\right\vert S_1,G}&\leq I\rund{\left.Y^{\tau};Z^W\right\vert X^{\tau},G}+\Delta_2,\\
  I\rund{S_1,S_2;Z^W\vert G}&\leq I\rund{\left.X^{\tau},Y^{\tau};Z^W\right\vert G}+\Delta_2,\\
  I\rund{S_1,S_2;Z^W}&\leq I\rund{X^{\tau},Y^{\tau};Z^W}+\Delta_2.\label{I22}
\end{align}
This allows us to do the replacement and to control the error incurred by the replacement. In order to show \eqref{I11}-\eqref{I22}, we write
\begin{align*}
  I(Z^W;S_1\vert S_2,G)&=H(Z^W\vert S_2,G)-H(Z^W\vert S_1,S_2,G),\\
  I(Z^W;S_2\vert S_1,G)&=H(Z^W\vert S_1,G)-H(Z^W\vert S_1,S_2,G),\\
  I(Z^W;S_1,S_2\vert G)&=H(Z^W\vert G)-H(Z^W\vert S_1,S_2,G),\\
  I(Z^W;S_1,S_2)       &=H(Z^W)-H(Z^W\vert S_1,S_2).
\end{align*}
One has $H(Z^W\vert S_2,G)\leq H(Z^W\vert Y^\tau,G)$ and $H(Z^W\vert S_1,G)\leq H(Z^W\vert X^\tau,G)$, as $(X^\tau,G)$ is a function of $(S_1,G)$ and $(Y^\tau,G)$ is a function of $(S_2,G)$. Thus in order to show \eqref{I11}-\eqref{I22}, we need to bound the distance of $H(Z^W\vert S_1,S_2,G)$ from $H(Z^W\vert X^\tau,Y^\tau,G)$ and of $H(Z^W\vert S_1,S_2)$ from $H(Z^W\vert X^\tau,Y^\tau)$. 

\begin{lem}
  One has 
\begin{align*}
  H(Z^W\vert S_1,S_2,G)&\geq H(Z^W\vert X^\tau,Y^\tau,G)-\Delta_2,\\
  H(Z^W\vert S_1,S_2)&\geq H(Z^W\vert X^\tau,Y^\tau)-\Delta_2.
\end{align*}
\end{lem}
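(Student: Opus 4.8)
The plan is to read off both inequalities directly from the Markov structure of the probability space $(\Omega,\mathcal{F},\PP)$ constructed just above the lemma, using only two observations. First, for the fixed $\tau$ with $W\in\W_\tau$, the random variables $X^\tau=\tx{S_1S_2}{\tau}$, $Y^\tau=\ty{S_1S_2}{\tau}$ and $G=\tilde g(S_1,S_2)$ are all \emph{deterministic} functions of the message pair $(S_1,S_2)$; the circumstance recalled in the text, that distinct message pairs may carry identical codewords, means only that $(S_1,S_2)\mapsto(X^\tau,Y^\tau)$ need not be injective, not that it fails to be a function, so $\sigma(X^\tau,Y^\tau,G)\subseteq\sigma(S_1,S_2)$. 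Second, by the defining property of $Z^W$ in that probability space, conditioned on $(X^\tau,Y^\tau)$ the output $Z^W$ has law $W^n(\cdot\vert X^\tau,Y^\tau)$ regardless of the values of $S_1,S_2,G$, so that $(S_1,S_2,G)-(X^\tau,Y^\tau)-Z^W$ is a Markov chain.

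Granting these, I would argue as follows. Since $X^\tau$ and $Y^\tau$ are functions of $(S_1,S_2)$, adjoining them to the conditioning changes nothing, and the Markov chain then lets one drop $(S_1,S_2,G)$ again:
\[
  H(Z^W\vert S_1,S_2,G)=H(Z^W\vert S_1,S_2,G,X^\tau,Y^\tau)=H(Z^W\vert X^\tau,Y^\tau)\geq H(Z^W\vert X^\tau,Y^\tau,G),
\]
the last step being ``conditioning reduces entropy''. As $\Delta_2\geq 0$, this gives the first claimed inequality. Deleting $G$ from the conditioning throughout the same chain yields $H(Z^W\vert S_1,S_2)=H(Z^W\vert X^\tau,Y^\tau)$, hence the second. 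In fact both bounds hold with equality, so the slack $\Delta_2$ is not actually consumed in this lemma; I would nonetheless retain $\Delta_2$ in the statement, since that is the form in which the lemma is invoked when establishing \eqref{I11}--\eqref{I22}.

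I do not expect a genuine obstacle here: the entire content is the bookkeeping of which variables are measurable with respect to which. The one point to be careful about is the order of the two steps in the displayed chain — one must first enlarge the conditioning to include the transmitted codewords and only then invoke the conditional independence, rather than attempting to compare $H(Z^W\vert S_1,S_2,\ldots)$ with $H(Z^W\vert X^\tau,Y^\tau,\ldots)$ directly. In particular, the non-uniqueness of codewords is a distraction for this lemma: it would matter if one tried to recover $(S_1,S_2)$ from $(X^\tau,Y^\tau)$ — which is where a Fano-type estimate with an error event of probability $\approx\lambda$ would be tempting — but it is irrelevant for the conditional entropy of $Z^W$, which depends on $(S_1,S_2,G)$ only through $(X^\tau,Y^\tau)$.
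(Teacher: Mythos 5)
Your proof is correct, but it follows a genuinely different route from the paper's. You exploit the Markov structure $(S_1,S_2,G)-(X^\tau,Y^\tau)-Z^W$, which indeed holds by the construction of the probability space (the conditional law of $Z^W$ given everything is $W^n(\cdot\vert X^\tau,Y^\tau)$), together with the fact that $X^\tau,Y^\tau,G$ are deterministic functions of $(S_1,S_2)$ for the fixed $\tau$; this yields $H(Z^W\vert S_1,S_2,G)=H(Z^W\vert X^\tau,Y^\tau)\geq H(Z^W\vert X^\tau,Y^\tau,G)$ and $H(Z^W\vert S_1,S_2)=H(Z^W\vert X^\tau,Y^\tau)$, i.e.\ the lemma with slack $0$ instead of $\Delta_2$. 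The paper argues differently: it rewrites the conditional entropies as $H(Z^W,S_1,S_2,G)-H(S_1,S_2)$ (using that $G$ is a function of $(S_1,S_2)$), lower-bounds the joint entropies by $H(Z^W,X^\tau,Y^\tau,G)$ via data processing, and reduces everything to the single estimate $H(S_1,S_2)\leq H(X^\tau,Y^\tau)+\Delta_2$; that estimate is then proved by a Fano-type argument on the set $\mathcal{G}_W$ of message pairs whose codeword pair has error below $1/2$, using disjointness of the decoding sets to get injectivity of the message-to-codeword map on $\mathcal{G}_W$ and the average-error bound $\lambda$ to control its complement — this is exactly where $\Delta_2=h(2\lambda)+2\lambda\log(M_1M_2)$ originates. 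So the paper's proof genuinely consumes the coding-theoretic structure (small error, disjoint decoding sets), whereas yours uses only the memoryless channel structure and is both shorter and stronger; since the lemma is invoked downstream only in the form with $+\Delta_2$ in \eqref{I11}--\eqref{I22}, your sharper version causes no inconsistency, and your decision to keep $\Delta_2$ in the statement is harmless. Your closing remark is also on point: the non-uniqueness of codewords is irrelevant for the conditional entropy of $Z^W$, and it is precisely the comparison $H(S_1,S_2)$ versus $H(X^\tau,Y^\tau)$ — the paper's chosen intermediate step — that forces the Fano-type correction.
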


\begin{proof}
Note that as $G$ is a function of $(S_1,S_2)$, 
\[
  H(Z^W\vert S_1,S_2,G)=H(Z^W,S_1,S_2,G)-H(S_1,S_2)-H(G\vert S_1,S_2)=H(Z^W,S_1,S_2,G)-H(S_1,S_2)
\]
and 
\[
  H(Z^W\vert S_1,S_2)=H(Z^W,S_1,S_2)-H(S_1,S_2).
\]
Now $(Z^W,X^\tau,Y^\tau)$ is a function of $(Z^W,S_1,S_2)$ and $(Z^W,X^\tau,Y^\tau,G)$ is a function of $(Z^W,S_1,S_2,G)$, so one has
\begin{align*}
  H(Z^W,S_1,S_2,G)&\geq H(Z^W,X^\tau,Y^\tau,G),\\
  H(Z^W,S_1,S_2)&\geq H(Z^W,X^\tau,Y^\tau).
\end{align*}
Hence it suffices to show 
\begin{equation}\label{telos}
  H(S_1,S_2)\leq H(X^\tau,Y^\tau)+\Delta_2.
\end{equation}

  Set 
\[
  \mathcal{G}_W:=\{(j,k):W^n((F_{jk}^W)^c\vert\x{jk}\tau,\y{jk}\tau)<1/2\}
\]
and set $\mathcal{B}_W:=([1,M_1]\times[1,M_2])\setminus\mathcal{G}_W$. From 
\[
  \lambda\geq\frac{1}{M_1M_2}\sum_{j,k}W^n((F_{jk}^W)^c\vert\x{jk}\tau,\y{jk}\tau)\geq\frac{\lvert\mathcal{B}_W\rvert}{2M_1M_2}
\]
it follows that $\lvert\mathcal{B}_W\rvert\leq2\lambda M_1M_2$. Now if $(j,k),(j',k')\in\mathcal{G}_W$, then $(\x{jk}\tau,\y{jk}\tau)\neq(\x{j'k'}\tau,\y{j'k'}\tau)$, because otherwise one would obtain a contradiction to the disjointness of $F_{jk}^W$ and $F_{j'k'}^W$. We introduce the random variable $Q=1_{\mathcal{G}_W}(S_1,S_2)$ which equals 1 if $(S_1,S_2)\in\mathcal{G}_W$ and 0 else. The above bound on the size of $\mathcal{B}_W$ implies $H(Q)\leq h(2\lambda)$ for $\lambda<1/2$. Therefore
\begin{align*}
  H(S_1,S_2)&=H(S_1,S_2,Q)\\
            &\leq H(S_1,S_2,Q)-H(Q)+h(2\lambda)\\
            &\leq H(S_1,S_2\vert Q)+h(2\lambda).
\end{align*}
The assignment of message pairs to codewords is unique on $\mathcal{G}_W$, so
\begin{align*}
  H(S_1,S_2\vert Q)&=H(X^\tau,Y^\tau\vert Q=1)\PP[Q=1]+H(S_1,S_2\vert Q=0)\PP[Q=0]\\
                   &\leq H(X^\tau,Y^\tau)+2\lambda\log (M_1M_2).
\end{align*}
Altogether this shows \eqref{telos}, and thus the lemma.
\end{proof}


Thus \eqref{I11}-\eqref{I22} is established. The next goal is to obtain a single-letter representation of the right-hand terms in \eqref{I11}-\eqref{I22}. This is done by several applications of the chain rules. Set
\begin{align*}
  X^{\tau}&=(X^{\tau}_{1},\ldots,X^{\tau}_{n}),\\
  Y^{\tau}&=(Y^{\tau}_{1},\ldots,Y^{\tau}_{n}),\\
  Z^W&=(Z^{W}_{1},\ldots,Z^{W}_{n}).
\end{align*}
Further, set
\[
  Z^{W}_{[1,m]}:=(Z^{W}_{1},\ldots,Z^{W}_{m})\quad\text{for }m=1,\ldots,n.
\]
One has
\begin{align*}
  I(X^{\tau};Z^W\vert Y^{\tau},G)&=\sum_{m=1}^n\menge{H(Z^{W}_{m}\vert Y^{\tau},G,Z^{W}_{[1,m-1]})-H(Z^{W}_{m}\vert X^{\tau},Y^{\tau},G,Z^{W}_{[1,m-1]})}.
\end{align*}
$(Y^{\tau}_{m},G)$ is a function of $(Y^{\tau},G,Z^{W}_{[1,m-1]})$, so
\[
  H(Z^{W}_{m}\vert Y^{\tau},G,Z^{W}_{[1,m-1]})\leq H(Z^{W}_{m}\vert Y^{\tau}_{m},G).
\]
Further as the channel is memoryless,
\begin{align*}
  &H(Z^{W}_{m}\vert X^{\tau},Y^{\tau},G,Z^{W}_{[1,m-1]})\\
  &=-I\rund{Z^{W}_{m};Z^{W}_{[1,m-1]}\left\vert X^{\tau},Y^{\tau},G\right.}+H(Z^{W}_m\vert X^{\tau},Y^{\tau},G)\\
  &=H(Z^{W}_{m}\vert X^{\tau}_{m},Y^{\tau}_{m},G).
\end{align*}
Hence
\begin{align*}
  I(X^{\tau};Z^W\vert Y^{\tau},G)&\leq\sum_{m=1}^n\menge{H(Z^{W}_{m}\vert Y^{\tau}_{m},G)-H(Z^{W}_{m}\vert X^{\tau}_{m},Y^{\tau}_{m},G)}\\
  &=\sum_{m=1}^nI(Z^{W}_{m};X^{\tau}_{m}\vert Y^{\tau}_{m},G).
\end{align*}
In an analogous manner, one shows that
\begin{align*}
  I(Y^{\tau};Z^W\vert X^{\tau},G)
  \leq\sum_{m=1}^nI(Z^{W}_{m};Y^{\tau}_{m}\vert X^{\tau}_{m},G).
\end{align*}
Further, with the same arguments as above,
\begin{align*}
   I(Z^W;X^{\tau},Y^{\tau}\vert G) &=\sum_{m=1}^n\menge{H(Z^{W}_{m}\vert G,Z^{W}_{[1,m-1]})-H(Z^{W}_{m}\vert X^{\tau},Y^{\tau},G,Z^{W}_{[1,m-1]})}\\
                                     &\leq\sum_{m=1}^n\menge{H(Z^{W}_{m}\vert G)-H(Z^{W}_{m}\vert X^{\tau}_{m},Y^{\tau}_{m},G)}\\
                                     &=\sum_{m=1}^nI(Z^{W}_{m};X^{\tau}_{m},Y^{\tau}_{m},G).
\end{align*}
Finally,
\[
  I(Z^W;X^{\tau},Y^{\tau})\leq\sum_{m=1}^nI(Z^{W}_{m};X^{\tau}_{m},Y^{\tau}_{m}).
\]

Now we define the random variables that will be used for the single-letter characterization. Let $U$ take values in $[1,n]\times\tilde\Gamma$, $X_\tau$ in $\X$, $Y_\tau$ in $\Y$, and $Z_W$ in $\Z$, with
\begin{align*}
  \PP[U=(m,\tilde\gamma)]&=\frac{1}{n}\frac{\betr{\menge{(j,k):\tilde g(j,k)=\tilde\gamma}}}{M_1M_2}=:p_{0}(m,\tilde\gamma);\\
  \PP[X_\tau=x\vert U=(m,\tilde\gamma)]&=\frac{\betr{\menge{(j,k):\tx{jk,m}{\tau}=x}}}{\betr{\menge{(j,k):\tilde g(j,k)=\tilde\gamma}}}=:p_{1\tau}(x\vert(m,\tilde\gamma));\\
  \PP[Y_\tau=y\vert U=(m,\tilde\gamma)]&=\frac{\betr{\menge{(j,k):\ty{jk,m}{\tau}=y}}}{\betr{\menge{(j,k):\tilde g(j,k)=\tilde\gamma}}}=:p_{2\tau}(y\vert(m,\tilde\gamma));
\end{align*}
and 
\[
  \PP[Z_W=z\vert U=(m,\tilde\gamma),X_\tau=x,Y_\tau=y]=W_W(z\vert x,y).
\]
Note that $\U:=\text{support}(p_0)\subset[1,n]\times\tilde\Gamma$ is a finite set, that $p_0\in \P(\U)$, that $p_{1\tau}\in\mathcal{K}(\X\vert \U)$ and that $p_{2\tau}\in\mathcal{K}(\Y\vert \U)$. Further,
\begin{align*}
  p_0(m,\tilde\gamma)&=\frac{1}{n}\PP[G=\tilde\gamma],\\
  p_{1\tau}(x\vert(m,\tilde\gamma))&=\PP[X^{\tau}_{m}=x\vert G=\tilde\gamma],\\
  p_{2\tau}(y\vert(m,\tilde\gamma))&=\PP[Y^{\tau}_{m}=y\vert G=\tilde\gamma].
\end{align*}
Combining the above equalities and inequalities, this implies that
\begin{align*}
  \frac{1}{n}I(S_1;Z^W\vert S_2,G)\leq\frac{1}{n}\sum_{m=1}^nI\rund{\left.Z^{W}_{m};X^{\tau}_{m}\right\vert Y^{\tau}_{m},G}&=I(Z_W;X_\tau\vert Y_\tau,U);\\
  \frac{1}{n}I(S_2;Z^W\vert S_1,G)\leq\frac{1}{n}\sum_{m=1}^nI\rund{\left.Z^{W}_{m};Y^{\tau}_{m}\right\vert X^{\tau}_{m},G}&=I(Z_{W};Y_\tau\vert X_\tau,U);\\
  \frac{1}{n}I(Z^W;S_1,S_2\vert G)\leq\frac{1}{n}\sum_{m=1}^nI\rund{\left.Z^{W}_{m};X^{\tau}_{m},Y^{\tau}_{m}\right\vert G}&=I(Z_W;X_\tau,Y_\tau\vert U);\\
  \frac{1}{n}I(Z^W;S_1,S_2)\leq\frac{1}{n}\sum_{m=1}^nI\rund{Z^{W}_{m};X^{\tau}_{m},Y^{\tau}_{m}}&=I(Z_W;X_\tau,Y_\tau).
\end{align*}

Thus for every $\tau\in T_1\times T_2$ and every $W\in \W_\tau$, using \eqref{Fano}-\eqref{I22} and recalling the definitions of $\Delta_1$ and $\Delta_2$, one has the bounds
\begin{align}
  \frac{1}{n}\log M_1&\leq \tilde C_1+I(Z_W;X_\tau\vert Y_\tau,U)+\frac{1}{n}\Delta;\label{primus}\\
  \frac{1}{n}\log M_2&\leq \tilde C_2+I(Z_W;Y_\tau\vert X_\tau,U)+\frac{1}{n}\Delta;\\
  \frac{1}{n}\log M_1M_2&\leq\bigl\{\bigl(\tilde C_1+\tilde C_2+I(Z_W;X_\tau,Y_\tau\vert U)\bigr)\wedge I(Z_W;X_\tau,Y_\tau)\bigr\}+\frac{1}{n}\Delta.\label{tertius}
\end{align}
On the other hand, the validity of \eqref{l1aux} implies that there is a $\tau\in T_1\times T_2$ and a $W\in \W_\tau$ such that one of the following inequalities holds:
\begin{align}
  \frac{1}{n}\log M_1&\geq \tilde C_1+I(Z_W;X_\tau\vert Y_\tau,U)+\eps;\label{unus}\\
  \frac{1}{n}\log M_2&\geq \tilde C_2+I(Z_W;Y_\tau\vert X_\tau,U)+\eps;\label{duus}\\
  \frac{1}{n}\log M_1M_2&\geq\bigl\{\bigl(\tilde C_1+\tilde C_2+I(Z_W;X_\tau,Y_\tau\vert U)\bigr)\wedge I(Z_W;X_\tau,Y_\tau)\bigr\}+\eps.\label{trius}
\end{align}
According to which of \eqref{unus}-\eqref{trius} holds, we distinguish between three cases. In order to simplify notation, we write
\[
  \bigl(\tilde C_1+\tilde C_2+I(Z_W;X_\tau,Y_\tau\vert U)\bigr)\wedge I(Z_W;X_\tau,Y_\tau)=:I_0.
\]

\textit{Case 1:} \eqref{trius} holds. Then comparing \eqref{trius} with \eqref{tertius} yields
\[
  1-2\lambda\leq\frac{I_0+\frac{2}{n}\log 2}{I_0+\eps}.
\]
But if $\lambda$ is chosen small enough, this gives a contradiction if $n$ is large depending on $\eps$ and $\lambda$. Thus for small $\lambda=\lambda(\eps)$ and large $n=n(\lambda,\eps)$, there can be no code$_\CONF(n,M_1,M_2,\tilde C_1,\tilde C_2),\lambda)$ satisfying \eqref{trius}.

\textit{Case 2:} \eqref{trius} does \textit{not} hold, but \eqref{unus} holds. Together with \eqref{primus}, the fact that \eqref{trius} does not hold implies
\[
  \frac{1}{n}\log M_1\leq\tilde C_1+I(Z_W;X_\tau\vert Y_\tau,U)+\frac{2\log 2}{n}+2\lambda(I_0+\eps).
\]
Then using \eqref{primus}, we obtain
\[
  \lambda\geq\frac{\eps}{2(I_0+\eps)}+\frac{\log 2}{n(I_0+\eps)}.
\]
Thus for large $n$, there can be no code$_\AUX(n,M_1,M_2,\tilde C_1,\tilde C_2,\lambda)$ satisfying \eqref{unus} if $\lambda$ is too small.

\textit{Case 3:} \eqref{trius} does \textit{not} hold, but \eqref{duus} holds. Analogous to case 2.

 And this proves the weak converse for the auxiliary MAC.

\subsection{The Converse for the MAC with Common Message}

We restrict ourselves here to describing the setting that is the starting point for the weak converse and apply Fano's inequality. The rest is single-letterization of mutual information terms and similar to what was done in \ref{subsubsect:convaux}. We assume full CSIR again. 

For a $\lambda>0$, let a code$_\CM$ $(n,M_0,M_1,M_2,\lambda)$ be given with the form \eqref{codeseq} and conferencing MAC $g$. Let a probability space $(\Omega,\mathcal{F},\PP)$ be given on which the following random variables are defined:
\begin{itemize}
  \item $S_0$ uniformly distributed on $[1,M_0]$,
  \item $S_1$ uniformly distributed on $[1,M_1]$ given $S_0$ and $S_2$ uniformly distributed on $[1,M_2]$ given $S_0$,
  \item for every $(\tau_1,\tau_2)\in T_1\times T_2$, 
\[
  X^{\tau_1}=\x{ij}{\tau_1},\quad Y^{\tau_2}=\y{ik}{\tau_2},
\]
  \item for every $W\in \W_\tau$ a random variable $Z^W$ such that
\[
  \PP\eckig{Z^W=\zz\vert X^{\tau_1}=\xx,Y^{\tau_2}=\yy,S_0=i,S_1=j,S_2=k}=W^n(\zz\vert\xx,\yy)
\]
for all $\xx\in\X^n,\yy\in\Y^n$.
\end{itemize}
If $W\in \W_\tau$, the definition of the code$_\CM$ and Fano's inequality imply
\begin{align*}
  \lambda\log(M_0M_1M_2-1)+h(\lambda)&\geq H(S_0,S_1,S_2\vert Z^W)\\
                                     &\geq H(S_1,S_2\vert Z^W,S_0)\\
                                     &\geq H(S_1\vert Z^W,S_0,S_2)\vee H(S_2\vert Z^W,S_0,S_1).
\end{align*}
From this point, replacing the message variables by the codeword variables and the single-letterization are very similar to the one done in the converse for the MAC with Conferencing Encoders, so we omit them. Thus the weak converse for Theorem \ref{thmcm} is proved.

\section{Application and Numerical Example}\label{sect:numerik}

\subsection{Applications in Wireless Networks}\label{subsect:numapp}

It was noted in the Introduction that the information-theoretic compound MAC with conferencing encoders can be used to analyze ``virtual MISO systems''. We now give the informal description of a simplified wireless ``virtual MISO'' network which we will then translate into our setting of compound MAC with conferencing encoders. Assume that one data stream intended for one receiving mobile terminal is to be transmitted. Two base stations, which are placed at spatially remote positions, are used to send the data to the destination. Assume that the base stations are fed by a central network node with their part of the information which is to be transmitted. At the receiver, the two streams received from the two base stations are then combined to form the original data stream. The question arises how the original data stream should be distributed by the central node in order to achieve a good performance. We will assume that the central node has the combined CSIT of both transmitters, which could for example be achieved by feedback. The network is pictured in Figure \ref{real_netw}.

\begin{figure}
  \begin{center}
    \includegraphics{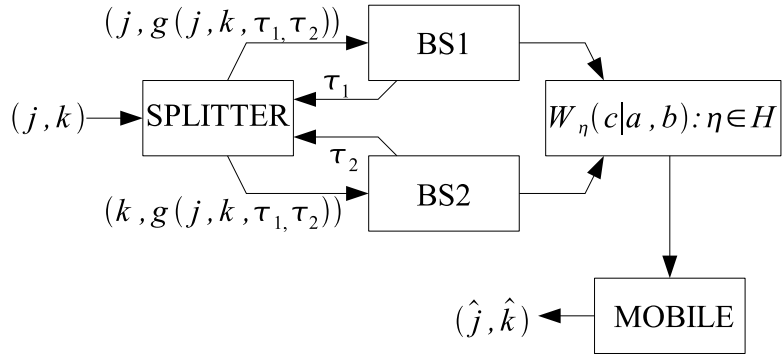}
  \end{center}
  \caption{A central node distributing one data stream to two senders.}
  \label{real_netw}
\end{figure}

The answer to this problem can be given immediately once one has translated the question into the setting of compound MAC with generalized conferencing. If the data stream is not split at all, but both senders know the complete message and also have the other transmitter's CSIT, then the full-cooperation sum capacity is achieved, i.e. the capacity of the system where the senders and the central node are all at the same location. The drawback of this scheme is that the capacity of each of the links from the central node to the base stations must be at least the full-cooperation sum capacity. The other extreme is if the central node just splits each message from the data stream into two components. Then the overhead which needs to be transmitted to the corresponding sender in addition to its message component is minimized. However, the full-cooperation sum capacity will not be achieved in general. The goal should be to find the minimal amount of overhead which suffices to achieve a good performance.

From Theorem \ref{thmconf} it follows that it suffices for the splitter to send to the first base station, in addition to the first component of the message, the one-shot Willems conferencing function value attained by the message of the second component and the second sender's CSIT. The analogous statement holds for the overhead for the second sender. The sum of the overhead rates required to achieve the full-cooperation sum capacity can be seen from Corollary \ref{sumcapcor}. See also the following numerical example.

\subsection{Numerics}

We present a simple example of a rate region for the MAC with conferencing encoders. Assume $\X=\Y=\Z=\{0,1\}$. Let $\W$ consist of the stochastic matrices
\[
  W_1=\begin{pmatrix}
        .9&.1\\
        .4&.6\\
        .6&.4\\
         0& 1
      \end{pmatrix}\quad\text{and}\quad
  W_2=\begin{pmatrix}
        .9&.1\\
        .6&.4\\
        .4&.6\\
         0& 1
      \end{pmatrix}.
\]
Here, the output distribution corresponding to the input combination $(x,y)$ is written in row $2x+y+1$. 

In Figure \ref{fig:ratenregion}, different capacity regions are pictured. $W_1$ and $W_2$ denote the capacity regions of the MACs given by $W_1$ and $W_2$, respectively, without cooperation. Their intersection is the capacity region of the compound channel consisting of $W_1$ and $W_2$, where the exact channel is known at the transmitter.  The capacity region in the case of no CSIT is  shown for no cooperation ($C_{11}=0,C_{12}=0$). Note that absence of cooperation makes the region strictly smaller. $C_{21}$ and $C_{22}$ have been chosen such that their sum is the minimal $C_1+C_2$ achieving the optimal sum capacity:
\[
  C_{2\nu}=\frac{1}{2}\bigl(C^\infty-\max_\mathcal{M}\;\min_{i=1,2}I(Z_i;X,Y\vert U)\bigr)\approx .29
\]
$C_{31}=.33$ has been chosen as .1 minus the minimal $C_1$ such that the first user achieves the maximal possible rate, and $C_{32}=.43$ has been chosen as the minimal $C_2$ such that the second user achieves the maximal possible rate. Finally, ``full coop.'' denotes the rate region which can be achieved by full cooperation. As noted in Corollary \ref{sumcapcor}, it can already be achieved with $C_1=.47$ and $C_2=.47$.

\begin{figure}
  \begin{center}
    \includegraphics[width=.7\linewidth]{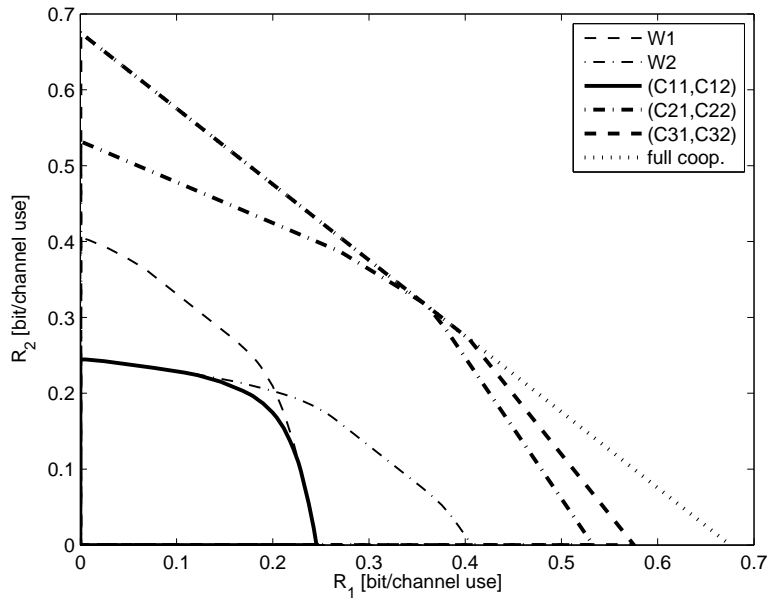}
    \caption{The capacity regions for the conferencing capacity pairs $(C_{11},C_{12}) = (0,0)$, $(C_{21},C_{22})=(.29,.29)$, and $(C_{31},C_{32})=(.33,.43)$.}
    \label{fig:ratenregion}
  \end{center}
\end{figure}

\section{Conclusion and Outlook}

We have derived the capacity regions of two information-theoretic compound multiple access channels: the compound multiple-access channel with common message and the compound multiple-access channel with conferencing encoders, where conferencing can be done about messages and channel state information. The channel with common message, aside from the interest it has on its own, was used to derive the capacity region of the channel with conferencing encoders. The latter channel can be applied in the rigorous information-theoretic analysis of certain wireless cellular networks which use base station cooperation in order to transmit data to one mobile receiver. One can derive the exact amount of base station cooperation that is needed in order to achieve the sum capacity and the capacity region as would be achievable if the base stations were at the same location and could thus be regarded as forming a ``virtual MISO system''.

This analysis was motivated by recent developments in the design of cellular systems. As interference is the main limiting factor in the performance of such systems, research has recently focused on methods of controlling interference in order to meet the requirements for future wireless systems such as LTE-Advanced. Much of the literature which has contributed to this research uses strict assumptions that will not generally be met in reality. Assuming limited base station cooperation and channel uncertainty in this paper, we tried to obtain a more appropriate description of real situations.

Note that we did not address the issue of unknown out-of-network interference. This is a problem for real networks. Different systems operating in the same frequency band and operated by different providers who do not jointly design their systems interfere each other. This happens, e.g., when Wireless Local Area Network (WLAN)-systems are located close to each other. Future work will be to model this information-theoretically. The appropriate model is to take multiple-access channels with conferencing encoders. However in this case, channel uncertainty should not be included by considering a compound channel, but rather, the model best describing reality is the arbitrarily varying channel. In such a channel, the transmission probabilities can change for each channel use in a way unknown to the encoder. (This is just the way unknown interference acts on channels.) Ahlswede's robustification technique \cite{A} shows how to construct codes for arbitrarily varying channels from codes for compound channels. Hence from that point of view, the work done in the present paper can also be regarded as a preliminary needed for the analysis of arbitrarily varying multiple-access channels with conferencing encoders.

\appendix

Here, we include some technical lemmas concerning typical sequences.

\begin{lem}\label{Igor}
  a) Let $\X$ be a finite set. Let $p,\tilde p\in\P(\X)$. Let $0<\delta<1/(2\lvert \X\rvert)$. Then, for all $n\in\mathbb{N}$, for every $\xx\in T_{\tilde p,\delta}^n$,
\[
  p^n(\xx)\leq 2^{-n(H(\tilde p)-\phi_1(\lvert \X\rvert,\delta))}.
\]
$\phi_1$ is a universal function (i.~e. independent of everything), positive if $\lvert \X\rvert\geq 1$ and $0<\delta<1$, and for all values of $\lvert \X\rvert$, one has $\lim_{\delta\rightarrow 0}\phi_1(\lvert \X\rvert,\delta)=0$.

  b) Let $\X,\Y$ be finite sets. Let $p\in\P(\X)$ and $W,\tilde W$ stochastic matrices with input alphabet $\X$ and output alphabet $\Y$. Let $0<\delta<1/(2\lvert \X\rvert\lvert \Y\rvert)$. Let $\tilde r\in\P(\X\times \Y)$ be the joint distribution corresponding to $p$ and $\tilde W$. 
Then, for all $n\in\mathbb{N}$, for all $(\xx,\yy)\in T_{\tilde r,\delta}^n$,
\[
  W^n(\yy\vert\xx)\leq 2^{-n(H(\tilde W\vert p)-\phi_2(\delta,\lvert \X\rvert,\lvert \Y\rvert))}.
\]
$\phi_2$ is a universal function (i.~e. independent of everything), positive if $\lvert \X\rvert,\lvert \Y\rvert\geq 1$, $0<\delta<1$, and for arbitrary $\lvert \X\rvert,\lvert \Y\rvert$, one has $\lim_{\delta\rightarrow 0}\phi_2(\delta,\lvert \X\rvert,\lvert \Y\rvert)=0$ .
\end{lem}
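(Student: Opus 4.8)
The plan is to prove both parts by the same three-step argument. First, express $-\tfrac{1}{n}\log$ of the probability in question as an empirical average of log-probabilities. Second, bound this average below by an \emph{empirical} entropy, using the nonnegativity of informational divergence. Third, trade the empirical entropy for the target entropy at a cost controlled by the uniform continuity of the entropy functional, i.e.\ \cite[Lemma 1.2.7]{CK}. Before doing any of this I would dispose of the degenerate cases in which the left-hand side vanishes, so that all divergences that appear below are between distributions standing in the required absolute-continuity relation.

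For part a), fix $\xx\in T^n_{\tilde p,\delta}$. If $p(x)=0$ for some $x$ with $p_\xx(x)>0$, then $p^n(\xx)=0$ and there is nothing to prove; otherwise $p_\xx$ is absolutely continuous with respect to $p$. From $p^n(\xx)=\prod_x p(x)^{np_\xx(x)}$ one gets $-\tfrac{1}{n}\log p^n(\xx)=-\sum_x p_\xx(x)\log p(x)$, and $D(p_\xx\Vert p)\geq 0$ turns this into the lower bound $H(p_\xx)$. Since $\xx\in T^n_{\tilde p,\delta}$ forces $\lVert p_\xx-\tilde p\rVert_1\leq\lvert\X\rvert\delta\leq 1/2$, \cite[Lemma 1.2.7]{CK} gives $H(p_\xx)\geq H(\tilde p)-\phi_1(\lvert\X\rvert,\delta)$ with $\phi_1(\lvert\X\rvert,\delta):=-\lvert\X\rvert\delta\log\delta$; this function is universal, positive for $\lvert\X\rvert\geq 1$ and $0<\delta<1$, and tends to $0$ as $\delta\to 0$. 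Concatenating the two estimates is exactly part a).

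For part b), fix $(\xx,\yy)\in T^n_{\tilde r,\delta}$ with $\tilde r(x,y)=p(x)\tilde W(y\vert x)$, and assume $W^n(\yy\vert\xx)>0$ (otherwise trivial), so that $W(y\vert x)>0$ whenever the joint type $P$ of $(\xx,\yy)$ puts mass on $(x,y)$. Let $q$ be the $\X$-marginal of $P$ and let $V$ be the channel with $P(x,y)=q(x)V(y\vert x)$ for $x$ in the support of $q$. As in a), $-\tfrac{1}{n}\log W^n(\yy\vert\xx)=-\sum_{x,y}P(x,y)\log W(y\vert x)$, and applying $D(V(\cdot\vert x)\Vert W(\cdot\vert x))\geq 0$ for every such $x$ lower-bounds this by $\sum_x q(x)H(V(\cdot\vert x))=H(V\vert q)=H(P)-H(q)$. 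On the other hand $H(\tilde W\vert p)=H(\tilde r)-H(p)$, and $(\xx,\yy)\in T^n_{\tilde r,\delta}$ forces $\lVert P-\tilde r\rVert_1\leq\lvert\X\rvert\lvert\Y\rvert\delta\leq 1/2$, hence also $\lVert q-p\rVert_1\leq\lvert\X\rvert\lvert\Y\rvert\delta$. Two applications of \cite[Lemma 1.2.7]{CK} then give $H(P)-H(q)\geq H(\tilde r)-H(p)-\phi_2(\delta,\lvert\X\rvert,\lvert\Y\rvert)$ for a suitable universal $\phi_2$ that can be taken of the form $-c\,\lvert\X\rvert\lvert\Y\rvert\delta\log\delta$, which is positive for $0<\delta<1$, $\lvert\X\rvert,\lvert\Y\rvert\geq 1$, and vanishes as $\delta\to 0$. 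Chaining the inequalities proves b).

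The only point requiring care, and thus the main ``obstacle'', is purely bookkeeping: one must check that the $\ell^1$-distances entering the continuity estimate stay below $1/2$ — which is precisely what the hypotheses $\delta<1/(2\lvert\X\rvert)$ in a) and $\delta<1/(2\lvert\X\rvert\lvert\Y\rvert)$ in b) buy us — and that the resulting error terms depend only on $\delta$ and the alphabet sizes (for b) one also uses monotonicity of the relevant continuity bound to absorb the marginal comparison into the joint one), so that $\phi_1$ and $\phi_2$ are genuinely universal with the claimed limiting behaviour. Everything else reduces to nonnegativity of divergence and the additivity identity for conditional entropy, so I expect no substantial difficulty.
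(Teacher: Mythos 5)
Your proposal is correct and follows essentially the route the paper itself takes: the paper's proof is just the citation ``essentially \cite[Lemma 1.2.6 and 1.2.7]{CK}'', and your argument reproduces exactly the content of those two lemmas --- the type identity $p^n(\xx)=2^{-n(H(p_\xx)+D(p_\xx\Vert p))}$ (resp.\ its conditional analogue), i.e.\ nonnegativity of divergence, combined with the uniform continuity of (conditional) entropy in the $\ell^1$-distance, with the hypotheses on $\delta$ keeping that distance below $1/2$. Nothing is missing; your bookkeeping for $\phi_1,\phi_2$ and the degenerate zero-probability cases is sound.
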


\begin{proof}
  This is essentially \cite[Lemma 1.2.6 and 1.2.7]{CK}.
\end{proof}

\begin{lem}\label{Pinsker}
  Let $\X$ be a finite set and let $p\in\P(\X)$. Then, there is a universal constant $c>0$ such that 
\[
  p^n((T_{p,\delta}^n)^c)\leq (n+1)^{\lvert \X\rvert}2^{-nc\delta^2}.
\]
\end{lem}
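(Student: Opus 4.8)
The plan is to prove this by the method of types. Write $q := p_\xx \in \P(\X)$ for the type of $\xx \in \X^n$. By the definition of $T_{p,\delta}^n$, the complement $(T_{p,\delta}^n)^c$ consists of those $\xx$ whose type $q$ either has $q(x) > 0$ for some $x$ with $p(x) = 0$, or satisfies $\betr{q(x) - p(x)} > \delta$ for some $x \in \X$. Sequences of the first kind have $p^n(\xx) = 0$, so they can be discarded, and it suffices to bound the $p^n$-mass of the set of $\xx$ whose type $q$ is supported inside $\{x : p(x) > 0\}$ but has $\betr{q(x) - p(x)} > \delta$ for at least one $x$.

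First I would invoke the two standard counting estimates. The number of distinct types of length-$n$ sequences over $\X$ is at most $(n+1)^{\lvert\X\rvert}$, since a type is fixed by the $\lvert\X\rvert$ integers $nq(x) \in \{0,\ldots,n\}$. And for any type $q$ with $q(x)=0$ whenever $p(x)=0$, the type class $T_q^n := \{\xx \in \X^n : p_\xx = q\}$ obeys $p^n(T_q^n) \le 2^{-nD(q\|p)}$, where $D$ denotes relative entropy in bits; this is immediate from $p^n(\xx) = 2^{-n(D(q\|p)+H(q))}$ for $\xx \in T_q^n$ together with $\lvert T_q^n\rvert \le 2^{nH(q)}$.

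Next I would lower-bound the exponent uniformly. If the type $q$ is supported inside $\{p>0\}$ and $\betr{q(x_0) - p(x_0)} > \delta$ for some $x_0$, then $\lVert q - p\rVert_1 \ge \betr{q(x_0)-p(x_0)} > \delta$, and Pinsker's inequality yields $D(q\|p) \ge c\,\lVert q-p\rVert_1^2 > c\delta^2$ with the universal constant $c = 1/(2\ln 2)$. Summing the bound $p^n(T_q^n) \le 2^{-nc\delta^2}$ over the at most $(n+1)^{\lvert\X\rvert}$ type classes contributing to $(T_{p,\delta}^n)^c$ then gives the claimed estimate $(n+1)^{\lvert\X\rvert}2^{-nc\delta^2}$.

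There is no real obstacle here: the statement is a routine consequence of the method of types, essentially a minor variant of \cite[Lemma 1.2.6]{CK}. The only points to watch are that the constant $c$ must be universal — independent of $\X$, $p$, $n$, and $\delta$ — which is exactly what Pinsker's inequality delivers, and that one must respect the paper's slightly stronger definition of $T_{p,\delta}^n$, whose extra support condition is handled trivially because violating it forces $p^n(\xx)=0$.
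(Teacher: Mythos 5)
Your argument is correct. Note, however, that the paper does not prove this lemma at all: its "proof" is the single line "This is exactly \cite[Lemma III.1.3]{S}", i.e.\ a citation to Shields. Your route — counting types ($\leq (n+1)^{\lvert\X\rvert}$ of them), the bound $p^n(T_q^n)\leq 2^{-nD(q\Vert p)}$ obtained from $p^n(\xx)=2^{-n(D(q\Vert p)+H(q))}$ and $\lvert T_q^n\rvert\leq 2^{nH(q)}$, and Pinsker's inequality with the universal constant $c=1/(2\ln 2)$ — is the standard self-contained derivation of this fact and is essentially what underlies the cited result. You also correctly handle the paper's slightly stronger definition of $T_{p,\delta}^n$: sequences whose type puts mass where $p$ vanishes have $p^n$-probability zero, so only the types supported in $\{p>0\}$ with some coordinate deviating by more than $\delta$ contribute, and for these $\lVert q-p\rVert_1>\delta$ gives $D(q\Vert p)>c\delta^2$. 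The only thing your write-up buys beyond the paper is self-containedness; conversely, the paper's citation avoids reproving a textbook estimate. No gaps.
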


\begin{proof}
  This is exactly \cite[Lemma III.1.3]{S}
\end{proof}

The next lemma is not used in the text. However, it is used in the proof of Lemma \ref{unionlemma}, which we will prove. For $\xx\in\X^n$ and $W\in\mathcal{W}(\Y\vert\X)$, denote by $T_{W,\delta}^n(\xx)$ the set of $\yy\in\Y^n$ that are $W$-generated by $\xx$ with constant $\delta$ (cf. \cite[Definition 1.2.9]{CK}). 

\begin{lem}\label{lem:typecard}
  Let $\X,\Y$ be finite sets. Let $p\in\P(\X)$ and $W\in\mathcal{W}(\Y\vert \X)$. Let $0<\delta<1/(2\lvert \X\rvert)$. Then for any $\xx\in T_{p,\delta}^n$,
\[
  \lvert T_{W,\delta}^n(\xx)\rvert\leq (n+1)^{\lvert\X\rvert}2^{n(H(W\vert p)+\phi_3(\lvert \X\rvert,\lvert \Y\rvert,\delta))}.
\]
$\phi_3$ is a universal function (i.~e. independent of everything), positive if $\lvert \X\rvert,\lvert \Y\rvert\geq 1$, $0<\delta<1$, and for arbitrary $\lvert \X\rvert,\lvert \Y\rvert$, one has $\lim_{\delta\rightarrow 0}\phi_3(\delta,\lvert \X\rvert,\lvert \Y\rvert)=0$.
\end{lem}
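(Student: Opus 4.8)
The plan is to bound the cardinality of $T_{W,\delta}^n(\xx)$ by first bounding it by a sum over joint types and then using the standard type-counting machinery. Concretely, if $\yy\in T_{W,\delta}^n(\xx)$, then the joint type $p_{\xx\yy}$ of the pair $(\xx,\yy)$ satisfies $|p_{\xx\yy}(x,y)-p_\xx(x)W(y\vert x)|\leq\delta$ for all $(x,y)$, and $p_{\xx\yy}(x,y)=0$ whenever $p_\xx(x)W(y\vert x)=0$; this is just the definition of $\yy$ being $W$-generated by $\xx$. Hence $T_{W,\delta}^n(\xx)$ is contained in the union, over all joint types $q$ on $\X\times\Y$ with first marginal $p_\xx$ that are $\delta$-close to the matrix $(p_\xx(x)W(y\vert x))_{x,y}$, of the sets $\{\yy:p_{\xx\yy}=q\}$.

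The first key step is the elementary counting bound $|\{\yy:p_{\xx\yy}=q\}|\leq 2^{nH(q(\cdot\vert\cdot)\,\vert\,p_\xx)}$, where $q(\cdot\vert\cdot)$ is the conditional distribution induced by $q$; this is a standard fact (e.g.\ \cite[Lemma 1.2.5]{CK}) about the number of sequences of a prescribed conditional type. The second key step is to control the exponent: since $q$ is within $\delta$ (in the appropriate norm) of the matrix $(p_\xx(x)W(y\vert x))_{x,y}$ and $p_\xx$ is within $\delta$ of $p$ (because $\xx\in T_{p,\delta}^n$), the uniform-continuity estimate for conditional entropy \cite[Lemma 1.2.7]{CK} gives $H(q(\cdot\vert\cdot)\,\vert\,p_\xx)\leq H(W\vert p)+\phi_3(|\X|,|\Y|,\delta)$ for a universal function $\phi_3$ vanishing as $\delta\to 0$; one has to compose the two perturbation bounds (perturbing $W$ to the conditional of $q$ and perturbing $p$ to $p_\xx$), which only changes the constant in front of $-\delta\log\delta$. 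The third step is simply to bound the number of joint types in the union by the crude count $(n+1)^{|\X||\Y|}$, or more carefully by $(n+1)^{|\X|}$ once the first marginal $p_\xx$ is fixed (the remaining freedom is in the $|\X|(|\Y|-1)$ free entries, but $(n+1)^{|\X|}$ as stated suffices after possibly absorbing the difference — actually one should just use $(n+1)^{|\X||\Y|}$ and note that this is the bound that matters; if the paper insists on $(n+1)^{|\X|}$ one uses that only types with marginal $p_\xx$ contribute and the number of such is at most $(n+1)^{|\X|(|\Y|-1)}\le (n+1)^{|\X|}$ only when $|\Y|\le 2$, so more honestly one writes the polynomial factor as $(n+1)^{|\X|}$ by absorbing the rest into $\phi_3$ via $(n+1)^{|\X|(|\Y|-1)}=2^{|\X|(|\Y|-1)\log(n+1)}$, which is $2^{o(n)}$ but not a universal constant — hence the cleanest route is simply to keep $(n+1)^{|\X||\Y|}$ in the statement, or to leave the polynomial prefactor as written and note it bounds the number of relevant conditional types). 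Multiplying the number of contributing types by the per-type bound yields the claimed inequality.

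The main obstacle is purely bookkeeping: getting the universal function $\phi_3$ to come out right by carefully chaining the two applications of the continuity-of-entropy lemma (one for $W\mapsto q(\cdot\vert\cdot)$, one for $p\mapsto p_\xx$), and making sure the constraint $0<\delta<1/(2|\X|)$ is exactly what is needed so that $p_\xx$ has the same support structure as $p$ and the perturbation lemma applies. There is no conceptual difficulty; everything reduces to the standard method of types once the inclusion of $T_{W,\delta}^n(\xx)$ in a controlled union of conditional-type classes is written down.
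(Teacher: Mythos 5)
Your route is the right one, and it is exactly what the paper's one-line proof points to: the paper simply cites \cite[Lemma 1.2.13]{CK}, whose standard proof is precisely your three steps --- cover $T_{W,\delta}^n(\xx)$ by the joint-type classes whose type is entrywise $\delta$-close to $p_\xx(x)W(y\vert x)$, bound each class by $2^{nH(q(\cdot\vert\cdot)\vert p_\xx)}$ (\cite[Lemma 1.2.5]{CK}), and use uniform continuity of entropy twice (perturbing $W$ to $q(\cdot\vert\cdot)$ and $p$ to $p_\xx$) to replace the exponent by $H(W\vert p)+\phi_3$. Those steps are correct as you state them, and the small-$\delta$ caveat you flag is only cosmetic: for those $\delta<1/(2\lvert\X\rvert)$ where the continuity estimate is not directly applicable one may take $\phi_3\geq\log\lvert\Y\rvert$, for which the bound is trivial.

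The one place where your write-up does not deliver the statement as written is the polynomial prefactor, and your discussion there is muddled. Counting all joint types gives $(n+1)^{\lvert\X\rvert\lvert\Y\rvert}$, fixing the marginal gives $(n+1)^{\lvert\X\rvert(\lvert\Y\rvert-1)}$, and, as you yourself observe, neither is $\leq(n+1)^{\lvert\X\rvert}$ in general, nor may an $n$-dependent factor be absorbed into the $n$-independent function $\phi_3$. Two remarks close this gap. First, for the only use of the lemma in the paper (the proof of Lemma \ref{unionlemma}), a prefactor $(n+1)^{\lvert\X\rvert\lvert\Y\rvert}$ is just as good, since a factor of exactly that size is carried along there anyway and only the exponential rate matters; so weakening the prefactor would cost nothing. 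Second, if you want the statement verbatim, count only the types that actually contribute: each entry of a contributing joint type is a multiple of $1/n$ lying within $\delta$ of $p_\xx(x)W(y\vert x)$, hence takes at most $2n\delta+1$ values, so the number of contributing types is at most $(2n\delta+1)^{\lvert\X\rvert\lvert\Y\rvert}\leq 2^{2n\delta\lvert\X\rvert\lvert\Y\rvert\log e}$. This exponent is $n$ times a universal function of $(\lvert\X\rvert,\lvert\Y\rvert,\delta)$ vanishing as $\delta\to0$, so it can legitimately be folded into $\phi_3$, after which any prefactor, in particular $(n+1)^{\lvert\X\rvert}$, is harmless. With that adjustment your argument proves the lemma exactly as stated, by the same method the cited reference uses.
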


\begin{proof}
  This is essentially \cite[Lemma 1.2.13]{CK}.
\end{proof}

The following lemma was already used in \cite{J}. A slightly different form was proved in \cite{A}. As it is non-standard, we give a proof here. 

\begin{lem}\label{unionlemma}
  Let $\W$ be a nonempty set, and let $\X$ and $\Y$ be finite sets. For every $W\in \W$, let $W\in\mathcal{W}(\Y\vert\X)$. Let $\xx\in \X^n$ and $p\in\P(\X)$. Define the probability measure $q_W$ on $\Y\times \X$ by 
\[
  q_W(y,x)=W(y\vert x)p(x).
\]
Then,
\begin{equation}
  \left\lvert\bigcup_{W\in \W}T_{q_W,\delta}^n\rvert_\xx\right\rvert\leq2^{n(\sup_{W\in \W}H(W\vert p)+\psi(\delta))}
\end{equation}
for some universal positive function $\psi$ which tends to 0 as $\delta\rightarrow 0$.
\end{lem}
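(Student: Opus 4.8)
The plan is to exploit that, although $\W$ may be arbitrary (even uncountable), a $\delta$-typical set $T_{q_W,\delta}^n\rvert_\xx$ is governed entirely by the \emph{joint type} $p_{\xx\yy}$ of the pair $(\xx,\yy)$: membership forces $\betr{p_{\xx\yy}(x,y)-q_W(x,y)}\le\delta$ for all $(x,y)$ and $p_{\xx\yy}(x,y)=0$ whenever $q_W(x,y)=0$. Consequently, if $\V$ denotes the set of length-$n$ joint types $V$ on $\X\times\Y$ whose $\X$-marginal equals $p_\xx$ and which satisfy $\lVert V-q_W\rVert_\infty\le\delta$ for \emph{some} $W\in\W$, then
\[
  \bigcup_{W\in\W}T_{q_W,\delta}^n\rvert_\xx\ \subseteq\ \bigcup_{V\in\V}\menge{\yy\in\Y^n:p_{\xx\yy}=V}.
\]
The point of this reduction is that $\V$ is finite no matter how large $\W$ is: the number of length-$n$ joint types on $\X\times\Y$ is at most $(n+1)^{\betr\X\betr\Y}$, so $\betr\V\le(n+1)^{\betr\X\betr\Y}$.

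Next I would count each type class. For $V\in\V$ write $V(\cdot\vert\cdot)$ for the conditional distribution it induces together with $p_\xx$; by the upper bound of Lemma \ref{lem:typecard} (equivalently, since each $\yy$ of conditional type $V(\cdot\vert\cdot)$ given $\xx$ has probability exactly $2^{-nH(V(\cdot\vert\cdot)\vert p_\xx)}$ under $V(\cdot\vert\cdot)^n(\cdot\vert\xx)$),
\[
  \betr{\menge{\yy\in\Y^n:p_{\xx\yy}=V}}\le 2^{nH(V(\cdot\vert\cdot)\vert p_\xx)}.
\]
It then remains to bound $H(V(\cdot\vert\cdot)\vert p_\xx)$ in terms of $\sup_{W\in\W}H(W\vert p)$ with a $\delta$-controlled error. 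I would write the conditional entropies as differences of unconditional ones, $H(V(\cdot\vert\cdot)\vert p_\xx)=H(V)-H(p_\xx)$ and $H(W\vert p)=H(q_W)-H(p)$, pick a $W$ with $\lVert V-q_W\rVert_\infty\le\delta$, and note $\lVert V-q_W\rVert_1\le\betr\X\betr\Y\,\delta$ and $\lVert p_\xx-p\rVert_1\le\betr\X\betr\Y\,\delta$ for the $\X$-marginals. For $\delta$ small, uniform continuity of entropy (\cite[Lemma 1.2.7]{CK}) gives $\betr{H(V)-H(q_W)}\vee\betr{H(p_\xx)-H(p)}\le\theta(\delta)$ with $\theta$ a universal function tending to $0$ as $\delta\to0$, whence $H(V(\cdot\vert\cdot)\vert p_\xx)\le\sup_{W}H(W\vert p)+2\theta(\delta)$.

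Putting the three steps together yields
\[
  \betr{\bigcup_{W\in\W}T_{q_W,\delta}^n\rvert_\xx}\le(n+1)^{\betr\X\betr\Y}\,2^{n(\sup_{W}H(W\vert p)+2\theta(\delta))},
\]
and since $(n+1)^{\betr\X\betr\Y}$ is subexponential in $n$ it is at most $2^{n\delta}$ once $n$ is large relative to $\delta$ (the only regime in which the lemma is used), so $\psi(\delta):=2\theta(\delta)+\delta$ works. The step I expect to require the most care is the first one — making precise that an arbitrary, possibly uncountable, union of $\delta$-typical sets collapses onto a polynomial number of joint-type classes; after that the argument is routine method-of-types bookkeeping, the only minor nuisances being the support condition in the definition of typicality and the absorption of the polynomial prefactor into the exponent.
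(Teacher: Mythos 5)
Your proposal is correct and follows essentially the same route as the paper: collapse the arbitrary (possibly uncountable) union onto the at most $(n+1)^{\lvert\X\rvert\lvert\Y\rvert}$ joint types lying within $\delta$ of some $q_W$, then bound each type class by roughly $2^{n H(W\vert p)}$ and absorb the polynomial factor. The only cosmetic difference lies in the per-class bound: the paper embeds each $T_{\hat q}^n\rvert_\xx$ into the conditionally typical set $T_{W,(\lvert\Y\rvert+1)\delta}^n(\xx)$ and invokes Lemma \ref{lem:typecard}, whereas you count the class directly by $2^{nH(V(\cdot\vert\cdot)\vert p_\xx)}$ and pass to $H(W\vert p)$ via uniform continuity of entropy --- the same continuity step that is hidden inside the error term of Lemma \ref{lem:typecard}.
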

\begin{proof}
Denote by $T(\W)$ all the joint types $\hat q$ in $\Y\times \X$ such that there is an $W\in \W$ with
\[
  \lvert q_W(y,x)-\hat q(y,x)\rvert<\delta.
\]
Every $T_{q_W,\delta}^n$ can be written as the union of some $T_{\hat q}^n$, where $\hat q\in T(\W)$. Hence
\begin{equation}\label{erste}
  \left\lvert\bigcup_{W\in \W}T_{q_W,\delta}^n\rvert_\xx\right\rvert\leq\left\lvert\bigcup_{\hat q\in T(\W)}T_{\hat q}^n\rvert_\xx\right\rvert.
\end{equation}
As there are at most $(n+1)^{\lvert \X\rvert\lvert \Y\rvert}$ different joint types in $\Y^n\times \X^n$, this is smaller than
\begin{equation}\label{zweite}
  (n+1)^{\lvert \X\rvert\lvert \Y\rvert}\max_{\hat q\in T(\W)}\left\lvert T_{\hat q}^n\rvert_\xx\right\rvert.
\end{equation}
Without loss of generality, we can assume that the union on the left side of \eqref{erste} is nonempty. Hence there is an $W\in \W$ with $T_{q_W,\delta}^n\vert_\xx\neq\varnothing$, so $\xx\in T_{p,\lvert \Y\rvert\delta}^n$. This implies for any $q_W$ which is close to $\hat q$ (in the sense of the definition of $T(\W)$) and for all $x\in \X$ and $y\in \Y$
\begin{align*}
  &\mathrel{\hphantom{\leq}}\lvert\hat q(y,x)-W(y\vert x)p_\xx(x)\vert\\
  &\leq\lvert\hat q(y,x)-q_W(y,x)\rvert+\lvert q_W(y,x)-W(y\vert x)p_\xx(x)\rvert\\
  &\leq \delta+W(y\vert x)\lvert p(x)-p_\xx(x)\rvert\\
  &\leq (\lvert \Y\rvert +1)\delta.
\end{align*}
Thus $T_{\hat q}^n\vert_\xx\subset T_{W,(\lvert \Y\rvert+1)\delta}^n(\xx)$. By Lemma \ref{lem:typecard}, we conclude, using \eqref{erste} and \eqref{zweite}, that 
\[
  \left\lvert\bigcup_{W\in \W}T_{q_W,\delta}^n\rvert_\xx\right\rvert\leq(n+1)^{\lvert \X\rvert\lvert \Y\rvert}\sup_{W\in \W}2^{n(H(W\vert p)-\phi(\delta))},
\]
which finishes the proof.
\end{proof}

\begin{biography}{Moritz Wiese}
  (S'09) received the Dipl.-Math. degree in mathematics from the university of Bonn, Germany, in 2007. He has been pursuing the PhD degree since then. From 2007 to 2010, he was a research assistant at the Heinrich-Hertz-Lehrstuhl f\"ur Mobilkommunikation, Technische Universit\"at Berlin, Germany. Since 2010, he is a research and teaching assistant at the Lehrstuhl f\"ur Theoretische Informationstechnik, Technische Universit\"at M\"unchen, Munich, Germany.
\end{biography}

\begin{biography}{Holger Boche}
  (M’04-SM’07-F'11) received the Dipl.-Ing. and Dr.-Ing. degrees in electrical engineering from the Technische Universitaet Dresden, Dresden, Germany, in 1990 and 1994, respectively. He graduated in mathematics from the Technische Universitaet Dresden in 1992. From 1994 to 1997, he did postgraduate studies in mathematics at the Friedrich-Schiller Universität Jena, Jena, Germany. He received his Dr.Rer.Nat. degree in pure mathematics from the Technische Universitaet Berlin, Berlin, Germany, in 1998. In 1997, he joined the Heinrich-Hertz-Institut (HHI) für Nachrichtentechnik Berlin, Berlin, Germany. Since 2002, he has been a Full Professor for mobile communication networks with the Institute for Communications Systems, Technische Universität Berlin. In 2003, he became Director of the Fraunhofer German-Sino Lab for Mobile Communications, Berlin, Germany, and since 2004 he has also been Director of the Fraunhofer Institute for Telecommunications (HHI), Berlin, Germany. Since, October 2010 he is with the Institute of Theoretical Information Technology and Full Professor at the Technical University of Munich, Munich, Germany. He was a Visiting Professor with the ETH Zurich, Zurich, Switzerland, during the 2004 and 2006 Winter terms, and with KTH Stockholm, Stockholm, Sweden, during the 2005 Summer term. Prof. Boche is a Member of IEEE Signal Processing Society SPCOM and SPTM Technical Committee. He was elected a Member of the German Academy of Sciences (Leopoldina) in 2008 and of the Berlin Brandenburg Academy of Sciences and Humanities in 2009. He received the Research Award “Technische Kommunikation” from the Alcatel SEL Foundation in October 2003, the “Innovation Award” from the Vodafone Foundation in June 2006, and the Gottfried Wilhelm Leibniz Prize from the Deutsche Forschungsgemeinschaft (German Research Foundation) in 2008. He was co-recipient of the 2006 IEEE Signal Processing Society Best Paper Award and recipient of the 2007 IEEE Signal Processing Society Best Paper Award. 
\end{biography}

\begin{biography}{Igor Bjelakovi\'c}
  received the Dipl. Phys. degree in physics and Dr.rer.nat. degree in mathematics from the Technische Universit\"at Berlin, Germany, in 2001 and 2004, respectively. He was a Postdoctoral Researcher at the Heinrich-Hertz-Chair for Mobile Communications and the Department of Mathematics at the Technische Universit\"at Berlin. He is now with Technische Universit\"at M\"unchen, Lehrstuhl f\"ur Theoretische Informationstechnik.
\end{biography}

\begin{IEEEbiography}{Volker Jungnickel}
received a Dipl.-Phys. and Dr.~rer.~nat. (Ph.D.) degree in physics from Humboldt University in Berlin, Germany, in 1992 and 1995, respectively. He joined Fraunhofer Heinrich Hertz Institute (HHI) in Berlin, Germany, in 1997. He has contributed to high-speed indoor wireless infrared links, 1 Gbit/s MIMO-OFDM radio transmission and initial field trials for LTE and LTE-Advanced. Volker is a lecturer for wireless communications at University of Technology in Berlin and head of the cellular radio research team at HHI.
\end{IEEEbiography} 

\newpage

\subsection*{List of Figures:}

\begin{itemize}
  \item Figure 1: The MAC with Common Message
  \item Figure 2: The MAC with Conferencing Encoders
  \item Figure 3: A central node distributing one data stream to two senders.
  \item Figure 4: The capacity regions for the conferencing capacity pairs $(C_{11},C_{12})=(0,0)$, $(C_{21},C_{22})=(.29,.29)$, and $(C_{31},C_{32})=(.33,.43)$.
\end{itemize}


\begin{thebibliography}{00}
  \bibitem{A} R. Ahlswede, ``Coloring Hypergraphs: A New Approach to Multi-user Source Coding---I,'', \textit{J. Comb. Inform. Syst. Sci.}, vol. 4, no. 1, pp. 76--115, 1979.
  \bibitem{BBT} D. Blackwell, L. Breiman, and A. J. Thomasian, ``The Capacity of a Class of Channels,'' \textit{Ann. Math. Statist.}, vol. 30, no. 4, pp. 1229--1241, 1959.
  \bibitem{BLW} S. I. Bross, A. Lapidoth, and M. A. Wigger, ``The Gaussian MAC With Conferencing Encoders'', \textit{Proc. IEEE ISIT 2008}, pp. 2702 -- 2706, Toronto, Canada, 2008.
  \bibitem{CK} I. Csisz\'{a}r and J. K\"{o}rner, \textit{Information Theory: Coding Theorems for Discrete
Memoryless Systems}, New York: Academic, 1982.
  \bibitem{D} G. Dueck, ``Maximal Error Capacity Regions are Smaller Than Average Error Capacity Regions for Multi-User Channels,'' \textit{Probl. Contr. Inform. Theory,} vol. 7, no. 1, pp. 11--19, 1978.
  \bibitem{DS} R. Dabora and S. D. Servetto, ``Broadcast Channels With Cooperating Decoders'', \textit{IEEE Trans. Inf. Theory}, vol. 52, no. 12, pp. 5438--5454, 2006.
  \bibitem{DOS} H. T. Do, T. J. Oechtering, and M. Skoglund, ``The Gaussian Z-interference Channel with Rate-Constrained Conferencing Decoders'', to appear in \text{Proc. IEEE International Conference on Communications (ICC)}, Cape Town, South Africa, May 2010.
  \bibitem{J} J.-H. Jahn, ``Coding of Arbitrarily Varying Multiuser Channels'', \textit{IEEE Trans. Inf. Theory}, vol. IT-27, no. 2, pp. 212--226, 1981.
  \bibitem{JETAL} V. Jungnickel \textit{et al.}, ``Coordinated Multipoint Trials in the Downlink'', \textit{Proc. 5th IEEE Broadband Wireless Access Workshop (BWAWS)}, Honolulu, Hawaii, Nov. 2009.
  \bibitem{KFV} M. K. Karakayali, G. J. Foschini, and R. A. Valenzuela, ``Network Coordination for Spectrally Efficient Communications in Cellular Systems'', \textit{IEEE Wireless Communications}, vol. 13, no. 4, pp. 56--61, 2006.
  \bibitem{MJH} T. Mayer, H. Jenka\v{c}, and J. Hagenauer, ``Turbo Base-Station Cooperation for Intercell Interference Cancellation'', \textit{Proc. IEEE International Conference on Communications (ICC)}, Istanbul, Turkey, pp. 4977--4982, June 2006.
  \bibitem{MYK} I. Mari\'{c}, R. D. Yates, and G. Kramer, ``Capacity of Interference Channels With Partial Transmitter Cooperation,'' \textit{IEEE Trans. Inf. Theory}, vol. 53, no. 10, pp. 3536--3548, 2007.
  \bibitem{NETAL} C. T. K. Ng, I. Maric, A. J. Goldsmith, S. Shamai (Shitz), and R. D. Yates, ``Iterative and One-shot Conferencing in Relay Channels'', \textit{Proc. IEEE Information Theory Workshop}, Punta del Este, Uruguay, March 2006.
  \bibitem{S} P. C. Shields, \textit{The Ergodic Theory of Discrete Sample Paths}, Providence: American Mathematical Society, 1996.
  \bibitem{SGPGS} O. Simeone, D. G\"{u}nd\"{u}z, H. V. Poor, A. J. Goldsmith, and S. Shamai (Shitz), ``Compound Multiple-Access Channels With Partial Cooperation'', \textit{IEEE Trans. Inf. Theory}, vol. 55, no. 6, pp. 2425 -- 2441, 2009.
  \bibitem{SSKPS} O. Simeone, O. Somekh, G. Kramer, H. V. Poor, and S. Shamai (Shitz), ``Three-User Gaussian Multiple Access Channel with Partially Cooperating Encoders'', \textit{Proc. Asilomar Conference on Signals, Systems and Computers}, 2008.
  \bibitem{SW} D. Slepian and J. K. Wolf, ``A Coding Theorem for Multiple Access Channels With Correlated Sources,'' \textit{Bell Syst. Tech. J.}, vol. 52, pp. 1037--1076, 1973.
  \bibitem{WBB} M. Wiese, H. Boche, and I. Bjelakovi\'{c}, ``The Compound MAC with Common Message and Partial Channel State Information'', submitted to \textit{2010 Intern. Symp. on Inf. Theory and Applications}, 2010.
  \bibitem{Wig} M. A. Wigger, ``Cooperation on the Multiple-Access Channel'', Ph.D. thesis, ETH Z\"{u}rich, Switzerland, 2008.
  \bibitem{Wi1} F. M. J. Willems, ``Informationtheoretical Results for the Discrete Memoryless Multiple Access Channel,'' Ph.D. dissertation, Katholieke Universiteit Leuven, Belgium, 1982.
  \bibitem{Wi2} F. M. J. Willems, ``The discrete memoryless multiple channel with partially
cooperating encoders,'' \textit{IEEE Trans. Inf. Theory}, vol. 29, no. 3, pp.
441--445, 1983.
  \bibitem{W} J. Wolfowitz, \textit{Coding Theorems of Information Theory}, 3rd edition, Berlin, Heidelberg: Springer, 1978.
\end{thebibliography}
\end{document}